\setlist[itemize]{label=$\circ$}
\setlist[description]{labelindent=\parindent}
\newtheorem{theorem}{Theorem}
\newtheorem{lemma}[theorem]{Lemma}
\newtheorem{corollary}[theorem]{Corollary}
\newtheorem{claim}[theorem]{Claim}
\theoremstyle{plain}
\newtheorem{definition}[theorem]{Definition}
\theoremstyle{nonumberplain}
\newtheorem{proof}{Proof}
\newcommand{\cc}[1]{\ensuremath{\mathrm{#1}}}
\newcommand{\op}[1]{\ensuremath{\operatorname{#1}}}
\renewcommand{\P}{\cc{P}}
\newcommand{\NP}{\cc{NP}}
\newcommand{\W}{\cc{W[1]}}
\newcommand{\sharpP}{\cc{\#P}}
\newcommand{\poly}{\op{poly}}
\newcommand{\FP}{\cc{FP}}
\newcommand{\N}{\mathbb{N}}
\newcommand{\Q}{\mathbb{Q}}
\DeclareDocumentCommand{\restrict}{O{}}{\mathord{\restriction}_{#1}}
\newcommand{\loopG}{\ensuremath{\mathcal{G}^\circ}}
\newcommand{\graphs}{\ensuremath{\mathcal{G}}}
\newcommand{\subtt}{\ensuremath{\#\mathsf{Sub}(\mathcal{T},\mathcal{T})}}
\newcommand{\homs}{\ensuremath{\mathsf{Hom}}}
\newcommand{\lc}{\ensuremath{\mathsf{LC}}}
\newcommand{\lihoms}{\ensuremath{\mathsf{Li}\text{-}\mathsf{Hom}}}
\newcommand{\lirhoms}{\ensuremath{\mathsf{Li}[r]\text{-}\mathsf{Hom}}}
\newcommand{\embs}{\ensuremath{\mathsf{Emb}}}
\newcommand{\Tau}{\mathrm{T}}
\newcommand{\bs}[1]{ \ensuremath{ \{ #1 \} } }
\newcommand{\tauhoms}{\ensuremath{\homs_\tau}}
\newcommand{\taumins}{\ensuremath{\tau\text{-}\mathcal{M}}}
\newcommand{\Taumins}{\ensuremath{\Tau\text{-}\mathcal{M}}}
\newcommand{\tauli}{\ensuremath{\tau_{\mathsf{Li}}}}
\newcommand{\taulihoms}{{\homs_{\tauli}}}
\newcommand{\taulimins}{\ensuremath{\tauli\text{-}\mathcal{M}}}
\newcommand{\taulir}{\ensuremath{\tau_{\mathsf{Li}[r]}}}
\newcommand{\taulirmins}{\ensuremath{\taulir\text{-}\mathcal{M}}}
\title{Counting Restricted Homomorphisms via Möbius Inversion over Matroid Lattices}
\author{Marc Roth \\
Saarbr\"ucken Graduate School of Computer Science \\
Cluster of Excellence (MMCI), Saarland University}
\date{}
\begin{document}
 
\maketitle

\begin{abstract}
We present a framework for the complexity classification of parameterized counting problems that can be formulated as the summation over the numbers of homomorphisms from small pattern graphs $H_1,\dots,H_\ell$ to a big host graph $G$ with the restriction that the coefficients correspond to evaluations of the Möbius function over the lattice of a graphic matroid. This generalizes the idea of Curticapean, Dell and Marx [STOC 17] who used a result of Lov{\'a}sz stating that the number of subgraph embeddings from a graph $H$ to a graph $G$ can be expressed as such a sum over the lattice of partitions of $H$.\\ In the first step we introduce what we call graphically restricted homomorphisms that, inter alia, generalize subgraph embeddings as well as locally injective homomorphisms. We provide a complete parameterized complexity dichotomy for counting
such homomorphisms, that is, we identify classes of patterns for which
the problem is fixed-parameter tractable (FPT), including an algorithm, and prove that all other pattern classes lead to $\#\W$-hard problems. The main ingredients of the proof are the complexity classification of linear combinations of homomorphisms due to Curticapean, Dell and Marx [STOC 17] as well as a corollary of Rota's NBC Theorem which states that the sign of the Möbius function over a geometric lattice only depends on the rank of its arguments.\\
We apply the general theorem to the problem of counting locally injective homomorphisms from small pattern graphs to big host graphs yielding a concrete dichotomy criterion. It turns out that --- in contrast to subgraph embeddings --- counting locally injective homomorphisms has ``real'' FPT cases, that is, cases that are fixed-parameter tractable but not polynomial time solvable under standard complexity assumptions. To prove this we show in an intermediate step that the subgraph counting problem remains $\#\P$-hard when both the pattern and the host graphs are restricted to be trees. We then investigate the more general problem of counting homomorphisms that are injective in the $r$-neighborhood of every vertex. As those are graphically restricted as well, they can also easily be classified via the general theorem.\\
Finally we show that the dichotomy for counting graphically restricted homomorphisms readily extends to so-called linear combinations.
\end{abstract}

\section{Introduction}
In his seminal work about the complexity of computing the permanent Valiant \cite{Valiant1979a} introduced counting complexity which has since then evolved into a well-studied subfield of computational complexity. Despite some surprising positive results like polynomial time algorithms for counting perfect matchings in planar graphs by the FKT method \cite{fkt1,fkt2}, counting spanning trees by Kirchhoff's Matrix Tree Theorem or counting Eulerian cycles in directed graphs using the ``BEST''-Theorem \cite{best}, most of the interesting problems turned out to be intractable. Therefore, several relaxations such as restrictions of input classes \cite{rest1} and approximate counting \cite{appr1,appr2} were introduced. Another possible relaxation, the one this work deals with, is to consider parameterized counting problems as introduced by Flum and Grohe \cite{flumgrohe_counting}. Here, problems come with an additional parameter $k$ and a problem is fixed-parameter tractable (FPT) if it can be solved in time $g(k)\cdot \poly(n)$ where $n$ is the input size and $g$ is a computable function, which yields fast algorithms for large instances with small parameters. On the other hand, a problem is considered intractable if it is $\#\W$-hard. This stems from the fact that $\#\W$-hard problems do not allow an FPT algorithm unless standard assumptions such as the exponential time hypothesis (ETH) are wrong. \\
When investigating a family of related (counting) problems one could aim to simultaneously solve the complexity of as many problems as possible, rather than tackling a (possibly infinite) number of problems by hand. For example, instead of proving that counting paths in a graph is hard, then proving that counting cycles is hard and then proving that counting stars is easy, one should, if possible, find a criterion that allows a classification of those problems in hard and easy cases. Unfortunately, there are results like Ladner's Theorem \cite{ladner}, stating that there are problems neither in $\P$ nor $\NP$-hard (assuming $\P \neq \NP$), which give a negative answer to that goal in general. However, there are families of problems that have enough structure to allow so-called dichotomy results. One famous example, and to the best of the authors knowledge this was the first such result, is Schaefer's dichotomy \cite{schaefer}, stating that every instance of the generalized satisfiability problem is either polynomial time solvable or $\NP$-complete. Since then much work has been done to generalize this result, culminating in recent announcements (\cite{bulatov},\cite{zhuk},\cite{feder}) of a proof of the Feder-Vardi-Conjecture \cite{federvardi}. This question was open for almost twenty years and indicates the difficulty of proving such dichotomy results, at least for decision problems. In counting complexity, however, it seems that obtaining
such results is less cumbersome. One reason for this is the existence of some powerful techniques like polynomial interpolation \cite{interpolation}, the Holant framework \cite{accidentalalgos,holographicalgos,fibonaccigates} as well as the principle of inclusion-exclusion which all have been used to establish very revealing dichotomy results such as \cite{holantdicho,embsdicho}. \\
Examples of dichotomies in parameterized counting complexity are the complete classifications of the homomorphism counting problem due to Dalmau and Jonsson \cite{homsdicho1}\footnote{Ultimately, the results of \cite{hombasis2017} and this work rely on the dichotomy for counting homomorphisms} and the subgraph counting problem due to Curticapean and Marx \cite{embsdicho}. For the latter, one is given graphs $H$ and $G$ and wants to count the number of subgraphs of $G$ isomorphic to $H$, parameterized by the size of $H$. It is known that this problem is polynomial time solvable if there is a constant upper bound on the size of the largest matching of $H$ and $\#\W$-hard otherwise\footnote{On the other hand the complexity of the decision version of this problem, that is, finding a subgraph of $G$ isomorphic to $H$, is still unresolved. Only recently it was shown in a major breakthrough that finding bicliques is hard \cite{biclique}.}. The first step in this proof was the hardness result of counting matchings of size $k$ of Curticapean \cite{kmatchings}, which turned out to be the ``bottleneck'' problem and was then reduced to the general problem.\\
This approach, first finding the hard obstructions and then reducing to the general case, seemed to be the canonical way to tackle such problems. However, recently Curticapean, Dell and Marx \cite{hombasis2017} discovered that a result of Lov{\'a}sz \cite{lovasz} implies the existence of parameterized reductions that, inter alia, allow a far easier proof of the general subgraph counting problem. Lov{\'a}sz result states that, given simple graphs $H$ and $G$, it holds that
\begin{equation}
\label{eqn:intro_lovasz}
 \#\embs(H,G) = \sum_{\rho \geq \emptyset} \mu(\emptyset,\rho) \cdot  \#\homs(H/\rho,G) \,,
\end{equation}
where the sum is over the elements of the partition lattice of $V(H)$, $\embs(H,G)$ is the set of embeddings\footnote{Note that embeddings and subgraphs are equal up to automorphisms, that is, counting embeddings and counting subgraphs are essentially the same problem.} from $H$ to $G$ and $\homs(H/\rho,G)$ is the set of homomorphisms from the graph $H/\rho$ obtained from $H$ by identifying vertices along $\rho$ to $G$. Furthermore $\mu$ is the Möbius function. In their work Curticapean, Dell and Marx showed in a general theorem that a summation $\sum_{i=1}^\ell c_i \cdot \#\homs(H_i,G)$ for \emph{pairwise non-isomorphic graphs $H_i$} is $\#\W$-hard if there is no upper bound on the treewidth of the pattern graphs $H_i$ and fixed-parameter tractable otherwise, using a dichotomy for counting homomorphisms due to Dalmau and Jonsson \cite{homsdicho1}. Having this, one only has to show two properties of (\ref{eqn:intro_lovasz}) to obtain the dichotomy for $\#\embs$. First, one has to show that a high matching number of $H$ implies that one of the graphs $H/\rho$ has high treewidth and second, that two (or more) terms with high treewidth and isomorphic graphs $H/\rho$ and $H/\sigma$ do not cancel out (note that the Möbius function can be negative). As there is a closed form for the Möbius function over the partition lattice it was possible to show that whenever $H/\rho$ and $H/\sigma$ are isomorphic the sign of the Möbius function is equal. 
\subsection{Our results}
The motivation of this work is the question whether the result of Curticapean, Dell and Marx can be generalized to construct a framework for the complexity classification of counting problems that can be expressed as the summation over homomorphisms and it turns out that this is possible whenever the summation is over a the lattice of a graphic matroid and the coefficients are evaluations of the Möbius function over the lattice, capturing not only embeddings but also locally injective homomorphisms.\\
In Section~\ref{sec:graphic_homs} we introduce what we call \emph{graphically restricted homomorphisms}: Intuitively, a graphical restriction $\tau(H)$ of a graph $H$ is a set of forbidden binary vertex identifications of $H$, modeled as a graph with vertex set $V(H)$ and edges along the binary constraints. We write $\taumins(H)$ as the set of all graphs obtained from $H$ by contracting vertices along edges in $\tau(H)$ and deleting multiedges, excluding those that contain selfloops. Now a graphically restricted homomorphism from $H$ to $G$ with respect to $\tau$ is a homomorphism from $H$ to $G$ that maps every pair of vertices $u,v \in V(H)$ that are adjacent in $\tau(H)$ to different vertices in $G$. We write $\tauhoms(H,G)$ for the set of all graphically restricted homomorphisms w.r.t. $\tau$ from $H$ to $G$ and provide a complete complexity classification for counting graphically restricted homomorphisms:

\begin{theorem}[Intuitive version]
	\label{thm:int_graphic_dicho}
Computing $\#\tauhoms(H,G)$ is fixed-parameter tractable when parameterized by $|V(H)|$ if the treewidth of every graph in $\taumins(H)$ is small. Otherwise the problem is $\#\W$-hard. 
\end{theorem}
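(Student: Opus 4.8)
The plan is to reduce $\#\tauhoms$ to a linear combination of plain homomorphism counts and then invoke the classification of such linear combinations due to Curticapean, Dell and Marx, exactly mirroring the route from Lov\'asz's identity~(\ref{eqn:intro_lovasz}) to the dichotomy for $\#\embs$. The crucial conceptual shift is that the defining sum will run not over the full partition lattice of $V(H)$ but over the lattice of flats of the graphic matroid $M(\tau(H))$, which is a geometric lattice; this is precisely what makes the corollary of Rota's NBC theorem applicable.

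First I would set up the M\"obius inversion. For a homomorphism $\phi\colon H\to G$ let $K(\phi)\subseteq E(\tau(H))$ be the set of $\tau$-edges $\{u,v\}$ with $\phi(u)=\phi(v)$. A short transitivity argument shows that $K(\phi)$ is always a flat of $M(\tau(H))$, and by definition $\phi$ is graphically restricted iff $K(\phi)=\hat 0$. Writing $\pi_y$ for the partition of $V(H)$ associated to a flat $y$ and $N_y$ for the number of homomorphisms constant on every block of $\pi_y$, one has $N_y=\#\homs(H/\pi_y,G)$ as well as $N_y=\sum_{y'\ge y}\#\{\phi : K(\phi)=y'\}$. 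M\"obius inversion over the lattice of flats then yields
\begin{equation*}
\#\tauhoms(H,G)=\sum_{y}\mu(\hat 0,y)\cdot\#\homs(H/\pi_y,G),
\end{equation*}
where the sum ranges over all flats $y$ of $M(\tau(H))$.

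Next I would prune and collect. Whenever some block of $\pi_y$ contains an edge of $H$, the quotient $H/\pi_y$ carries a selfloop, so $\#\homs(H/\pi_y,G)=0$ for simple $G$; the surviving quotients are exactly the members of $\taumins(H)$. Grouping the remaining terms by isomorphism type produces a genuine linear combination $\sum_{H'}c_{H'}\cdot\#\homs(H',G)$ over pairwise non-isomorphic graphs $H'$, with $c_{H'}=\sum_{y:\,H/\pi_y\cong H'}\mu(\hat 0,y)$. To apply the theorem of Curticapean, Dell and Marx I must show $c_{H'}\ne 0$, i.e.\ that the M\"obius contributions of isomorphic quotients do not cancel. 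This is where the matroid structure pays off: the rank of a flat $y$ in $M(\tau(H))$ equals $|V(H)|$ minus the number of blocks of $\pi_y$, so all flats with $H/\pi_y\cong H'$ share the same rank $|V(H)|-|V(H')|$; and by the corollary of Rota's NBC theorem the sign of $\mu(\hat 0,y)$ over a geometric lattice depends only on $\op{rank}(y)$ and is never zero. Hence every summand defining $c_{H'}$ carries the same nonzero sign, and therefore $c_{H'}\ne 0$.

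With $c_{H'}\ne 0$ established, the dichotomy follows directly. The classification of linear combinations of homomorphisms tells us that $\sum_{H'}c_{H'}\cdot\#\homs(H',G)$ is fixed-parameter tractable when the treewidth of the patterns $H'$ is bounded and $\#\W$-hard otherwise; since the surviving patterns are precisely the graphs of $\taumins(H)$, this is exactly the stated criterion. For the tractable side I would additionally observe that the lattice of flats has size bounded by a function of $|V(H)|$ and that both the flats and their M\"obius values can be enumerated within the FPT budget, so the coefficients together with the individual bounded-treewidth homomorphism counts assemble into an FPT algorithm. The main obstacle is the non-cancellation step: without the geometric-lattice viewpoint there is no reason for the M\"obius values attached to isomorphic quotients to agree in sign, and it is exactly the identification of the correct matroid lattice together with Rota's theorem that overcomes it.
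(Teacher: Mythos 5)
Your proposal is correct and takes essentially the same route as the paper: Möbius inversion over the lattice of flats of the graphic matroid $M(\tau(H))$, discarding quotients with selfloops so that the surviving patterns are exactly $\taumins(H)$, non-cancellation of isomorphic terms via the rank argument combined with the corollary of Rota's NBC theorem, and finally the Curticapean--Dell--Marx classification of linear combinations of homomorphisms. The only difference is cosmetic: you derive the inversion identity by partitioning $\homs(H,G)$ according to the kernel flat $K(\varphi)$, whereas the paper proves the equivalent zeta-transform identity (Lemma~\ref{lem:main_graphic}) via an equivalence relation on homomorphisms and explicit bijections --- same lattice, same identity.
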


In particular, we obtain the following algorithmic result:

\begin{theorem}
	\label{thm:intro_graphic_algo}
There exists a deterministic algorithm that computes $\#\tauhoms(H,G)$ in time $g(|V(H)|) \cdot |V(G)|^{\mathsf{tw}(\taumins(H)) + 1}$, where $g$ is a computable function and $\mathsf{tw}(\taumins(H))$ is the maximum treewidth of every graph in $\taumins(H)$.
\end{theorem}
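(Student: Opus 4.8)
The plan is to reduce the computation of $\#\tauhoms(H,G)$ to a bounded number of ordinary homomorphism-counting queries and to answer each query by the classical treewidth-based dynamic program. The starting point is the Möbius inversion that underlies the whole framework. Writing $L$ for the lattice of flats of the graphic matroid of $\tau(H)$, whose elements I identify with those partitions $\rho$ of $V(H)$ each of whose blocks induces a connected subgraph of $\tau(H)$, I would establish
\begin{equation*}
\#\tauhoms(H,G) \;=\; \sum_{\rho \in L} \mu(\hat{0},\rho)\cdot \#\homs(H/\rho,G)\,,
\end{equation*}
where $\hat{0}$ is the discrete partition and $H/\rho$ arises from $H$ by identifying the vertices in each block of $\rho$ and deleting multiedges. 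The argument is the inclusion--exclusion already implicit in (\ref{eqn:intro_lovasz}): for a homomorphism $\phi$ let $\ker_\tau(\phi)$ be the partition generated by the edges $uv\in E(\tau(H))$ with $\phi(u)=\phi(v)$; this lies in $L$, and $\phi$ is graphically restricted precisely when $\ker_\tau(\phi)=\hat{0}$. Since $\#\homs(H/\rho,G)$ counts exactly the homomorphisms $\phi$ that are constant on each block of $\rho$, i.e.\ those with $\ker_\tau(\phi)\ge\rho$, Möbius inversion over $L$ yields the displayed identity. Whenever $H/\rho$ carries a selfloop the corresponding term vanishes, so the sum effectively ranges over $\taumins(H)$.

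Given the identity the algorithm is immediate. First, using only $H$ and $\tau(H)$, I would compute the lattice $L$ together with the contractions $H/\rho$ and the coefficients $\mu(\hat{0},\rho)$, the latter via the recursion $\mu(\hat{0},\hat{0})=1$ and $\mu(\hat{0},\rho)=-\sum_{\hat{0}\le\sigma<\rho}\mu(\hat{0},\sigma)$; all of this depends on $H$ alone and costs at most some computable $g_1(|V(H)|)$. Next, for each $F=H/\rho$ I would invoke the standard dynamic program for counting homomorphisms along a tree decomposition: given a decomposition of $F$ of width $t$, each bag table has size at most $|V(G)|^{t+1}$, and the introduce/forget/join updates run in time $g_2(|V(F)|)\cdot|V(G)|^{t+1}$. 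Every $F\in\taumins(H)$ satisfies $|V(F)|\le|V(H)|$ and $\mathsf{tw}(F)\le\mathsf{tw}(\taumins(H))$, so an optimal decomposition of each $F$ is found in time bounded by a function of $|V(H)|$, and each count is produced in time $g_2(|V(H)|)\cdot|V(G)|^{\mathsf{tw}(\taumins(H))+1}$.

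Finally I would sum the weighted counts. The number of terms is at most the number of partitions of $V(H)$, hence at most $|V(H)|^{|V(H)|}$, a function of $|V(H)|$; the integers involved have bit length $O(|V(H)|\log|V(G)|)$, so the arithmetic contributes only lower-order factors. Collecting all pattern-side contributions into a single computable $g$ gives the claimed bound $g(|V(H)|)\cdot|V(G)|^{\mathsf{tw}(\taumins(H))+1}$, with the exponent of $|V(G)|$ governed solely by the largest treewidth occurring among the contractions.

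The genuinely substantive step is the Möbius identity, not the algorithmic assembly: one must verify that the correct indexing set is the flat lattice of the graphic matroid of $\tau(H)$, equivalently the connected partitions, rather than the full partition lattice, and that the zeta relation $\#\homs(H/\rho,G)=\sum_{\sigma\ge\rho}\#\{\phi:\ker_\tau(\phi)=\sigma\}$ holds exactly, including the vanishing of selfloop terms. The homomorphism dynamic program is classical, and since each contraction is a graph on at most $|V(H)|$ vertices its cost in $|V(G)|$ is controlled entirely by $\mathsf{tw}(\taumins(H))$; the only care needed there is to keep all pattern-dependent work inside $g$ so that the exponent of the host-side factor stays at $\mathsf{tw}(\taumins(H))+1$.
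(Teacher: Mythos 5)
Your proposal is correct and follows essentially the same route as the paper: it establishes the Möbius inversion identity over the lattice of flats of $M(\tau(H))$ (the paper's Lemma~\ref{lem:main_graphic}), observes that selfloop terms vanish so the sum ranges over $\taumins(H)$, and then evaluates the resulting linear combination of homomorphism counts by treewidth-based dynamic programming. The only cosmetic differences are that you group homomorphisms from $H$ by their $\tau$-kernel where the paper constructs bijections with restricted homomorphisms from the quotient graphs, and that you inline the standard counting DP where the paper invokes Lemma~3.5 of \cite{hombasis2017} as a black box.
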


Having established the general dichotomy we observe that there exist graphical restrictions $\tau_\mathsf{clique}$ and $\tauli$ such that $\homs_{\tau_\mathsf{clique}}(H,G)$ is the set of all subgraph embeddings from $H$ to $G$ and $\taulihoms(H,G)$ is the set of all \emph{locally injective homomorphisms} from $H$ to $G$.\\
As a consequence we obtain a full complexity dichotomy for counting locally injective homomorphisms from small pattern graphs $H$ to big host graphs $G$. To the best of the author's knowledge, this is the first result about the complexity of counting locally injective homomorphisms.

\begin{corollary}[Intuitive version]
	\label{cor:int_li_dicho}
Computing the number of locally injective homomorphisms from $H$ to $G$ is fixed-parameter tractable when parameterized by $|V(H)|$ if the treewidth of every graph in $\taulimins(H)$ is small. Otherwise the problem is $\#\W$-hard.\\ Moreover, there exists a deterministic algorithm that computes this number in time $g(|V(H)|) \cdot |V(G)|^{\mathsf{tw}(\taulimins(H)) + 1}$, where $g$ is a computable function and $\mathsf{tw}(\taulimins(H))$ is the maximum treewidth of every graph in $\taulimins(H)$.
\end{corollary}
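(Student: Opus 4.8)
The plan is to obtain the corollary as a direct instantiation of the general dichotomy (Theorem~\ref{thm:int_graphic_dicho}) and the general algorithmic result (Theorem~\ref{thm:intro_graphic_algo}) for the single graphical restriction $\tauli$ announced above, so that no new complexity-theoretic machinery is required. First I would fix this restriction precisely: given a pattern $H$, let $\tauli(H)$ be the graph on vertex set $V(H)$ in which two distinct vertices $u,w$ are joined by an edge exactly when they have a common neighbour in $H$, that is, when some $v$ satisfies $uv, wv \in E(H)$. This is a legitimate graphical restriction in the sense of Section~\ref{sec:graphic_homs}, so that $\taulihoms$, the family $\taulimins(H)$ and the quantity $\mathsf{tw}(\taulimins(H))$ are all well defined and the two assertions of the corollary are meaningful.

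The substantive step is to verify the identity $\taulihoms(H,G) = \{\phi \mid \phi \text{ is a locally injective homomorphism } H \to G\}$. By definition a map $\phi \in \taulihoms(H,G)$ is a homomorphism that sends the two endpoints of every edge of $\tauli(H)$ to distinct vertices of $G$; by the choice of $\tauli(H)$ this is the same as demanding that for every $v \in V(H)$ the restriction $\phi\restrict[N_H(v)]$ is injective, which is exactly local injectivity. The one point needing care is a pair $u,w$ that is simultaneously adjacent in $H$ and shares a common neighbour: contracting such a pair along an edge of $\tauli(H)$ produces a self-loop, and since $\taumins$ discards graphs containing self-loops, these contractions never enter $\taulimins(H)$, so the specialisation of the M\"obius-inversion formula underlying the general theorems remains correct.

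With this identity in hand both halves of the statement follow by substitution: plugging $\tau = \tauli$ into Theorem~\ref{thm:int_graphic_dicho} yields the FPT-versus-$\#\W$-hard dichotomy governed by the treewidth of the graphs in $\taulimins(H)$, and plugging the same restriction into Theorem~\ref{thm:intro_graphic_algo} gives the algorithm running in time $g(|V(H)|) \cdot |V(G)|^{\mathsf{tw}(\taulimins(H))+1}$. I expect the only genuine obstacle to be the correspondence itself, and in particular the self-loop bookkeeping: one must confirm that excluding self-looped contractions in the definition of $\taumins$ matches precisely the vertex identifications forbidden by local injectivity. Everything downstream of this verification is mechanical, since the treewidth dichotomy and the homomorphism-counting algorithm have already been carried out at the level of arbitrary graphical restrictions.
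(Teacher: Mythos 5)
Your proposal is correct and matches the paper's own proof: the paper likewise defines $\tauli(H)$ by joining distinct vertices with a common neighbour (Lemma~\ref{lem:li_graphic}), verifies $\taulihoms(H,G)=\lihoms(H,G)$ by unfolding the definitions, and then invokes Theorem~\ref{thm:graphic_dichotomy} verbatim. Your added remark about self-looped contractions being excluded from $\taulimins(H)$ is accurate but needs no separate argument, since the general theorem already discards terms $\#\homs(H/\rho,G)$ where $H/\rho$ has a self-loop.
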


We then observe that --- in contrast to subgraph embeddings --- counting locally injective homomorphisms has ``real'' FPT cases, that is, cases that are fixed-parameter tractable but not polynomial time solvable under standard assumptions. We show this by restricting the pattern graph to be a tree:

\begin{corollary}
	\label{cor:intro_dicho_trees}
Computing the number of locally injective homomorphisms from a tree $T$ to a graph $G$ can be done in deterministic time $g(|V(T)|) \cdot |V(G)|^2$, that is, the problem is fixed-parameter tractable when parameterized by $|V(T)|$. On the other hand, the problem is $\#\P$-hard.
\end{corollary}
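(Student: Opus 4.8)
The plan is to treat the two halves of the statement separately. For the algorithmic upper bound I would deduce it directly from Corollary~\ref{cor:int_li_dicho}: it suffices to prove that $\mathsf{tw}(\taulimins(T)) \le 1$, i.e.\ that every graph in $\taulimins(T)$ is a forest, since then the stated running time is exactly $g(|V(T)|)\cdot |V(G)|^{1+1}$. Recall that the restriction enforcing local injectivity places a constraint edge between any two vertices of $T$ sharing a common neighbour, so $\tauli(T)$ only joins vertices lying in the same class of the (unique) bipartition of the tree $T$. Consequently every graph in $\taulimins(T)$ arises from $T$ by contracting blocks that are $\tauli(T)$-connected and, by bipartiteness, contained in a single bipartition class; in particular no block ever contains two $T$-adjacent vertices, which is why these quotients carry no self-loop and genuinely belong to $\taulimins(T)$.

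The key structural claim I would then prove is that contracting a forest along such blocks never creates a cycle. I would argue by building each block up one constraint edge at a time while maintaining the invariant that the current quotient is a forest. When the next merge identifies two vertices $u,v$ joined by a $\tauli(T)$-edge, they share a common neighbour $w$, whose image is adjacent to the current images of both $u$ and $v$; since $u,v$ lie in the same bipartition class their images are distinct and at even distance, hence at distance exactly $2$. If they lie in different trees of the current forest the merge simply joins two trees; if they lie in the same tree the unique $u$--$v$ path has length $2$, so identifying them turns this path into a doubled edge that simplifies to a single edge and creates no cycle. Thus the forest property is preserved throughout, every element of $\taulimins(T)$ is a forest, and $\mathsf{tw}(\taulimins(T)) \le 1$ follows.

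For the hardness I would first observe that a tree host collapses local injectivity onto injectivity. Indeed, if $\varphi\colon T\to G$ is a locally injective homomorphism into a forest $G$ and $\varphi(u)=\varphi(v)$ for some $u\ne v$, then the image of the $u$--$v$ path in $T$ is a \emph{non-backtracking} closed walk (local injectivity at each internal vertex forbids returning along the edge just traversed), and a forest admits no non-backtracking closed walk of positive length; hence $\varphi$ is injective. As every embedding is trivially locally injective, for a tree host we obtain
\[
\#\taulihoms(T,G) \;=\; \#\embs(T,G) \;=\; |\mathsf{Aut}(T)|\cdot \#\mathsf{Sub}(T,G),
\]
where $|\mathsf{Aut}(T)|$ is computable in polynomial time and nonzero. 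An oracle for counting locally injective homomorphisms from a tree into a graph therefore computes $\subtt$, so it remains to show that $\subtt$ --- counting subgraph copies of a pattern tree in a host tree --- is $\sharpP$-hard.

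This last step is the main obstacle, and it is where I expect the real work to lie: unlike the preceding reductions it cannot exploit a matching-number bottleneck (counting $k$-matchings, paths, or stars is all polynomial on trees), so the hardness must come from letting the pattern grow. The plan is a gadget reduction from a classical numeric $\sharpP$-hard counting problem of subset-sum/partition type: I would encode the input numbers as pendant bundles along the spine of a caterpillar-like host tree and design the pattern tree so that each embedding records, at every spine position, a binomial choice among the available pendants, making $\subtt$ an evaluation of a generating polynomial whose coefficients are the desired solution counts. Running the construction for polynomially many host sizes and recovering the target coefficient by Lagrange interpolation would then isolate the $\sharpP$-hard quantity. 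The delicate points to control are that embeddings cannot ``slide'' into unintended alignments of the pattern spine within the host, and that the resulting interpolation matrix is nonsingular; verifying these rigidity and non-degeneracy properties is the crux of the argument.
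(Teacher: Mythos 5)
Your FPT half is correct and follows the paper's route: you show every graph in $\taulimins(T)$ is a forest and invoke Corollary~\ref{cor:int_li_dicho}; your bipartition-invariant argument is in fact a more careful version of the paper's one-line observation (Corollary~\ref{col:lihoms_tree_fpt}) that contracting two vertices with a common neighbour in a tree cannot create a cycle. Likewise, your transfer of hardness --- local injectivity collapses to injectivity when the host is a forest, hence $\#\taulihoms(T_1,T_2)=\#\embs(T_1,T_2)=\#\mathsf{Aut}(T_1)\cdot\#\mathsf{Sub}(T_1,T_2)$ with $\#\mathsf{Aut}(T_1)$ polynomial-time computable --- is exactly the paper's reduction from $\subtt$ to $\#\lihoms(\mathcal{T})$ (Lemma~\ref{lem:li_trees_hard}).

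The genuine gap is the $\sharpP$-hardness of $\subtt$ itself, which you correctly identify as the crux but then only sketch. What you offer is a plan, not a proof: the ``rigidity'' of spine alignments and the nonsingularity of the interpolation matrix, which you explicitly defer, are precisely the content of such an argument, and without them nothing is established. Worse, the choice of starting problem is suspect: for subset-sum/partition-type problems, encoding the input numbers as pendant bundles forces a unary representation, and with unary (i.e.\ polynomially bounded) numbers these counting problems are solvable exactly by pseudo-polynomial dynamic programming, hence not $\sharpP$-hard; so the reduction as outlined would start from a problem that is no longer hard at the size scale your trees can afford. The paper avoids both issues by reducing from the permanent of $\zo$-matrices (Lemma~\ref{lem:subtt_hard}): given an $n\times n$ matrix $A$ it builds a tree $T_A$ of polynomial size and shows the exact identity $\mathsf{perm}(A)=\#\mathsf{Sub}(T_{\mathsf{id}_n},T_A)$, where degree constraints (the root is the unique vertex of degree $n$, the vertices $w_j$ the only ones of degree $4$) force any subtree isomorphic to $T_{\mathsf{id}_n}$ to encode a permutation $\pi$, and the pendant vertices $b_{i,j}$ ensure that $\pi$ contributes $1$ exactly when $\prod_i a_{i,\pi(i)}=1$. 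No interpolation is needed, and hardness follows in a single step. To repair your proposal you would need either to carry out this permanent-style construction or to substitute a starting problem that remains $\sharpP$-hard under the unary size blow-up your encoding requires.
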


To prove $\#\P$-hardness, we prove in an intermediate step that the subgraph counting problem remains hard when both graphs are restricted to be trees, which may be of independent interest:

\begin{lemma}
	\label{lem:into_classical_trees}
The problem of, given trees $T_1$ and $T_2$, computing the number of subtrees of $T_2$ that are isomorphic to $T_1$ is $\#\P$-hard.
\end{lemma}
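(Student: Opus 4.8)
The plan is to reduce from the problem of computing the permanent of a $0/1$ matrix, equivalently counting the perfect matchings of a bipartite graph, which is $\sharpP$-hard by Valiant~\cite{Valiant1979a}. Given a bipartite graph $B$ with parts $\{a_1,\dots,a_n\}$ and $\{c_1,\dots,c_n\}$ and edge set $E$, I would construct two trees $T_1$ and $T_2$ so that the injective homomorphisms from $T_1$ into $T_2$ stand in a controlled correspondence with the perfect matchings of $B$. Since an injective homomorphism between trees is automatically an embedding onto a subtree (the image is connected with $|V(T_1)|$ vertices and $|V(T_1)|-1$ edges, hence a copy of $T_1$), and since the number of subtrees of $T_2$ isomorphic to $T_1$ equals $\#\embs(T_1,T_2)/|\mathrm{Aut}(T_1)|$ with $|\mathrm{Aut}(T_1)|$ a computable constant, it suffices to arrive at an identity of the shape $\#\embs(T_1,T_2)=c\cdot M(B)$ for a fixed nonzero constant $c$, where $M(B)$ denotes the number of perfect matchings.

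For the construction I would root $T_1$ at $r_1$ and $T_2$ at $r_2$, give $r_1$ the $n$ ``row'' children $u_1,\dots,u_n$ and give $r_2$ the $n$ ``column'' children $v_1,\dots,v_n$. The intention is that an embedding maps $r_1\mapsto r_2$ and the $u_i$ injectively onto the $v_j$, realising a permutation $\pi$, and that the pendant gadget below $u_i$ can be routed below $v_{\pi(i)}$ exactly when $(a_i,c_{\pi(i)})\in E$. To enforce this ``compatibility $=$ adjacency'', I would hang below each $u_i$ one \emph{key} gadget $L_i$ and below each $v_j$ a \emph{menu} consisting of one copy of $L_i$ for every $i$ with $(a_i,c_j)\in E$. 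The gadgets $L_1,\dots,L_n$ should form an antichain for the rooted-subtree-embedding order, so that $L_i$ embeds below $v_j$ iff $(a_i,c_j)\in E$; such an antichain of polynomial size is easy to produce, e.g. by letting $L_i$ be a path of length $i$ ending in a star $K_{1,n-i}$, whose coordinates $(i,n-i)$ are pairwise incomparable in the product order, so that $L_i\hookrightarrow L_{i'}$ forces $i=i'$.

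With this design every matching-respecting embedding contributes a fixed multiplicity $m_i$ per used row (the number of self-embeddings of the rooted gadget $L_i$), so the embeddings sending $r_1\mapsto r_2$ sum to $\bigl(\prod_i m_i\bigr)\cdot M(B)$, a computable constant times the desired quantity. The remaining and main difficulty is \emph{rigidity}: I must rule out, or otherwise account for, embeddings that do not send $r_1$ to $r_2$ or that misuse the gadgets. I would handle this by attaching to $r_2$ an extra ``cap'' of a shape that only the top of $T_1$ can enter, making $r_2$ the unique image of $r_1$, and by arranging the $L_i$ to be incomparable also with every structure occurring strictly below the menu level, so that a key cannot be hidden in the wrong place. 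The hardest part of the write-up will be verifying that these constraints leave \emph{only} the intended embeddings and that the per-row multiplicities factor out cleanly; should a fully rigid gadget prove awkward, I would instead place a weight parameter on the cap and isolate the matching term by polynomial interpolation~\cite{interpolation}, which tolerates the spurious embeddings as lower-order contributions.
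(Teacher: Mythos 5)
Your overall strategy coincides with the paper's --- reduce from the $0/1$ permanent and pass from subtree counts to embedding counts via $\#\mathsf{Aut}(T_1)$ --- but as written your proof has two genuine gaps, one concrete and one structural. The concrete one: the single gadget family you actually specify is broken at its boundary. A ``path of length $n-1$ ending in $K_{1,1}$'' and a ``path of length $n$ ending in $K_{1,0}$'' are both simply paths with $n+1$ vertices, so $L_{n-1}\cong L_n$; your family is not an antichain, the key of row $n-1$ also fits the menu entry of row $n$ (and vice versa, with doubled multiplicity when both entries are present below the same $v_j$), and the reduction as literally specified computes a quantity that is not $\mathsf{perm}(A)$. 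This is repairable (e.g.\ use stars $K_{1,n+1-i}$ so that every star center keeps at least two leaves, which also kills the ``shifted'' embeddings of a key into a deeper level of a longer menu entry), but the inference ``coordinates incomparable $\Rightarrow$ trees incomparable'' fails exactly where the star degenerates. The structural gap is that the crux of any such reduction --- rigidity --- is postponed rather than proved: the ``cap'' is never constructed, and no argument forces $r_1\mapsto r_2$. Note that degree counting alone does not do it in your construction, since $\deg(r_1)=n$ while a menu vertex $v_j$ can have degree $n+1$ (and a star center of an $L_1$-entry has degree close to $n$), so ruling out $r_1\mapsto v_j$ requires a case analysis showing the keys below the $u_i$ cannot all be accommodated --- precisely the verification you defer. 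The interpolation fallback is not viable as stated: you name no indeterminate in which the embedding count is a polynomial, and no reason why spurious embeddings would occupy separate coefficients rather than contaminate the matching term.

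For comparison, the paper sidesteps both issues by choosing the pattern to be the skeleton $T_{\mathsf{id}_n}$ built by the \emph{same} rule as the host $T_A$: column paths $v_{1,j},\dots,v_{n,j}$, a pendant $b_{i,j}$ on each $1$-entry, a degree-$4$ marker $w_j$ (with three leaves) at the bottom of each column, and a root joined to all column tops. Then for $n\geq 5$ the roots are the unique vertices of degree $\geq 5$, the $w_j$ are the unique vertices of degree exactly $4$, and the diagonal pendants $b'_{i,i}$ of the pattern force $a_{i,\pi(i)}=1$; each admissible permutation yields exactly one subtree, so $\#\mathsf{Sub}(T_{\mathsf{id}_n},T_A)=\mathsf{perm}(A)$ with no automorphism bookkeeping at all. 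Your key/menu design can in fact be pushed through --- after fixing the antichain, a short case analysis shows the root assignment is forced even without any cap --- but that analysis is exactly the content a complete proof must contain, and it is absent here.
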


After that we generalize locally injective homomorphisms to homomorphisms that are injective in the $r$-neighborhood of every vertex and observe that those are also graphically restricted and consequently obtain a counting dichotomy as well. \\

Finally, we show in Section~\ref{sec:lincombs} that all results can easily be extended to so-called \emph{linear combinations} of graphically restricted homomorphisms. Here one gets as input graphs $H_1,\dots,H_\ell$ together with positive coefficients $c_1,\dots,c_\ell$ and a graph $G$ and the goal is to compute
\[ \sum_{i=1}^\ell c_i \cdot \#\homs_{\tau_i}(H_i,G) \,,\]
for graphical restrictions $\tau_1,\dots,\tau_\ell$. This generalizes for example problems like counting all trees of size $k$ in $G$ or counting all locally injective homomorphisms from all graphs of size $k$ to $G$ or a combination thereof. We find out that, under some conditions, the dichotomy criteria transfer immediately to linear combinations:
\begin{theorem}[Intuitive version]
\label{thm:int_graphic_lincombs}
Computing $\sum_{i=1}^\ell c_i \cdot \#\homs_{\tau_i}(H_i,G)$ is fixed-parameter tractable when parameterized by $\max_i\{|V(H_i)|\}$ if the maximum treewidth of every graph in $\bigcup_i^\ell \tau_i\text{-}\mathcal{M}(H_i)$ is small. Otherwise, if additionally $|V(H_i)|$ has the same parity for every $i \in [\ell]$, the problem is $\#\W$-hard.
\end{theorem}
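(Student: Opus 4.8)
The plan is to unfold each restricted count into a linear combination of plain homomorphism counts, to regroup the result by isomorphism type, and then to invoke the Curticapean--Dell--Marx dichotomy for such linear combinations, using the corollary of Rota's NBC Theorem to rule out cancellation. The \textbf{tractable direction} needs no parity assumption: if every graph in $\bigcup_i^\ell \tau_i\text{-}\mathcal{M}(H_i)$ has treewidth at most $t$, then $\mathsf{tw}(\tau_i\text{-}\mathcal{M}(H_i)) \le t$ for each $i$, so Theorem~\ref{thm:intro_graphic_algo} computes every $\#\homs_{\tau_i}(H_i,G)$ in time $g(|V(H_i)|)\cdot|V(G)|^{t+1}$. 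Weighting the $\ell$ outputs by the $c_i$ and summing costs only a polynomial factor, so with $k \coloneqq \max_i |V(H_i)|$ the total running time is bounded by $\ell\cdot g(k)\cdot|V(G)|^{t+1}$, which is FPT.

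For the \textbf{hardness direction} I first expand each term via the Möbius inversion underlying Theorem~\ref{thm:int_graphic_dicho}: writing $L_i$ for the geometric lattice of flats of the graphic matroid of $\tau_i(H_i)$, $\mu_i$ for its Möbius function, and $\hat 0$ for its minimum, one has $\#\homs_{\tau_i}(H_i,G)=\sum_{\sigma\in L_i}\mu_i(\hat 0,\sigma)\cdot\#\homs(H_i/\sigma,G)$ (terms whose quotient carries a selfloop vanish against a simple $G$). Substituting and grouping the finitely many quotients by isomorphism type rewrites the whole expression as
\[
\sum_{[F]} d_{[F]}\cdot\#\homs(F,G), \qquad d_{[F]}=\sum_{i=1}^{\ell} c_i\!\!\sum_{\substack{\sigma\in L_i\\ H_i/\sigma\cong F}}\!\!\mu_i(\hat 0,\sigma),
\]
a linear combination over pairwise non-isomorphic simple graphs $F$, i.e.\ exactly the shape handled by Curticapean, Dell and Marx.

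The crux is a \textbf{no-cancellation argument}. Let $F^\ast$ be a graph of maximum treewidth in $\bigcup_i^\ell \tau_i\text{-}\mathcal{M}(H_i)$; I claim $d_{[F^\ast]}\neq 0$. By the corollary of Rota's NBC Theorem, in a geometric lattice $\mu_i(\hat 0,\sigma)$ never vanishes and its sign is $(-1)^{\op{rank}(\sigma)}$. For a graphic matroid a flat $\sigma$ with $H_i/\sigma\cong F^\ast$ has $\op{rank}(\sigma)=|V(H_i)|-|V(F^\ast)|$, so every summand contributing to $d_{[F^\ast]}$ carries the sign $(-1)^{|V(H_i)|-|V(F^\ast)|}$. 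The parity hypothesis keeps $|V(H_i)|$ constant modulo $2$, whence this sign is one fixed value $s$ for all $i$ and all $\sigma$; since the $c_i$ are positive and at least one inner Möbius term is nonzero (because $F^\ast$ genuinely occurs as a quotient of some $H_{i_0}$), all contributions reinforce and $d_{[F^\ast]}$ has sign $s$, in particular $d_{[F^\ast]}\neq 0$.

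Finally, as the inputs range over the regime where the treewidth of $\bigcup_i^\ell \tau_i\text{-}\mathcal{M}(H_i)$ is unbounded, the surviving graphs $F^\ast$ have unbounded treewidth while satisfying $|V(F^\ast)|\le k$; by complexity monotonicity the regrouped sum is therefore at least as hard as a Curticapean--Dell--Marx linear combination of homomorphisms over a class of unbounded treewidth, yielding $\#\W$-hardness through a parameter-preserving reduction. I expect the no-cancellation step to be the main obstacle: within a single $\tau_i$ the sign control is already available from the proof of Theorem~\ref{thm:int_graphic_dicho}, but excluding cancellation \emph{between} distinct patterns is precisely what forces the common-parity assumption, and keeping the identity $\op{rank}(\sigma)=|V(H_i)|-|V(H_i/\sigma)|$ straight is the delicate bookkeeping.
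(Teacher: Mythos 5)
Your proposal is correct and follows essentially the same route as the paper's proof of Theorem~\ref{thm:graphic_lincombs_dichotomy}: the tractable direction by computing each $\#\homs_{\tau_i}(H_i,G)$ separately via Theorem~\ref{thm:graphic_dichotomy} (no parity needed), and the hardness direction by Möbius-expanding each term over the lattice of flats of $M(\tau_i(H_i))$, regrouping by isomorphism type, and using Rota's sign formula $\mathsf{sgn}(\mu(\emptyset,\rho))=(-1)^{\mathsf{rk}(\rho)}$ together with the common-parity hypothesis and positivity of the $c_i$ to show the coefficients do not cancel, before invoking the Curticapean--Dell--Marx dichotomy. Your bookkeeping $\mathsf{rk}(\sigma)=|V(H_i)|-|V(H_i/\sigma)|$ is exactly the identity the paper uses (written there as $\#V(H)-c(\mathcal{H})$), so the two arguments coincide.
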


Furthermore we observe that this theorem is not true on the $\#\W$-hardness side if we omit the parity condition.

\subsection{Techniques}
The main ingredients of the proofs of Theorem~\ref{thm:int_graphic_dicho} and Theorem~\ref{thm:intro_graphic_algo} are the complexity classification of linear combinations of homomorphisms due to Curticapean, Dell and Marx (see Lemma~3.5 and Lemma~3.8 in \cite{hombasis2017}) as well as a corollary of Rota's NBC Theorem (see e.g. Theorem~4 in \cite{rota1964foundations}). In the first step we prove the following identity for the number of graphically restricted homomorphisms via Möbius inversion:

\[ \#\tauhoms(H,G) = \sum_{\rho \geq \emptyset} \mu(\emptyset,\rho)\cdot \#\homs(H/\rho,G) \,,\]
where the sum is over elements of the lattice of flats of the graphical matroid given by $\tau(H)$ and $H/\rho$ is the graph obtained by contracting the vertices of $H$ along the flat $\rho$. After that we use Rota's Theorem to prove that none of the terms cancel out\footnote{Here ``cancel out'' means that it could be possible that $H/\rho$ and $H/\sigma$ are isomorphic, but $\mu(\emptyset,\rho) = - \mu(\emptyset,\sigma)$ and all other $H/\rho'$ are not isomorphic to $H/\rho$. In this case, the term $\#\homs(H/\rho,G)$ would vanish in the above identity.}, despite the fact that the Möbius function can be negative. More precisely we show that whenever $H/\rho \cong H/\sigma$, we have that $\mathsf{rk}(\rho) = \mathsf{rk}(\sigma)$ and therefore, by Rota's Theorem, $\mathsf{sgn}(\mu(\emptyset,\rho)) = \mathsf{sgn}(\mu(\emptyset,\sigma))$.\\
The dichotomies for locally injective homomorphisms and homomorphisms that are injective in the $r$-neighborhood of every vertex are mere applications of the general theorem. For $\#\P$-hardness of the subgraph counting problem restricted to trees, we adapt the idea of the ``skeleton graph'' by Goldberg and Jerrum \cite{goldberg_jerrum_trees} and reduce directly from computing the permanent. To transfer this result to locally injective homomorphisms we use the well-known observation that locally injective homomorphisms from a tree to a tree are embeddings.\\
Finally, we prove the dichotomy for linear combinations of graphically restricted homomorphisms by taking a closer look at the proof of Theorem~\ref{thm:int_graphic_dicho}. Here, the parity constraint of the vertices of the graphs in the linear combination assures that there are no graphs $H_i$ and $H_j$ and elements $\rho_i$ and $\rho_j$ of the matroid lattices of $\tau_i(H_i)$ and $\tau_j(H_j)$ such that $H_i/\rho_i$ and $H_j/\rho_j$ are isomorphic but $\rho_i$ and $\rho_j$ have ranks of different parities. Using this observation, Theorem~\ref{thm:int_graphic_lincombs} can be proven in the same spirit as Theorem~\ref{thm:int_graphic_dicho}.

\section{Preliminaries}
\label{sec:prelims}
First we will introduce some basic notions: Given a finite set $S$, we write $|S|$ or $\#S$ for the cardinality of $S$. Given a natural number $\ell$ we let $[\ell]$ be the set $\bs{1,\dots,\ell}$. Given a real number $r$ we define the \emph{sign} $\mathsf{sgn}(r)$ of $r$ to be $1$ if $r > 0$, $0$ if $r = 0$ and $-1$ if $r < 0$.\\
A \emph{poset} is a pair $(P,\leq)$ where $P$ is a set and $\leq$ is a binary relation on $P$ that is reflexive, transitive and anti-symmetric. Throughout this paper we will write $y \geq x$ if $x \leq y$. A~\emph{lattice} is a poset $(L,\leq)$ such that every pair of elements $x,y \in L$ has a \emph{least upper bound} $x \vee y$ and a \emph{greatest lower bound} $x \wedge y$ that satisfy:
\begin{itemize}
\item $x \vee y \geq x$, $x \vee y \geq y$ and for all $z$ such that $z \geq x$ and $z \geq y$ it holds that $z \geq x \vee y$.
\item $x \wedge y \leq x$, $x \wedge y \leq y$ and for all $z$ such that $z \leq x$ and $z \leq y$ it holds that $z \leq x \wedge y$.
\end{itemize}
Given a finite set $S$, a \emph{partition} of $S$ is a set $\rho$ of pairwise disjoint subsets of $S$ such that $\dot{\bigcup}_{s \in \rho} s = S$. We call the elements of $\rho$ \emph{blocks}. For two partitions $\rho$ and $\sigma$ we write $\rho \leq \sigma$ if every element of $\rho$ is a subset of some element of $\sigma$. This binary relation is a lattice and called the \emph{partition lattice} of $S$. We will in particular encounter lattices of graphic matroids in our proofs. 
\subsection{Matroids}
We will follow the definitions of Chapt.~1 of the textbook of Oxley \cite{oxley}.
\begin{definition}
A \emph{matroid} $M$ is a pair $(E,\mathcal{I})$ where $E$ is a finite set and $\mathcal{I}\subseteq \mathcal{P}(E)$ such that
\begin{enumerate}
\item[(1)] $\emptyset \in \mathcal{I}$,
\item[(2)] if $A \in \mathcal{I}$ and $B \subseteq A$ then $B \in \mathcal{I}$, and
\item[(3)] if $A,B \in \mathcal{I}$ and $|B|<|A|$ then there exists $a \in A \setminus B$ such that $B\cup \bs{a} \in \mathcal{I}$.
\end{enumerate} 
We call $E$ the \emph{ground set} and an element $A \in \mathcal{I}$ an \emph{independent set}. A maximal independent set is called a \emph{basis}. The \emph{rank} $\mathsf{rk}(M)$ of $M$ is the size of its bases\footnote{This is well-defined as every maximal independent set has the same size due to (3).}.
\end{definition}

Given a subset $X \subseteq E$ we define $\mathcal{I}|X := \bs{A \subseteq X ~|~A \in \mathcal{I}}$. Then $M|X := (X,\mathcal{I}|X)$ is also a matroid and called the restriction of $M$ to $X$. Now the \emph{rank} $\mathsf{rk}(X)$ of $X$ is the rank of $M|X$. Equivalently, the rank of $X$ is the size of the largest independent set $A \subseteq X$.\\
Furthermore we define the \emph{closure} of $X$ as follows:
\[ \mathsf{cl}(X) := \bs{e \in E ~|~ \mathsf{rk}(X \cup \bs{e}) = \mathsf{rk}(X)} \,.\]
Note that by definition $\mathsf{rk}(X)=\mathsf{rk}(\mathsf{cl}(X))$. We say that $X$ is a \emph{flat} if $\mathsf{cl}(X)=X$. We denote $L(M)$ as the set of flats of $M$. It holds that $L(M)$ together with the relation of inclusion is a lattice, called the \emph{lattice of flats} of $M$. The least upper bound of two flats $X$ and $Y$ is $\mathsf{cl}(X \cup Y)$ and the greatest lower bound is $X \cap Y$. It is known that the lattices of flats of matroids are exactly the geometric lattices\footnote{For the purpose of this paper we do not need the definition of geometric lattices but rather the equivalent one in terms of lattices of flats and therefore omit it. We recommend e.g. Chapt.~3 of \cite{welsh} and Chapt.~1.7 of \cite{oxley} to the interested reader.} and we denote the set of those lattices as $\mathcal{L}$.

In Section~\ref{sec:graphic_homs} we take a closer look at (lattices of flats of) graphic matroids:
\begin{definition}
Given a graph $H=(V,E) \in \graphs$, the \emph{graphic matroid} $M(H)$ has ground set $E$ and a set of edges is independent if and only if it does not contain a cycle.
\end{definition}
If $H$ is connected then a basis of $H$ is a spanning tree of $H$. If $H$ consists of several connected components then a basis of $M(H)$ induces spanning trees for each of those. Every subset $X$ of $E$ induces a partition of the vertices of $H$ where the blocks are the vertices of the connected components of $H|_X$ and it holds that 
\begin{equation}
\label{eqn:graphic_rank}
\mathsf{rk}(X) = |V(H)| - c(H|_X) \,.
\end{equation}
In particular, the flats of $M(H)$ correspond bijectively to the partitions of vertices of $H$ into connected components as adding an element to $X$ such that the rank does not change will not change the connected components, too. For convenience we will therefore abuse notation and say, given an element $\rho$ of the lattice of flats of $M(H)$, that $\rho$ partitions the vertices of $H$ where the blocks are the vertices of the connected components of $H|_\rho$. The following observation will be useful in Section~\ref{sec:graphic_homs}:

\begin{lemma}
\label{lem:same_blocks_same_ranks}
Let $\rho,\sigma \in L(M(H))$ for a graph $H\in \graphs$. If the number of blocks of $\rho$ and $\sigma$ are equal then $\mathsf{rk}(\rho)=\mathsf{rk}(\sigma)$. 
\end{lemma}

\begin{proof}
Immediately follows from Equation~(\ref{eqn:graphic_rank}).
\end{proof}

We denote $H/\rho$ as the graph obtained from $H$ by contracting the vertices of $H$ that are in the same component of $\rho$ and deleting multiedges (but keeping selfloops). As the vertices of $H/\rho$ partition the vertices of $H$, we think of the vertices of $H/\rho$ as subsets of vertices of $H$ and call them \emph{blocks}. Furthermore we write $[v]$ for the block containing~$v$.

\subsection{Graphs and homomorphisms}

In this work all graphs are considered unlabeled and simple but may allow selfloops unless stated otherwise. We denote the set of all those graphs as $\loopG$. Furthermore we denote $\graphs$ as the set of all unlabeled and simple graphs without selfloops. \\
For a graph $G$ we write $n$ for the number of vertices $V(G)$ of $G$ and $m$ for the number of edges $E(G)$ of $G$. We denote $c(G)$ as the number of connected components of $G$. Furthermore, given a subset $X$ of edges, we denote $G|_X$ as the graph with vertices $V(G)$ and edges $X$. Given a partition of vertices $\rho$ of a graph $H$, we write $H/\rho$ as the graph obtained from $H$ by contracting the vertices of $H$ that are in the same component of $\rho$ and deleting multiedges (but keeping selfloops). As the vertices of $H/\rho$ partition the vertices of $H$, we think of the vertices as subsets of vertices of $H$ and call them \emph{blocks}. Furthermore we write $[v]$ for the block containing~$v$. \\
Given graphs $H$ and $G$, a \emph{homomorphism} from $H$ to $G$ is a mapping $\varphi : V(H) \rightarrow V(G)$ such that $\bs{u,v} \in E(H)$ implies that $\bs{\varphi(u),\varphi(v)} \in E(G)$. We denote $\homs(H,G)$ as the set of all homomorphisms from $H$ to $G$. A homomorphism is called \emph{embedding} if it is injective and we denote $\embs(H,G)$ as the set of all embeddings from $H$ to $G$. An embedding from $H$ to $H$ is called an \emph{automorphism} of $H$. We denote $\mathsf{Aut}(H)$ as the set of all automorphisms of $H$. Furthermore we let $\mathsf{Sub}(H,G)$ be the set of all subgraphs of $G$ that are isomorphic to $H$. Then it holds that $\#\mathsf{Aut}(H)\cdot\#\mathsf{Sub}(H,G) = \#\embs(H,G)$ (see e.g. \cite{lovasz}). \\
Given a set $S$ and a function $\alpha : S \rightarrow \Q$, we define the \emph{support} of $\alpha$ as follows:
\[ \mathsf{supp}(\alpha) := \bs{s \in S ~|~ \alpha(s)\neq 0 } \,.\]

A graph parameter that will be of quite some importance to define the dichotomy criteria is the \emph{treewidth} of a graph, capturing how ``tree-like'' a graph is:
\begin{definition}[Chapt.~7 in \cite{cygan2015parameterized}]
	A \emph{tree decomposition} of a graph $G \in \graphs$ is a pair $\mathcal{T}=(T,\bs{X_t}_{t_\in V(T)})$, where $T$ is a tree whose every node $t$ is assigned a vertex subset $X_t \subseteq V(G)$, such that:
	\begin{enumerate}
		\item[(1)] $\bigcup_{t \in V(T)} X_t = V(G)$.
		\item[(2)] For every $\bs{u,v} \in E(G)$, there exists $t \in V(T)$ such that $u$ and $v$ are contained in $X_t$.
		\item[(3)] For every $u \in V(G)$, the set $T_u:=\bs{t \in V(T)~|~u \in X_t}$ induces a connected subtree of $T$.
	\end{enumerate}
The \emph{width} of $\mathcal{T}$ is the size of the largest $X_t$ for $t \in V(T)$ minus $1$ and the \emph{treewidth} of $G$ is the minimum width of any tree decomposition of $G$. We write $\mathsf{tw}(G)$ for the treewidth of $G$. Given a finite set of graphs $\mathcal{M}$, we denote $\mathsf{tw}(\mathcal{M})$ as the maximum treewidth of any graph in $\mathcal{M}$.
\end{definition}

Examples of graphs with small treewidth are matchings, paths and more generally trees and forests or cycles. On the other hand, graphs with high treewidth are for example cliques, bicliques and grid graphs.\\
Throughout this paper we will often say that a set $C$ of graphs has \emph{bounded treewidth} meaning that there is a constant $B$ such that the treewidth of every graph $H\in C$ is bounded by $B$.

\subsection{Parameterized counting}
We will mainly follow the definitions of Chapt.~14 of the textbook of Flum and Grohe \cite{flumgrohe_counting}. A \emph{parameterized counting problem} is a function $F:\bs{0,1}^\ast \rightarrow \N$ together with a polynomial-time computable \emph{parameterization} $k:\bs{0,1}^\ast \rightarrow \N$. A parameterized counting problem is \emph{fixed-parameter tractable} if there exists a computable function $g$ such that it can be solved in time $g(k(x))\cdot |x|^{O(1)}$ for any input $x$. A \emph{parameterized Turing reduction} from $(F,k)$ to $(F',k')$ is an FPT algorithm w.r.t. parameterization $k$ with oracle $(F',k')$ that on input $x$ computes $F(x)$ and additionally satisfies that there exists a function $g'$ such that for every oracle query $y$ it holds that $k'(y)\leq g(k(x))$. A parameterized counting problem $(F,k)$ is $\#\W$-hard if there exists an FPT Turing reduction from $\#k$-$\mathsf{clique}$ to $(F,k)$, where $\#k$-$\mathsf{clique}$ is the problem of, given a graph $G$ and a parameter $k$, computing the number of cliques of size $k$ in $G$\footnote{For a more detailed introduction to $\#\W$ we recommend \cite{flumgrohe_counting} to the interested reader.}. Under standard assumptions (e.g. under the exponential time hypothesis) $\#\W$-hard problems are not fixed-parameter tractable.\\ 
The following two parameterized counting problems will be of particular importance in this work: Given a class of graphs $C\subseteq \graphs$, $\#\homs(C)$ ($\#\embs(C)$) is the problem of, given a graph $H \in C$ and a graph $G \in \graphs$, computing $\#\homs(H,G)$ ($\#\embs(H,G)$). Both problems are parameterized by $\#V(H)$. Their complexity has already been classified:

\begin{theorem}[\cite{homsdicho1}]
\label{thm:hom-dichotomy}
Let $C$ be a recursively enumerable class of graphs. If $C$ has {bounded treewdith} then $\#\homs(C)$ can be solved in polynomial time. Otherwise $\#\homs(C)$ is $\#\W$-hard.
\end{theorem}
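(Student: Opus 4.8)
The plan is to treat the two directions separately: the tractable case by dynamic programming over a tree decomposition of the pattern, and the hard case by a reduction from the canonical $\#\W$-complete problem $\#k$-$\mathsf{clique}$.

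For the tractable direction, assume $C$ has treewidth bounded by a constant $B$. Given $H \in C$ and a host $G$, I would first compute, in time depending only on $|V(H)|$, a tree decomposition $\mathcal{T} = (T, \{X_t\}_{t \in V(T)})$ of $H$ of width $\mathsf{tw}(H) \le B$, root it at a node $r$, and run the standard bottom-up dynamic program. For a node $t$, let $H_t$ be the subgraph induced by the vertices appearing in the subtree below $t$, and maintain a table $D_t$ indexed by those maps $f \colon X_t \to V(G)$ that are homomorphisms on $H[X_t]$, where $D_t(f)$ records the number of homomorphisms $\varphi \colon H_t \to G$ with $\varphi|_{X_t} = f$. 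The leaf, introduce, forget and join updates are routine, and each table has at most $|V(G)|^{|X_t|} \le |V(G)|^{B+1}$ entries, so that $\#\homs(H,G) = \sum_f D_r(f)$ is computed in time $\poly(|V(G)|)$ once $B$ is fixed. This is exactly the bounded-treewidth specialization of the algorithm underlying Theorem~\ref{thm:intro_graphic_algo}.

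For the hard direction I would show that unbounded treewidth of $C$ forces $\#\homs(C)$ to be $\#\W$-hard, in three steps. First, since $C$ is recursively enumerable and has unbounded treewidth, the Excluded Grid Theorem of Robertson and Seymour lets me compute, for every $k$, a pattern $H \in C$ together with a minor model of the $(k \times k)$-grid inside $H$ (disjoint connected branch sets joined by the grid edges). Second, I would establish that counting homomorphisms from grids is $\#\W$-hard, by reducing from $\#k$-$\mathsf{clique}$: the regular grid structure is flexible enough to route a clique instance, and the desired injective count is isolated from the collapsing homomorphism counts returned by the oracle via polynomial interpolation (a Vandermonde system over several host graphs of bounded size) together with the Möbius transfer~(\ref{eqn:intro_lovasz}) between homomorphism, embedding and subgraph counts. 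Third, I would lift this hardness from the grid minor to the actual pattern $H$ by using the branch sets of the minor model to construct a host graph for $H$ whose homomorphism count exposes the grid-homomorphism count.

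The main obstacle is the third step, since homomorphism counting is not monotone under the minor relation in any direct sense. Contraction is benign for homomorphisms, because a homomorphism may legitimately collapse two adjacent pattern vertices onto a single host vertex; but edge deletions, vertex deletions and, above all, the non-trivial branch sets of the minor model generate large families of unwanted partially collapsing maps that must be controlled. I would handle these by inclusion--exclusion over the contractions of $H$, writing the target grid-homomorphism count as a signed combination of the counts $\#\homs(H/\rho, G)$ over partitions $\rho$, and arguing --- using the known closed form of the Möbius function $\mu$ over the partition lattice to control the signs --- that at least one high-treewidth term survives without cancellation. Making this cancellation-free lifting precise is the crux of the argument; the remaining ingredients are standard.
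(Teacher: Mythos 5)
First, a point of comparison: the paper does not prove this statement at all --- it is imported verbatim from Dalmau and Jonsson \cite{homsdicho1} --- so your proposal has to stand as a self-contained proof of that external result. Your tractable direction (dynamic programming over a tree decomposition of $H$) is standard and correct. The hardness direction also has the right skeleton --- Excluded Grid Theorem, hardness for grids, lifting through the minor model --- which is indeed how the known proof proceeds. But the step you yourself call the crux does not work as described, for two reasons. (i) The oracle in a parameterized Turing reduction to $\#\homs(C)$ answers queries of the form $\#\homs(H,G^*)$ for hosts $G^*$ \emph{that you construct}; it never answers $\#\homs(H/\rho,G)$, since the quotients $H/\rho$ are different patterns and need not lie in $C$. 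So ``writing the grid-homomorphism count as a signed combination of the $\#\homs(H/\rho,G)$'' is not a reduction: the Möbius inversion~(\ref{eqn:intro_lovasz}) expresses one quantity in terms of \emph{other patterns against the same host}, which is the opposite of what you need, namely \emph{the same pattern $H$ against constructed hosts}. (ii) Even if you show that some high-treewidth term survives with nonzero coefficient in such a linear combination, the only way to convert ``a hard-looking term survives'' into $\#\W$-hardness is Theorem~\ref{thm:lincombs_homs} of Curticapean, Dell and Marx --- and, as the paper itself notes in a footnote, that theorem's hardness proof ultimately rests on the Dalmau--Jonsson dichotomy you are trying to prove. This route is circular.

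The missing mechanism (and the one used in the actual proof) is a host construction, not a pattern expansion: one works with colourful (colour-prescribed) homomorphisms. Counting colour-prescribed homomorphisms from grids is $\#\W$-hard by a reduction from multicoloured clique; this lifts through the minor model because one can blow up each branch set of the model into disjoint copies inside the host, where connectedness of the branch sets together with the colour constraints forces every branch set to collapse onto the copy associated with a single host vertex, giving an exact correspondence between homomorphisms from $H$ and homomorphisms from the grid; finally, the colour constraints are removed by inclusion--exclusion over \emph{subsets of the colour classes of the host}, so that every term in the inclusion--exclusion is $\#\homs(H,\cdot)$ evaluated on an induced subgraph of the constructed host and hence answerable by the oracle. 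Your step two has the same soft spot: ``routing'' a clique instance through a grid and isolating the relevant count by ``interpolation plus the Möbius transfer'' is exactly where the colourful machinery is needed. Sign control of the Möbius function over the partition lattice --- the tool of \cite{hombasis2017} and of this paper --- solves a different problem (non-cancellation of terms when a fixed problem is expanded in the homomorphism basis) and cannot substitute for the host construction here.
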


\begin{theorem}[\cite{embsdicho}]
\label{thm:emb-dichotomy}
Let $C$ be a recursively enumerable class of graphs. If $C$ has {bounded matching number} then $\#\embs(C)$ can be solved in polynomial time. Otherwise $\#\embs(C)$ is $\#\W$-hard.
\end{theorem}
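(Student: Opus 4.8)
The plan is to route both directions through Lov\'asz's identity~(\ref{eqn:intro_lovasz}), which rewrites $\#\embs(H,G)$ as a linear combination of homomorphism counts $\#\homs(H/\rho,G)$ over the partition lattice of $V(H)$, and then to translate the matching-number hypothesis into a treewidth condition on the quotients $H/\rho$. \emph{Easy direction.} First I would observe that a uniform bound on the matching number yields a uniform bound on the vertex cover number: both endpoints of a maximum matching form a vertex cover, so $\mathsf{vc}(H)\le 2\nu(H)$. Moreover a vertex cover $S$ of $H$ projects to the vertex cover $\{[v]:v\in S\}$ of any quotient $H/\rho$, since every edge of $H/\rho$ is the image of an edge of $H$; as treewidth is at most the vertex cover number, every $H/\rho$ then has treewidth at most $2\nu(H)$, which is bounded over $C$. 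By Theorem~\ref{thm:hom-dichotomy} each term $\#\homs(H/\rho,G)$ is computable in polynomial time, and as (\ref{eqn:intro_lovasz}) has a number of terms and coefficients depending only on $|V(H)|$, the whole sum is computable in polynomial time for the fixed class $C$.

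\emph{No cancellation.} For the converse I would first make sure that the expansion genuinely exposes its hardest quotient. Grouping isomorphic quotients, I rewrite the right-hand side of (\ref{eqn:intro_lovasz}) as $\sum_j\left(\sum_{H/\rho\cong F_j}\mu(\emptyset,\rho)\right)\cdot\#\homs(F_j,G)$ over the distinct isomorphism types $F_j$. Over the partition lattice one has the closed form $\mu(\emptyset,\rho)=\prod_{B\in\rho}(-1)^{|B|-1}(|B|-1)!$, so $\mathsf{sgn}(\mu(\emptyset,\rho))=(-1)^{|V(H)|-|\rho|}$ depends only on the number of blocks of $\rho$. Isomorphic quotients have equally many vertices, hence equally many blocks, hence the same sign; consequently every aggregated coefficient is a sum of terms of one sign and cannot vanish. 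Thus each isomorphism type $F_j$ survives with a nonzero effective coefficient.

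\emph{Monotonicity and the combinatorial core.} Now I would invoke the complexity monotonicity for linear combinations of homomorphisms of Curticapean, Dell and Marx \cite{hombasis2017}: a sum $\sum_j c_j\cdot\#\homs(F_j,G)$ with nonzero $c_j$ and pairwise non-isomorphic $F_j$ is $\#\W$-hard as soon as $\{F_j\}$ has unbounded treewidth, by the homomorphism dichotomy (Theorem~\ref{thm:hom-dichotomy}). It remains to prove the combinatorial core: if $C$ has unbounded matching number, then $\{H/\rho:H\in C\}$ has unbounded treewidth. For the matching $M_k$ itself this is transparent: indexing its $k=\binom{t}{2}$ edges by the pairs $\{i,j\}$ with $1\le i<j\le t$ and placing the two endpoints of edge $\{i,j\}$ into blocks $i$ and $j$ turns $M_k/\rho$ into exactly $K_t$, which is loop-free (the endpoints of each matching edge lie in different blocks) and has treewidth $t-1=\Omega(\sqrt k)$.

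\emph{Main obstacle.} The hard part will be to carry this construction out for an arbitrary $H\in C$ that merely \emph{contains} a size-$k$ matching, since $H$ also carries edges outside that matching. If the contraction above places two $H$-adjacent vertices into the same block, the quotient acquires a selfloop, and because $G$ is simple such a loopy quotient contributes $\#\homs(H/\rho,G)=0$ and is useless for hardness. The plan is therefore to restrict attention to loop-free homomorphic images of $H$, i.e.\ to partitions $\rho$ all of whose blocks are independent sets of $H$, and to show that one can still force a large clique minor --- more robustly, a loop-free quotient of treewidth $\Omega(\sqrt{\nu(H)})$ --- inside this restricted family. The genuinely dense case (where $H$ induces many edges on the matching endpoints) is handled by the identity partition, since then $H=H/\emptyset$ already has large treewidth; the genuinely sparse case is where $H$ looks like $M_k$ and the clique construction applies; the work lies in a uniform argument interpolating between these, which is exactly the combinatorial lemma underlying the proof of Curticapean and Marx \cite{embsdicho} (whose original route instead reduces from the hardness of counting $k$-matchings \cite{kmatchings}).
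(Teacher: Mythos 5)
You should first note that the paper does not actually prove this statement: Theorem~\ref{thm:emb-dichotomy} is imported wholesale from \cite{embsdicho}, and the surrounding text of Section~2.5 only \emph{sketches} how \cite{hombasis2017} re-derives it in the homomorphism basis. Your proposal follows exactly that sketched route, and your no-cancellation step (closed form $\mu(\emptyset,\rho)=\prod_{B\in\rho}(-1)^{|B|-1}(|B|-1)!$, sign determined by the number of blocks, isomorphic quotients have equally many blocks) is correct and is precisely the argument the paper attributes to \cite{hombasis2017}. Nevertheless, the proposal has two genuine gaps. The first is in the ``easy'' direction: bounded matching number does \emph{not} bound $|V(H)|$ (stars have matching number $1$ and arbitrarily many vertices), so the fact that identity~(\ref{eqn:intro_lovasz}) has ``a number of terms and coefficients depending only on $|V(H)|$'' does not give polynomial time --- the partition lattice of $V(H)$ has $\mathrm{Bell}(|V(H)|)$ elements, which is super-polynomial in the input size. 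Term-by-term evaluation (your argument that every quotient has treewidth at most $2\nu(H)$ is fine) therefore only yields fixed-parameter tractability, whereas the theorem claims polynomial time. Closing this gap requires something genuinely different: either the structural counting algorithm of \cite{embsdicho} for bounded vertex-cover patterns (classifying the vertices outside the cover into at most $2^{\mathsf{vc}}$ neighborhood types), or a proof that the Lov\'asz expansion collapses to polynomially many isomorphism types \emph{together with} an efficient way to compute the aggregated coefficients without enumerating all partitions. You do neither.

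The second gap is the one you yourself flag: the combinatorial core of the hardness direction --- that unbounded matching number of $C$ forces unbounded treewidth among the loop-free quotients --- is proved only for the pure matching $M_k$, where the clique construction works because no extra edges can land inside a block. For an arbitrary $H\in C$ that merely contains a large matching, the same contraction may create selfloops, and you explicitly defer the ``uniform argument interpolating'' between the dense case (where $H/\emptyset$ itself has large treewidth) and the sparse case to the literature. But this interpolation \emph{is} the nontrivial content of the lemma (it is Fact~3.4 of \cite{hombasis2017}, the same fact the present paper leans on in Section~\ref{sec:lincombs}); without it, your reduction establishes hardness only for classes consisting of induced matchings, not for every class of unbounded matching number. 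So what you have is a correct skeleton of the known proof, with its two load-bearing steps --- the polynomial-time algorithm on the tractable side and the matching-number-to-treewidth lemma for arbitrary graphs on the hard side --- asserted rather than proved.
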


Recall that ``bounded treewidth (matching number)'' means that there is a constant $B$ such that the treewidth (size of the largest matching) of any graph in $C$ is bounded by $B$.

\subsection{Linear combinations of homomorphisms and Möbius inversion}
Curticapean, Dell and Marx \cite{hombasis2017} introduced the following parameterized counting problem:
\begin{definition}[Linear combinations of homomorphisms]
Let $\mathcal{A}$ be a set of functions $a:~\mathcal{G}~\rightarrow~\mathbb{Q}$ with finite support\footnote{We can also think of $\mathcal{A}$ being a set of lists.}. We define the parameterized counting problem $\#\homs(\mathcal{A})$ as follows:\\
Given $a \in \mathcal{A}$ and $G \in \mathcal{G}$, compute
\[ \sum_{H \in \mathrm{supp}(a)} a(H) \cdot \#\homs(H, G) \,, ~\text{  parameterized by } \max_{H \in \mathrm{supp}(a)} \#V(H) \,.\] 
\end{definition}
Note that this problem generalizes $\#\homs(C)$.
The following theorem will be the foundation of all complexity results in this paper:
\begin{theorem}[\cite{hombasis2017}, Lemma~3.5 and Lemma~3.8]
\label{thm:lincombs_homs}
If $\mathcal{A}$ has bounded treewidth then $\#\homs(\mathcal{A})$ can be solved in time $g(|\mathrm{supp}(\alpha)|) \cdot n^{O(1)}$ on input $(\alpha,G)$ where $n = |V(G)|$ and $g$ is a computable function. Otherwise the problem is $\#\W$-hard.
\end{theorem}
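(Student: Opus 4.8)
The plan is to prove the two directions separately; the tractable direction is a direct summation, while the hardness direction rests on a separation argument (sometimes called \emph{complexity monotonicity}) that recovers the individual homomorphism counts from their weighted sum. For the tractable direction, suppose $\mathcal{A}$ has bounded treewidth, so that every graph occurring in any support has treewidth at most some constant $B$. On input $(\alpha,G)$ I would compute $\#\homs(H,G)$ separately for each of the $|\mathrm{supp}(\alpha)|$ graphs $H\in\mathrm{supp}(\alpha)$ and output $\sum_{H\in\mathrm{supp}(\alpha)}\alpha(H)\cdot\#\homs(H,G)$. By the tractable case of Theorem~\ref{thm:hom-dichotomy} each such count is computable in time $g_0(|V(H)|)\cdot n^{O(1)}$, for instance by dynamic programming along a width-$B$ tree decomposition, so the total running time is of the asserted form $g(|\mathrm{supp}(\alpha)|)\cdot n^{O(1)}$.

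For the hardness direction, fix $\alpha\in\mathcal{A}$ and write its support as $\{H_1,\dots,H_k\}$ with all $\alpha(H_i)\neq 0$; note that the pattern parameter of every query built from $\alpha$ equals the fixed quantity $\max_i|V(H_i)|$. The engine of the separation is the multiplicativity of homomorphism counts under the categorical (tensor) product: for graphs $G'$ and $F$ let $G'\times F$ have vertex set $V(G')\times V(F)$ and an edge $\{(u,x),(v,y)\}$ exactly when $\{u,v\}\in E(G')$ and $\{x,y\}\in E(F)$; then a homomorphism into $G'\times F$ is precisely a pair of homomorphisms into the two factors, whence $\#\homs(H_i,G'\times F)=\#\homs(H_i,G')\cdot\#\homs(H_i,F)$. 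Querying the oracle on $G\times F_1,\dots,G\times F_k$ for gadget graphs $F_1,\dots,F_k$ therefore yields, writing $L_\alpha(\cdot)=\sum_i\alpha(H_i)\cdot\#\homs(H_i,\cdot)$, the linear system
\[ L_\alpha(G\times F_j)=\sum_{i=1}^{k}\bigl(\alpha(H_i)\cdot\#\homs(H_i,G)\bigr)\cdot\#\homs(H_i,F_j),\qquad j\in[k], \]
with coefficient matrix $\bigl[\#\homs(H_i,F_j)\bigr]_{i,j}$. Once this matrix is invertible I can solve for each unknown $\alpha(H_i)\cdot\#\homs(H_i,G)$ and, since $\alpha(H_i)\neq 0$, read off every individual count $\#\homs(H_i,G)$.

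The step I expect to be the crux is the existence of gadgets making the above matrix invertible, which is equivalent to the linear independence over $\Q$ of the functions $F\mapsto\#\homs(H_i,F)$ for pairwise non-isomorphic $H_i$. I would derive this from the factorization of each homomorphism through its image, $\#\homs(H,F)=\sum_{\sigma}\#\embs(H/\sigma,F)$, where $\sigma$ ranges over the loopless partitions of $V(H)$ and $\#\embs$ counts injective homomorphisms: over the finite family consisting of the $H_i$ together with all their quotients, the matrix of $\#\embs$ is triangular once patterns are ordered by vertex and then edge number, since it vanishes unless the source embeds into the target and carries the positive value $\#\mathsf{Aut}$ on the diagonal, hence is invertible; composing with the unipotent quotient-summation matrix shows the $\#\homs$ matrix is invertible as well, which simultaneously gives the independence and concrete bounded-size gadgets that can be found by exhaustive search. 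To finish, observe that if $\mathcal{A}$ has unbounded treewidth then the class of maximum-treewidth patterns occurring across $\mathcal{A}$ has unbounded treewidth, so counting homomorphisms for it is $\#\W$-hard by Theorem~\ref{thm:hom-dichotomy}. Given a clique size $k$, I would enumerate $\mathcal{A}$ to select one $\alpha$ whose support contains a pattern of sufficiently large treewidth, use the separation to extract that pattern's homomorphism count, and feed it into the hardness reduction underlying Theorem~\ref{thm:hom-dichotomy}; since each $G\times F_j$ has size polynomial in $n$ and all oracle parameters equal the single value $\max_i|V(H_i)|$ fixed by the chosen $\alpha$, the bookkeeping collapses into a computable function of $k$, giving a valid parameterized Turing reduction from $\#k\text{-}\mathsf{clique}$.
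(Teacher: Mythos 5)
Your proposal is correct and follows essentially the same route as the proof in the work this paper cites for the statement (Curticapean, Dell and Marx \cite{hombasis2017}, Lemmas~3.5 and 3.8) --- note the paper itself imports the theorem without reproving it: the tractable case by summing per-pattern treewidth-based counts, and the hardness case by ``complexity monotonicity,'' i.e., querying on tensor products $G \times F_j$, inverting the matrix $\left[\#\homs(H_i,F_j)\right]_{i,j}$ via Lov{\'a}sz-style linear independence of homomorphism counts, and then invoking the Dalmau--Jonsson dichotomy (Theorem~\ref{thm:hom-dichotomy}) on the unbounded-treewidth pattern class. Nothing essential is missing.
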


In their paper, the authors show how this result can be used to give a much simpler proof of Theorem~\ref{thm:emb-dichotomy}. The idea is that every problem $\#\embs(C)$ is equivalent to a problem $\#\homs(\mathcal{A})$. As all proofs in this work are in the same flavour, we will outline the technique here, using $\#\embs(C)$ as an example. Therefore, we first need to introduce the so called Möbius inversion (we recommend reading \cite{stanley2011enumerative} for a more detailed introduction):\\

\begin{definition}
\label{def:zeta_trans}
Let $(P,\leq)$ be a poset and $h:P \rightarrow \mathbb{C}$ be a function. Then the \emph{zeta transformation} $\zeta h$ is defined as follows:
\[ \zeta h (\sigma) := \sum_{\rho \geq \sigma} h(\rho) \,.\]
\end{definition}

\begin{theorem}[Möbius inversion, see \cite{stanley2011enumerative} or \cite{rota1964foundations}]
\label{thm:möbi}
Let $(P,\leq)$ and $h$ as in Definition~\ref{def:zeta_trans}. Then there is a function $\mu_P: P \times P \rightarrow \mathbb{Z}$ such that for all $\sigma \in P$ it holds that
\[ h(\sigma) = \sum_{\rho \geq \sigma} \mu_P(\sigma,\rho) \cdot \zeta h (\rho) \,.\]
$\mu_P$ is called the \emph{Möbius function}.
\end{theorem}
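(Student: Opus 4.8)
The plan is to construct $\mu_P$ explicitly by the standard recursion on the incidence structure of $(P,\leq)$ and then verify directly that it inverts the zeta transformation. Throughout I work under the (implicit) assumption that $P$ is finite, so that both the sum defining $\zeta h$ and the double sums below are finite; this is all that is needed for the applications, where $P$ is always a finite lattice of flats. First I would set $\mu_P(\sigma,\rho) := 0$ whenever $\sigma \not\leq \rho$, put $\mu_P(\sigma,\sigma) := 1$, and for $\sigma < \rho$ define
\[ \mu_P(\sigma,\rho) := -\sum_{\sigma \leq \tau < \rho} \mu_P(\sigma,\tau) \,. \]
Since this expresses $\mu_P(\sigma,\rho)$ in terms of values $\mu_P(\sigma,\tau)$ on strictly smaller intervals $[\sigma,\tau] \subsetneq [\sigma,\rho]$, an induction on $|[\sigma,\rho]|$ shows the recursion is well defined; and since we only ever add and subtract integers starting from $1$, this yields $\mu_P : P \times P \to \Z$ as required by the statement.

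The heart of the argument is the single identity
\[ \sum_{\sigma \leq \tau \leq \rho} \mu_P(\sigma,\tau) = \delta_{\sigma,\rho} \,, \]
where $\delta_{\sigma,\rho}$ is $1$ if $\sigma = \rho$ and $0$ otherwise. This is immediate from the definition: for $\rho = \sigma$ the sum is $\mu_P(\sigma,\sigma)=1$, and for $\rho > \sigma$ the displayed recursion says precisely that the sum vanishes. With this in hand I would substitute the definition of $\zeta h$ into the claimed right-hand side, swap the order of the (finite) double sum, and collapse the inner sum using the identity:
\[ \sum_{\rho \geq \sigma} \mu_P(\sigma,\rho)\, \zeta h(\rho) = \sum_{\rho \geq \sigma} \mu_P(\sigma,\rho) \sum_{\tau \geq \rho} h(\tau) = \sum_{\tau \geq \sigma} h(\tau) \sum_{\sigma \leq \rho \leq \tau} \mu_P(\sigma,\rho) = \sum_{\tau \geq \sigma} h(\tau)\, \delta_{\sigma,\tau} = h(\sigma) \,. \]
The reindexing in the middle step is justified because the pairs $(\rho,\tau)$ ranged over on the left are exactly those with $\sigma \leq \rho \leq \tau$, and finiteness of $P$ legitimizes the interchange.

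The only genuinely delicate point is the well-definedness of the recursion and of the sums, that is, the finiteness hypothesis. If $P$ were merely locally finite one would additionally need every principal up-set $\{\rho : \rho \geq \sigma\}$ to be finite for $\zeta h$ to be defined, together with finiteness of intervals to run the recursion; since the posets arising here are finite, none of this is an obstacle and I would simply record the finiteness assumption at the outset. I would not belabor uniqueness, as the theorem only asserts existence, but it follows from the same computation: $\mu_P$ is forced to be the inverse of $\zeta$ in the incidence algebra of $P$, so any function satisfying the inversion for all $h$ must agree with the $\mu_P$ defined above.
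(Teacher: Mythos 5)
Your proof is correct and is the standard argument: the paper itself gives no proof of this theorem, citing Stanley and Rota, and your construction of $\mu_P$ by the interval recursion followed by the double-sum interchange is exactly the classical proof found in those references. Your explicit attention to finiteness (which the paper leaves implicit, and which indeed holds in all of its applications, since the posets there are lattices of flats of graphic matroids on finite graphs) is a sound and welcome addition rather than a deviation.
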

The following identity is due to Lov{\'a}sz \cite{lovasz}:
\[ \#\homs(H/\sigma,G) = \sum_{\rho \geq \sigma} \#\embs(H/\rho,G) \,,\]
where $\sigma$ and $\rho$ are partitions of vertices of $H$ and $\geq$ is the partition lattice of $H$. Now Möbius inversion yields the following identity \cite{lovasz}:
\[ \#\embs(H,G) = \sum_{\rho \geq \emptyset} \mu(\emptyset,\rho) \cdot  \#\homs(H/\rho,G) \,,\]
where $\mu$ is the Möbius function over the partition lattice. Therefore, for every class of graphs $C$, there is a family of functions with finite support $\mathcal{A}$ such that $\#\embs(C)$ and $\#\homs(\mathcal{A})$ are the same problems. Now Curticapean, Dell and Marx show that $C$ has unbounded matching number if and only if $\mathcal{A}$ has unbounded treewidth. The critical point in this proof was to show that the sign of $\mu(\emptyset,\rho)$ only depends on the number of blocks of $\rho$, which implies that for two isomorphic graphs $H_1$ and $H_2$, the terms $\#\homs(H_1,G)$ and $\#\homs(H_2,G)$ have the same sign in the above identity and therefore do not cancel out in the homomorphism basis. As there is a closed form for $\mu(\emptyset,\rho)$\footnote{Here it is crucial that $\mu$ is the Möbius function over the (complete) partition lattice.}, the information about the sign could easily be extracted. \\
The motivation of this work is the question whether this can be made more general and it turns out that a corollary of Rota's NBC Theorem \cite{rota1964foundations} (see also \cite{blass1997mobius}) captures exactly what we need:
\begin{theorem}[See e.g. Theorem~4 in \cite{rota1964foundations}]
\label{thm:rota}
Let $L$ be a geometric lattice with unique minimal element $\bot$ and let $\rho$ be an element of $L$. Then it holds that \[\mathsf{sgn}(\mu_L(\bot,\rho)) = (-1)^{\mathsf{rk}(\rho)} \,.\]
\end{theorem}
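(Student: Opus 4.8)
The plan is to prove the sharper quantitative statement that $(-1)^{\mathsf{rk}(\rho)}\mu_L(\bot,\rho) > 0$ for every $\rho \in L$; the claimed sign formula is then immediate, and as a bonus this shows $\mu_L(\bot,\rho)\neq 0$. Since $L$ is geometric I would write $L = L(M)$ for a matroid $M$, so that $\bot = \mathsf{cl}(\emptyset)$, the join is $\rho \vee \sigma = \mathsf{cl}(\rho \cup \sigma)$, the meet is $\rho \wedge \sigma = \rho \cap \sigma$, and $\mathsf{rk}$ is the matroid rank, which is submodular and strictly increasing along the flats (\cite{oxley}). I would then induct on $\mathsf{rk}(\rho)$. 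The base cases are direct: if $\mathsf{rk}(\rho)=0$ then $\rho=\bot$ and $\mu_L(\bot,\bot)=1=(-1)^0$, while if $\mathsf{rk}(\rho)=1$ then $\rho$ is an atom and the defining recursion of Theorem~\ref{thm:möbi} gives $\mu_L(\bot,\rho) = -\mu_L(\bot,\bot) = -1 = (-1)^1$.

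For the inductive step ($\mathsf{rk}(\rho)\ge 2$) I would work inside the interval $[\bot,\rho]$, which is again a geometric lattice with top element $\rho$, and fix any atom $a \le \rho$. The main tool is Weisner's theorem (\cite{stanley2011enumerative}), which for any $a > \bot$ states
\[ \sum_{\substack{x \le \rho \\ x \vee a = \rho}} \mu_L(\bot,x) = 0 \,. \]
Isolating the term $x=\rho$ yields
\[ \mu_L(\bot,\rho) = -\sum_{\substack{x < \rho \\ x \vee a = \rho}} \mu_L(\bot,x) \,. \]
I would then show that every $x$ in this sum has $\mathsf{rk}(x)=\mathsf{rk}(\rho)-1$: on the one hand $x<\rho$ forces $\mathsf{rk}(x)\le\mathsf{rk}(\rho)-1$ by strict monotonicity; on the other hand $x\vee a = \rho \neq x$ shows $a\not\le x$, hence $x\wedge a = \bot$ (as $a$ is an atom), and submodularity gives $\mathsf{rk}(\rho)=\mathsf{rk}(x\vee a)\le \mathsf{rk}(x)+\mathsf{rk}(a)-\mathsf{rk}(x\wedge a)=\mathsf{rk}(x)+1$, so $\mathsf{rk}(x)\ge\mathsf{rk}(\rho)-1$.

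It remains to guarantee that the sum is nonempty, since an empty sum would force $\mu_L(\bot,\rho)=0$ and the sign statement would fail. Here I would exhibit an explicit relative complement of $a$ in $[\bot,\rho]$: choosing a ground-set element $e$ with $\mathsf{cl}(\{e\})=a$ and extending $\{e\}$ to a basis $\{e,f_1,\dots,f_{r-1}\}$ of $\rho$ (where $r=\mathsf{rk}(\rho)$), the flat $y:=\mathsf{cl}(\{f_1,\dots,f_{r-1}\})$ satisfies $\mathsf{rk}(y)=r-1$, $y\vee a = \rho$, and $y\neq\rho$, so $y$ is a genuine term. By the induction hypothesis each $\mu_L(\bot,x)$ in the sum is nonzero of sign $(-1)^{\mathsf{rk}(x)}=(-1)^{\mathsf{rk}(\rho)-1}$, so the nonempty sum has sign $(-1)^{\mathsf{rk}(\rho)-1}$, and negating it gives $\mathsf{sgn}(\mu_L(\bot,\rho)) = (-1)^{\mathsf{rk}(\rho)}$, closing the induction.

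The main obstacle is precisely the inductive step: the argument succeeds only because all surviving terms in the Weisner sum lie in a single rank level, so that their Möbius values share one sign and cannot cancel. This is exactly where semimodularity (submodularity of the matroid rank) of the geometric lattice is indispensable, and it is what fails for arbitrary lattices. The secondary subtlety is the nonemptiness of the sum, which I would settle by the explicit relative complement above, since the statement genuinely relies on $\mu_L(\bot,\rho)\neq 0$.
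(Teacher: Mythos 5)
Your proposal is correct, but note that the paper does not prove Theorem~\ref{thm:rota} at all: it imports the statement from Rota's paper \cite{rota1964foundations}, where it is obtained as a corollary of the NBC (cross-cut) theorem, which expresses $|\mu_L(\bot,\rho)|$ as the number of non-broken-circuit sets and thereby gives the sign alternation for free. Your argument is a genuinely different and self-contained route: induction on rank via Weisner's theorem, with semimodularity forcing every surviving term of the sum $\mu_L(\bot,\rho) = -\sum_{x<\rho,\, x\vee a=\rho}\mu_L(\bot,x)$ into the single rank level $\mathsf{rk}(\rho)-1$ (so the induction hypothesis makes all terms share the sign $(-1)^{\mathsf{rk}(\rho)-1}$ with no cancellation), and with the explicit basis-extension complement $y=\mathsf{cl}(\{f_1,\dots,f_{r-1}\})$ certifying that the sum is nonempty, hence $\mu_L(\bot,\rho)\neq 0$. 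Both of these points are exactly where geometricity enters and where the argument breaks for general lattices --- e.g.\ in a chain $\bot < a < \rho$ the Weisner sum is empty and indeed $\mu(\bot,\rho)=0$ --- which is consistent with the paper's closing remark that induced subgraphs and edge-injective homomorphisms lead to non-geometric lattices in which isomorphic terms carry different signs. The trade-off between the two routes: Rota's NBC theorem is a stronger enumerative result (a combinatorial interpretation of the magnitude of $\mu$), while your Weisner induction proves precisely the sign statement the paper needs, from nothing more than semimodularity, strict monotonicity of rank on flats, and relative complementation, all of which you invoke correctly.
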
 

In the following we will show that combining Rota's Theorem and the dichotomy for counting linear combinations of homomorphisms yields complete complexity classifications for the problems of counting those restricted homomorphisms that induce a Möbius inversion over the lattice of a graphic matroid, which are known to be geometric, when transformed into the homomorphism basis. Those include embeddings as well as locally injective homomorphisms.

\section{Graphically restricted homomorphisms}
\label{sec:graphic_homs}
In the following we write $\emptyset$ for the minimal element of a matroid lattice.

\begin{definition}
	\label{dfn:graphic_homs}
A \emph{graphical restriction} is a computable mapping $\tau$ that maps a graph $H \in \graphs$ to a graph $H' \in \graphs$ such that
$V(H)=V(H')$, that is, $\tau$ only modifies edges of $H$. We denote the set of all graphical restrictions as $\Tau$. Given graphs $H$ and $G$ and a graphical restriction $\tau$, we define the set of \emph{graphically restricted homomorphisms} w.r.t. $\tau$ from $H$ to $G$ as follows:
\[ \tauhoms(H,G) := \bs{\varphi \in \homs(H,G) ~ |~\forall u,v \in V(H): \bs{u,v} \in E(\tau(H)) \Rightarrow \varphi(u) \neq \varphi(v) } \,.\]
Given a recursively enumerable class of graphs $C \subseteq \graphs$, we define the parameterized counting problem $\#\tauhoms(C)$ as follows: Given a graph $H \in C$ and a graph $G \in \graphs$, we parameterize by $|V(H)|$ and wish to compute $\#\tauhoms(H,G)$.
\end{definition}

Assume for example that $\tau_\mathsf{clique}$ maps a graph $H$ to the complete graph with vertices $V(H)$. Then one can easily verify that $\homs_{\tau_\mathsf{clique}}(H,G) = \embs(H,G)$.

The following lemma is an application of Möbius inversion (and slightly generalizes \cite{lovasz}).

\begin{lemma}
\label{lem:main_graphic}
Let $\tau$ be a graphical restriction. Then for all graphs $H \in \loopG$ and $G \in \graphs$ it holds that
\begin{equation}
\#\tauhoms(H,G)=\sum_{\rho \geq \emptyset} \mu(\emptyset,\rho)\cdot \#\homs(H/\rho,G) \,,
\end{equation}
where $\leq$ and $\mu$ are the relation and the Möbius function of the lattice $L(M(\tau(H)))$.
\end{lemma}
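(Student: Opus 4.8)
The plan is to recognize the claimed identity as an instance of Möbius inversion (Theorem~\ref{thm:möbi}) over the geometric lattice $L = L(M(\tau(H)))$, applied at its bottom element $\emptyset$. Concretely, writing $h(\rho) := \#\homs(H/\rho, G)$, the right-hand side is exactly $\sum_{\rho \geq \emptyset} \mu(\emptyset, \rho)\, h(\rho)$, so by Theorem~\ref{thm:möbi} (applied to the function $g$ below) it equals $g(\emptyset)$ for the unique function $g$ on flats with $h = \zeta g$, i.e.\ with $h(\sigma) = \sum_{\rho \geq \sigma} g(\rho)$ for every flat $\sigma$ (Definition~\ref{def:zeta_trans}). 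Thus it suffices to exhibit such a $g$ and to verify that $g(\emptyset) = \#\tauhoms(H,G)$.

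First I would attach to every homomorphism a flat. Given $\varphi \in \homs(H,G)$, let $E_\varphi := \bs{e = \bs{u,v} \in E(\tau(H)) ~|~ \varphi(u) = \varphi(v)}$ be its set of collapsed $\tau$-edges. I claim $E_\varphi$ is a flat of $M(\tau(H))$: if two vertices lie in the same connected component of $\tau(H)|_{E_\varphi}$ then they are joined by a path of collapsed edges and hence receive the same $\varphi$-image, so every $\tau$-edge between them is already collapsed; this is exactly closedness, $\mathsf{cl}(E_\varphi) = E_\varphi$. I would then set $g(\rho) := \#\bs{\varphi \in \homs(H,G) ~|~ E_\varphi = \rho}$, so that $\sum_{\rho \geq \sigma} g(\rho) = \#\bs{\varphi ~|~ E_\varphi \supseteq \sigma}$, since for flats the order coincides with inclusion of edge sets and $E_\varphi$ is itself always a flat. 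Because $\tau(H) \in \graphs$ is loopless, $M(\tau(H))$ has no loops and its bottom flat is the empty set; hence $g(\emptyset)$ counts precisely the homomorphisms with no collapsed $\tau$-edge, which is $\#\tauhoms(H,G)$ by Definition~\ref{dfn:graphic_homs}.

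It remains to identify the zeta transform $h(\sigma) = \sum_{\rho \geq \sigma} g(\rho) = \#\bs{\varphi ~|~ E_\varphi \supseteq \sigma}$ with $\#\homs(H/\sigma, G)$. For this I would show that $E_\varphi \supseteq \sigma$ holds if and only if $\varphi$ is constant on every block of $\sigma$, where $\sigma$ is viewed, as in the preliminaries, as the partition of $V(H)$ into connected components of $\tau(H)|_\sigma$: the forward direction follows because each block is spanned by $\sigma$-edges, all of which are collapsed, and the converse is immediate from the definition of $E_\varphi$. Homomorphisms $H \to G$ that are constant on the blocks of $\sigma$ are in canonical bijection with homomorphisms $H/\sigma \to G$ via the quotient map, so $h(\sigma) = \#\homs(H/\sigma, G)$. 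Substituting this together with $g(\emptyset) = \#\tauhoms(H,G)$ into Möbius inversion at $\sigma = \emptyset$ yields the lemma.

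The hard part, and the only place where the matroid structure (rather than the full partition lattice) is genuinely used, is the verification that $E_\varphi$ is a flat and that its zeta transform lands exactly on the contraction counts; the remainder is the formal machinery of Theorem~\ref{thm:möbi}. I expect the loop bookkeeping between $\loopG$, where $H$ lives and may carry selfloops, and the loopless $\tau(H) \in \graphs$ to be the main source of fiddly case distinctions: a block of $\sigma$ carrying an internal $H$-edge produces a selfloop in $H/\sigma$, but since $G$ is loopless this forces both sides of the bijection to vanish consistently, so the identity survives unchanged.
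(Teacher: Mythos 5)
Your proposal is correct and takes essentially the same route as the paper: both prove the zeta-transform identity $\#\homs(H/\sigma,G)=\sum_{\rho \geq \sigma} g(\rho)$ over the lattice $L(M(\tau(H)))$ by classifying homomorphisms according to the flat of collapsed $\tau$-edges (your flatness argument for $E_\varphi$ is exactly the paper's observation that refining a flat by a collapsed $\tau$-edge and taking closures stays in the lattice), and then apply Möbius inversion at the bottom element. The only difference is cosmetic: your $g(\rho)$ counts homomorphisms from $H$ itself with $E_\varphi=\rho$, while the paper counts the canonically bijective set $\homs(H/\rho,G)[\tau]$ of homomorphisms from the quotient graph that collapse no further $\tau$-edges.
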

\begin{proof}
Let $\tau$ and $H$ be fixed and let $\homs(H/\rho,G)[\tau]$ be the set of all homomorphisms $\varphi \in \homs(H/\rho,G)$ such that $\bs{u,v} \in E(\tau(H))$ and $[u]\neq[v]$ imply that $\varphi([u]) \neq \varphi([v])$. More precisely: 
\begin{align*}
~&\homs(H/\rho,G)[\tau] := \\
~& \bs{\varphi \in \homs(H/\rho,G) ~|~ \forall u,v \in V(H): \bs{u,v} \in E(\tau(H)) \wedge [u]\neq[v] \Rightarrow \varphi([u]) \neq \varphi([v]) } \,.
\end{align*}
We will first prove the following identities:
\begin{claim}
\label{clm:minimal_id}
For all $G \in \graphs$ it holds that
\[ \#\homs(H/\emptyset,G)[\tau] = \#\tauhoms(H,G) \,. \]
\end{claim}
\begin{proof}
Every block in $H/\emptyset$ is a singleton and $H \cong H/\emptyset$. Now the identity trivially follows from Definition~\ref{dfn:graphic_homs}. 
\end{proof}

\begin{claim}
\label{clm:zeta_matroid}
For all $G \in \graphs$ and $\sigma \in L(M(\tau(H)))$ it holds that
\begin{equation}
\#\homs(H/\sigma,G) = \sum_{\rho \geq \sigma} \#\homs(H/\rho,G)[\tau]
\end{equation}
\end{claim}
\begin{proof}
Let $[v]$ be the block of $v$ in $H/\sigma$.
We define an equivalence relation $\sim_\tau$ over $\homs(H/\sigma,G)$ as follows:
\[ \varphi \sim_\tau \psi :\Leftrightarrow \forall \bs{u,v}\in E(\tau(H)): \varphi([u])=\varphi([v]) \Leftrightarrow \psi([u])=\psi([v]) \,. \]
We write $[\varphi]_\tau$ for the equivalence class of $\varphi$ and let $H/{[\varphi]_\tau}$ be the graph obtained from $H/\sigma$ by further contracting different blocks $[u]$ and $[v]$ whenever $\bs{u,v}\in E(\tau(H))$ and $\varphi([u])=\varphi([v])$ (note that this is well-defined by the definition of $\sim_\tau$). Now consider $\sigma$ in the graphical matroid $M(\tau(H))$. Every block $[v]$ corresponds to a connected component of the flat given by $\sigma$. Now contracting different blocks $[u]$ and $[v]$ for $\bs{u,v}$ in $E(\tau(H))$ is a refinement of $\sigma$ obtained by adding the edge $\bs{u,v}$ in $M(\tau(H))$ and taking the closure. Therefore the equivalence classes of $\sim_\tau$ and the refinements of $\sigma$ in the matroid lattice correspond bijectively and we write $[\rho]_\tau$ for the equivalence class corresponding to $\rho$. It remains to show that for every $\rho \geq \sigma$ we have that
\begin{equation}
\label{eqn:number_equi}
 |[\rho]_\tau| = \#\homs(H/\rho,G)[\tau] \,.
\end{equation}
This can be proven by constructing a bijection $b$. We write $[v]_\sigma$ for blocks in $H/\sigma$ and $[v]_\rho$ for blocks in $H/\rho$. On input $\varphi \in [\rho]_\tau$, $b$ outputs the homomorphism in $\homs(H/\rho,G)[\tau]$ that maps a block $[v]_\rho$ to $\varphi([v]_\sigma)$. This is well-defined as $\varphi$ maps blocks $[u]_\sigma$ and $[v]_\sigma$ to the same vertex in $G$ if and only if they are subsumed by a common block in $H/\rho$ (recall that $\rho \geq \sigma$ in the matroid lattice). On the other hand we can construct a mapping $b'$ that given $\psi \in \homs(H/\rho,G)[\tau]$ outputs the homomorphism in $[\rho]_\tau$ that maps a block $[v]_\sigma$ to the image of the block $[v]_\rho$ (that subsumes $[v]_\sigma$) according to $\psi$. Now $b \circ b' = \mathsf{id}_{\homs(H/\rho,G)[\tau]}$ and $b' \circ b = \mathsf{id}_{[\rho]_\tau}$. Consequently, $b$ is a bijection and Equation~\ref{eqn:number_equi} holds.
Now we have
\begin{align*}
~&\#\homs(H/\sigma,G) \\~&= \left|\dot{\bigcup}_{[\varphi]_\tau \in \homs(H/\sigma,G)/\sim_\tau } [\varphi]_\tau \right| = \left|\dot{\bigcup}_{\rho \geq \sigma } [\rho]_\tau \right|= \sum_{\rho \geq \sigma} \left|[\rho]_\tau \right| =\sum_{\rho \geq \sigma} \#\homs(H/\rho,G)[\tau]
\end{align*}
which proves the claim.
\end{proof}
Now Claim~\ref{clm:zeta_matroid} is a zeta transform over the matroid lattice of $M(\tau(H))$. By Möbius inversion (Theorem~\ref{thm:möbi}) we obtain that
\[ \#\homs(H/\emptyset,G)[\tau] = \sum_{\rho \geq \emptyset} \mu(\emptyset,\rho) \cdot \#\homs(H/\rho,G) \,,\]
and hence, by Claim~\ref{clm:minimal_id}, 
\[ \#\tauhoms(H,G) = \sum_{\rho \geq \emptyset} \mu(\emptyset,\rho) \cdot \#\homs(H/\rho,G) \,. \]
\end{proof}

Intuitively, we will now show that counting graphically restricted homomorphisms from $H$ to $G$ is hard if we can ''glue'' vertices of $H$ together along edges of $\tau(H)$ such that the resulting graph has no selfloops and high treewidth. We will capture this intuition formally:

\begin{definition}
	Let $H \in \graphs$ be a graph and let $\tau$ be a graphical restriction. A graph $H'\in \loopG$ obtained from $H$ by contracting pairs of vertices $u$ and $v$ such that $\bs{u,v} \in E(\tau(H))$ and deleting multiedges (but keeping selfloops) is called a $\tau$-\emph{contraction} of $H$. If additionally $H' \in \graphs$, that is, the contraction did not yield selfloops, we call $H'$ a $\tau$-\emph{minor} of $H$. We denote the set of all $\tau$-minors of $H$ as $\taumins(H)$ and given a class of graphs $C \subseteq \graphs$ we denote the set of all $\tau$-minors of all graphs in $C$ as $\taumins(C)$.
\end{definition}

Finally, we can classify the complexity of counting graphically restricted homomorphisms along the treewidth of their $\tau$-minors:

\begin{theorem}[Theorem~\ref{thm:int_graphic_dicho} and Theorem~\ref{thm:intro_graphic_algo}, restated]
	\label{thm:graphic_dichotomy}
Let $\tau$ be a graphical restriction and let $C \subseteq \graphs$ be a recursively enumerable class of graphs. Then $\#\tauhoms(C)$ is FPT if $\taumins(C)$ has bounded treewidth and $\#\W$-hard otherwise. Furthermore, given $H,G \in \graphs$, there exists a deterministic algorithm that computes $\#\tauhoms(H,G)$ in time 
\[ g(|V(H)|) \cdot |V(G)|^{\mathsf{tw}(\taumins(H)) + 1} \,, \] where $g$ is a computable function.
\end{theorem}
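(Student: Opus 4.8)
The plan is to prove both halves of the theorem by combining Lemma~\ref{lem:main_graphic}, which rewrites $\#\tauhoms(H,G)$ as a linear combination of homomorphism counts, with Theorem~\ref{thm:lincombs_homs} on the complexity of such linear combinations. The starting point is the identity
\[
\#\tauhoms(H,G) = \sum_{\rho \geq \emptyset} \mu(\emptyset,\rho) \cdot \#\homs(H/\rho,G) \,,
\]
where the sum ranges over the lattice of flats $L(M(\tau(H)))$. The difficulty is that this is not yet in the clean form $\sum_{H' \in \mathrm{supp}(a)} a(H') \cdot \#\homs(H',G)$ required to invoke Theorem~\ref{thm:lincombs_homs}, because several distinct flats $\rho$ may yield isomorphic graphs $H/\rho$. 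First I would collect terms by isomorphism type: for each isomorphism class of graphs $H'$ appearing among the $H/\rho$, define the coefficient $a(H') = \sum_{\rho : H/\rho \cong H'} \mu(\emptyset,\rho)$. The key worry is cancellation --- $a(H')$ could vanish even when $H'$ occurs, wiping a high-treewidth graph out of the support and breaking the hardness direction.

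**Ruling out cancellation via Rota's Theorem.**

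This is where I expect the main obstacle to lie, and it is resolved exactly as advertised in the techniques section. The claim I would establish is that whenever $H/\rho \cong H/\sigma$ for flats $\rho,\sigma \in L(M(\tau(H)))$, the two Möbius values $\mu(\emptyset,\rho)$ and $\mu(\emptyset,\sigma)$ carry the same sign. Since $H/\rho$ and $H/\sigma$ are isomorphic, they have the same number of vertices, which equals the number of blocks of $\rho$ and of $\sigma$ respectively; hence $\rho$ and $\sigma$ have equally many blocks, and Lemma~\ref{lem:same_blocks_same_ranks} gives $\mathsf{rk}(\rho) = \mathsf{rk}(\sigma)$. Because $L(M(\tau(H)))$ is the lattice of flats of a (graphic, hence geometric) matroid, Theorem~\ref{thm:rota} yields $\mathsf{sgn}(\mu(\emptyset,\rho)) = (-1)^{\mathsf{rk}(\rho)} = (-1)^{\mathsf{rk}(\sigma)} = \mathsf{sgn}(\mu(\emptyset,\sigma))$. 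Consequently every summand contributing to a fixed $a(H')$ has the same nonzero sign, so $a(H')$ is a sum of terms of one sign and therefore $a(H') \neq 0$. This means $\mathrm{supp}(a)$ is exactly the set of isomorphism types of the $\tau$-contractions of $H$, and among graphs in $\graphs$ (no selfloops) this is precisely $\taumins(H)$.

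**Deducing the dichotomy and the algorithm.**

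With no cancellation established, the reduction to Theorem~\ref{thm:lincombs_homs} is immediate. For a class $C$, the problem $\#\tauhoms(C)$ is interreducible (via parameterized Turing reductions, as in the treatment of $\#\embs(C)$) with $\#\homs(\mathcal{A})$ where $\mathcal{A}$ collects the coefficient functions $a$ arising from the graphs $H \in C$; one must note that contractions only shrink the graph, so $|V(H/\rho)| \le |V(H)|$ keeps the parameter bounded, and the coefficients $a(H')$ are computable. The graphs appearing in the support of these functions are exactly those in $\taumins(C)$ --- here I would invoke the no-cancellation claim to ensure no high-treewidth $\tau$-minor silently drops out. Hence $\mathcal{A}$ has bounded treewidth if and only if $\taumins(C)$ does, and Theorem~\ref{thm:lincombs_homs} delivers the FPT-versus-$\#\W$-hard dichotomy verbatim. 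For the quantitative algorithm, I would not go through the generic black box but instead bound each term directly: each $\#\homs(H/\rho,G)$ is computable in time $g'(|V(H)|) \cdot |V(G)|^{\mathsf{tw}(H/\rho)+1}$ by the standard treewidth-based dynamic program for counting homomorphisms, the number of flats $\rho$ and the Möbius values depend only on $H$, and $\mathsf{tw}(H/\rho) \le \mathsf{tw}(\taumins(H))$ for every $\rho$ (after absorbing any contraction that creates a selfloop, whose contribution is handled separately). Summing over all $\rho$ with the precomputed coefficients yields the claimed running time $g(|V(H)|) \cdot |V(G)|^{\mathsf{tw}(\taumins(H))+1}$.
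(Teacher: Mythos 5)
Your proposal is correct and follows essentially the same route as the paper's proof: start from Lemma~\ref{lem:main_graphic}, collect terms by isomorphism type, use Rota's Theorem (Theorem~\ref{thm:rota}) together with the equal-number-of-blocks/equal-rank observation to rule out cancellation, discard the selfloop contractions because they contribute zero homomorphisms into a simple host, and invoke Theorem~\ref{thm:lincombs_homs} for the FPT/$\#\W$-hard dichotomy. The only cosmetic difference is in the quantitative bound, where the paper cites Lemma~3.5 of \cite{hombasis2017} as a black box while you unfold it into a term-by-term treewidth dynamic program; both yield the same running time.
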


\begin{proof}
By Lemma~\ref{lem:main_graphic} we have that 
\[ \#\tauhoms(H,G) = \sum_{\rho \geq \emptyset} \mu(\emptyset,\rho) \cdot \#\homs(H/\rho,G) \,. \]
Now, as $G$ has no selfloops, a term $\#\homs(H/\rho,G)$ is zero whenever $H/\rho$ has a selfloop. Consequently, for every non-zero term $\#\homs(H/\rho,G)$, it holds that $H/\rho \in \taumins(H)$. Therefore, by Lemma 3.5 in \cite{hombasis2017}, we obtain an algorithm computing $\#\tauhoms(H,G)$ in time
\[ g(|V(H)|) \cdot |V(G)|^{\mathsf{tw}(\taumins(H)) + 1} \,, \]
for a computable function $g$. This immediately implies that the problem $\#\tauhoms(C)$ is fixed-parameter tractable if $\taumins(C)$ has bounded treewidth. It remains to show that $\#\tauhoms(C)$ is $\#\W$-hard otherwise. By condensing all terms $\#\homs(H/\rho,G)$ and $\#\homs(H/\sigma,G)$ where $H/\rho$ and $H/\sigma$ are isomorphic, it follows that there exist coefficients $c_H[H']$ for every $H' \in \taumins(H)$ such that
\[ \#\tauhoms(H,G) = \sum_{H' \in \taumins(H)} c_H[H'] \cdot \#\homs(H',G) \,. \]
We will now show that none of the $c_H[H']$ is zero: It holds that
\begin{equation}
\label{eqn:coef_sum}
c_H[H'] = \sum_{\substack{\rho \geq \emptyset \\ H' \cong H/\rho}} \mu(\emptyset,\rho) \,.
\end{equation}
Consider $\rho$ and $\rho'$ such that $H/\rho \cong H/\rho' \cong H'$. It follows that \[\mathsf{rk}(\rho) = |V(H)| - c(H/\rho) =  |V(H)| - c(H') = |V(H)| - c(H/\rho') = \mathsf{rk}(\rho') \,. \] Now, as the lattice of $M(\tau(H))$ is geometric, we can apply the corollary of Rota's NBC Theorem (Theorem~\ref{thm:rota}) and obtain that $\mathsf{sgn}(\mu(\emptyset,\rho)) = (-1)^{\mathsf{rk}(\rho)} = (-1)^{\mathsf{rk}(\rho')} = \mathsf{sgn}(\mu(\emptyset,\rho'))$. Consequently every term in Equation~(\ref{eqn:coef_sum}) has the same sign and therefore $c_H[H']\neq 0$. Now we define a function $a_H:\graphs \rightarrow \mathbb{Q}$ as follows
\begin{equation*}
	a_H(F) := \begin{cases} c_H[F] &\text{if } F \in \taumins(H)\\
		0 &\text{otherwise}
		\end{cases}
\end{equation*}
and we set $\mathcal{A}_C = \bs{ a_H~|~H \in C}$. Then the problems $\#\homs(\mathcal{A}_C)$ and $\#\tauhoms(C)$ are equivalent w.r.t. parameterized turing reductions. As $c_H[H'] \neq 0$ for every $H' \in \taumins(H)$ it follows that $\mathcal{A}_C$ has unbounded treewidth if and only if $\taumins(C)$ has unbounded treewidth. We conclude by Theorem~\ref{thm:lincombs_homs} that $\#\tauhoms(C)$ is $\#\W$-hard in this case. 
\end{proof}

\section{Locally injective homomorphisms}

In this section we are going to apply the general dichotomy theorem to the concrete case of counting locally injective homomorphisms. A homomorphism $\varphi$ from $H$ to $G$ is \emph{locally injective} if for every $v \in V(H)$ it holds that $\varphi|_{N(v)}$ is injective. We denote $\lihoms(H,G)$ as the set of all locally injective homomorphisms from $H$ to $G$ and we define the corresponding counting problem $\#\lihoms(C)$ for a class of graphs $C \subseteq \graphs$ as follows: Given graphs $H \in C$ and $G \in \graphs$, compute $\#\lihoms(H,G)$. The parameter is $|V(H)|$. Locally injective homomorphisms have already been studied by Ne{\v{s}}et{\v{r}}il in 1971 \cite{nesetril} and were applied in the context of distance constrained labelings of graphs (see \cite{fiala} for an overview).
As well as subgraphs embeddings, locally injective homomorphisms are graphically restricted homomorphisms. 

\begin{lemma}
\label{lem:li_graphic}
Let $H \in \graphs$ be a graph and let $\tauli(H)=(V(H),E_{\mathsf{Li}}(H))$ be a graphical restriction defined as follows: $E_\mathsf{Li}(H) = \bs{ \bs{u,w} ~|~u \neq w \wedge \exists v: \bs{u,v},\bs{w,v} \in E(H)}$. Then for all $G \in \graphs$ it holds that $\taulihoms(H,G) = \lihoms(H,G)$.
\end{lemma}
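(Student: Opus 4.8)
The plan is to observe that both $\taulihoms(H,G)$ and $\lihoms(H,G)$ are by definition subsets of $\homs(H,G)$, so it suffices to show that a homomorphism $\varphi$ satisfies the defining condition of one set if and only if it satisfies the defining condition of the other. Both conditions are, once unfolded, statements about pairs of vertices of $H$ that share a common neighbor, so the argument is essentially a matching of the two definitions; I do not expect a genuine obstacle. I would prove the equality by establishing the two inclusions separately.

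First I would prove $\taulihoms(H,G) \subseteq \lihoms(H,G)$. Fix $\varphi \in \taulihoms(H,G)$ and an arbitrary vertex $v \in V(H)$; the goal is to show that $\varphi$ restricted to $N(v)$ is injective. To that end, take two distinct neighbors $u,w \in N(v)$. Then $\bs{u,v},\bs{w,v} \in E(H)$ and $u \neq w$, so by the definition of $E_{\mathsf{Li}}(H)$ we have $\bs{u,w} \in E(\tauli(H))$. The defining property of $\taulihoms$ then yields $\varphi(u) \neq \varphi(w)$, which is exactly what injectivity of $\varphi$ on $N(v)$ demands. As $v$ was arbitrary, $\varphi$ is locally injective.

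For the converse inclusion $\lihoms(H,G) \subseteq \taulihoms(H,G)$, I would fix a locally injective $\varphi$ and an arbitrary edge $\bs{u,w} \in E(\tauli(H))$ and show $\varphi(u) \neq \varphi(w)$. Unfolding the definition of $E_{\mathsf{Li}}(H)$, the presence of $\bs{u,w}$ witnesses that $u \neq w$ and that there is a vertex $v$ with $\bs{u,v},\bs{w,v} \in E(H)$, that is, $u,w \in N(v)$. Since $\varphi$ is locally injective, $\varphi$ is injective on $N(v)$, and because $u \neq w$ this gives $\varphi(u) \neq \varphi(w)$. Hence $\varphi \in \taulihoms(H,G)$, and combining the two inclusions establishes the claimed equality.

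The only point requiring a moment of care — and the closest thing to an obstacle — is the bookkeeping around the clause $u \neq w$ in the definition of $E_{\mathsf{Li}}(H)$: it guarantees that $\tauli$ never produces a selfloop-like constraint $\bs{u,u}$, so that $\tauli$ is indeed a well-defined graphical restriction with image in $\graphs$, and it ensures that the two injectivity conditions align precisely on \emph{distinct} pairs of common-neighbor vertices. Beyond checking this, the proof is a direct translation between the two definitions.
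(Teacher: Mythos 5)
Your proof is correct and matches the paper's argument in substance: both inclusions rest on the same two observations, namely that distinct common-neighbor pairs are exactly the edges of $\tauli(H)$ and that this makes the two injectivity conditions coincide. The only difference is presentational --- the paper phrases each inclusion as a proof by contradiction while you argue directly --- which does not change the approach.
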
  
\begin{proof}
We prove both inclusions. Let $\varphi \in \taulihoms(H,G)$ and assume that $\varphi$ is not locally injective. Then there exists $v \in V(H)$ such that $\varphi|_{N(v)}$ is not injective which implies that there are $u$ and $w$ such that $\bs{u,v}$ and $\bs{w,v}$ are edges in $H$ and $\varphi(u)=\varphi(w)$. By definition $\bs{u,w} \in E_\mathsf{Li}(H)=E(\tauli(H))$ and therefore $\varphi \notin \taulihoms(H,G)$ which is a contradiction.\\
Now let $\varphi \in \lihoms(H,G)$ and assume that $\varphi \notin \taulihoms(H,G)$. Then there exist $u,w~\in~V(H)$ such that $\bs{u,w} \in E(\tauli(H))$ and $\varphi(u) = \varphi(w)$. The former implies that $u$ and $w$ have a common neighbor $v$ in $H$ but this contradicts the fact that $\varphi$ is locally injective.
\end{proof}

We continue by stating the dichotomy for counting locally injective homomorphisms.
\begin{corollary}[Corollary~\ref{cor:int_li_dicho}, restated]
\label{cor:dichotomy_li}
Let $C \subseteq \graphs$ be a recursively enumerable class of graphs.\\ Then $\#\lihoms(C)$ is FPT if $\taulimins(C)$ has bounded treewidth and $\#\W$-hard otherwise. Furthermore, there exists a deterministic algorithm that computes $\#\lihoms(H,G)$ in time  
\[ g(|V(H)|) \cdot |V(G)|^{\mathsf{tw}(\taulimins(H)) + 1} \,, \] where $g$ is a computable function.
\end{corollary}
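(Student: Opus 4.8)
The plan is to obtain this corollary as an immediate instantiation of the general dichotomy, Theorem~\ref{thm:graphic_dichotomy}, at the specific graphical restriction $\tauli$. First I would verify that $\tauli$, as defined in Lemma~\ref{lem:li_graphic}, genuinely lies in $\Tau$: it sends $H$ to the graph $(V(H), E_\mathsf{Li}(H))$ on the same vertex set, and $E_\mathsf{Li}(H)$ --- the set of pairs of distinct vertices of $H$ sharing a common neighbor --- is clearly computable from $H$. Hence $\tauli$ is a computable graphical restriction in the sense of Definition~\ref{dfn:graphic_homs}, and in particular $\#\homs_{\tauli}(C)$ is a well-defined parameterized counting problem to which the general theorem applies.

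The key step is Lemma~\ref{lem:li_graphic}, which establishes $\taulihoms(H,G) = \lihoms(H,G)$ for all $H,G \in \graphs$. Since the counting problems $\#\lihoms(C)$ and $\#\homs_{\tauli}(C)$ share the same parameterization (by $|V(H)|$) and produce pointwise equal outputs, they are literally the same parameterized counting problem. I would state this identification explicitly, so that every complexity statement about $\#\homs_{\tauli}(C)$ transfers verbatim to $\#\lihoms(C)$.

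It then only remains to invoke Theorem~\ref{thm:graphic_dichotomy} with $\tau = \tauli$. This delivers both halves of the dichotomy at once --- fixed-parameter tractability when $\taulimins(C)$ has bounded treewidth and $\#\W$-hardness otherwise --- together with the claimed running time $g(|V(H)|) \cdot |V(G)|^{\mathsf{tw}(\taulimins(H)) + 1}$, since $\taulimins$ is exactly the notation $\taumins$ specialized to $\tau = \tauli$.

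I expect no substantial obstacle at this stage: all the genuine content lies in Lemma~\ref{lem:li_graphic} (the combinatorial equivalence of the two homomorphism notions) and in Theorem~\ref{thm:graphic_dichotomy} (the Möbius-inversion and Rota argument), both of which are already available. The only point requiring minor care is checking that the parameterizations and the $\tau$-minor notation align precisely, so that the black-box application of the general theorem is fully justified.
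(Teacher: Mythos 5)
Your proposal is correct and matches the paper's own proof exactly: the paper also derives the corollary as an immediate consequence of Lemma~\ref{lem:li_graphic} and Theorem~\ref{thm:graphic_dichotomy}. The extra checks you flag (computability of $\tauli$ and alignment of parameterizations) are sound diligence but are left implicit in the paper.
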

\begin{proof}
Follows immediately from Lemma~\ref{lem:li_graphic} and Theorem~\ref{thm:graphic_dichotomy}.
\end{proof}

We give an example for a hard instance of the problem: Let $W_k$ be the ``windmill'' graph of size $k$, i.e., the graph with vertices $a$, $v_1,\dots,v_k$, $w_1,\dots,w_k$ and edges $\bs{a,v_i}, \bs{v_i,w_i}$ and $\bs{w_i,a}$ for each $i \in [k]$. Furthermore we let $\mathcal{W}$ be the set of all $W_k$ for $k \in \mathbb{N}$.

\begin{corollary}
\label{cor:lihoms_windmills_hard}
	$\#\lihoms(\mathcal{W})$ is $\#\W$-hard.
\end{corollary}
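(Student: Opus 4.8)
The plan is to reduce the problem to the embedding dichotomy (Theorem~\ref{thm:emb-dichotomy}) by first computing the graphical restriction $\tauli(W_k)$ explicitly. The key observation is that in $W_k$ \emph{every} pair of distinct vertices shares a common neighbour: any two of the spoke vertices $v_i,v_j,w_i,w_j$ with $i\neq j$ are joined through the hub $a$; the two vertices $v_i,w_i$ of a single blade also have $a$ as a common neighbour; and the hub $a$ together with a spoke $v_i$ (resp.\ $w_i$) has the common neighbour $w_i$ (resp.\ $v_i$). Hence $E_\mathsf{Li}(W_k)$ contains every pair $\bs{u,w}$ with $u\neq w$, so $\tauli(W_k)$ is the complete graph on $V(W_k)$, which coincides with $\tau_\mathsf{clique}(W_k)$.

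Consequently, combining Lemma~\ref{lem:li_graphic} with the identity $\homs_{\tau_\mathsf{clique}}=\embs$ noted after Definition~\ref{dfn:graphic_homs}, we get $\#\lihoms(W_k,G)=\#\taulihoms(W_k,G)=\#\embs(W_k,G)$ for every $G$, so that $\#\lihoms(\mathcal{W})$ and $\#\embs(\mathcal{W})$ are literally the same parameterized problem. It then remains to observe that $\mathcal{W}$ has unbounded matching number: the $k$ edges $\bs{v_i,w_i}$, $i\in[k]$, are pairwise vertex-disjoint and thus form a matching of size $k$ in $W_k$. Applying Theorem~\ref{thm:emb-dichotomy} to the class $\mathcal{W}$ then shows that $\#\embs(\mathcal{W})$, and hence $\#\lihoms(\mathcal{W})$, is $\#\W$-hard.

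Alternatively, one can stay inside the general framework and invoke Corollary~\ref{cor:dichotomy_li} directly, which only asks us to show that $\taulimins(\mathcal{W})$ has unbounded treewidth. Because $\tauli(W_k)$ is complete, its $\tauli$-minors are precisely the quotients $W_k/\rho$ whose blocks are independent sets in $W_k$ (so that no selfloop is created); in particular $a$, being adjacent to every other vertex, is forced to be a singleton block. To exhibit a minor of large treewidth one fixes $m$, works in $W_{\binom{m}{2}}$, and for each pair $\bs{j,j'}$ dedicates one blade, placing its two spokes into two prescribed blocks $B_j$ and $B_{j'}$. Each $B_j$ then receives at most one vertex from each blade, hence stays independent, while $B_1,\dots,B_m$ become pairwise adjacent and each adjacent to the singleton $\bs{a}$; thus $W_{\binom{m}{2}}/\rho$ contains $K_{m+1}$ and has treewidth at least $m$.

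I would present the first route, as it is shortest. In either case the single substantive step is the explicit determination of $\tauli(W_k)$; once the restriction is seen to be the full clique, the statement follows immediately, either from the unbounded matching number of the windmills (route one) or from the clique-minor construction (route two), so I do not expect any genuine obstacle beyond verifying that common-neighbour computation.
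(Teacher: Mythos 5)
Your proof is correct, but your preferred route is genuinely different from the paper's. The paper never leaves its own framework: it shows that $\taulimins(\mathcal{W})$ has unbounded treewidth by proving a stronger structural fact, namely that \emph{every} graph $F$ with $k$ edges is a minor of some graph in $\taulimins(W_k)$ (identify each edge of $F$ with a blade edge $\bs{v_i,w_i}$, merge spokes through the hub $a$ to realize shared endpoints, then delete $a$), and then invokes Theorem~\ref{thm:graphic_dichotomy}. You instead observe that $\tauli(W_k)$ is the complete graph --- every two vertices of a windmill share a common neighbour --- so $\#\lihoms(W_k,G)=\#\embs(W_k,G)$, and then apply the Curticapean--Marx embedding dichotomy (Theorem~\ref{thm:emb-dichotomy}) using the unbounded matching number $\bs{v_i,w_i}$, $i\in[k]$. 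Your argument is shorter and the common-neighbour observation is clean, but it leans on the external embedding dichotomy, which is precisely the kind of result the paper's machinery is meant to re-derive; the paper's construction also pays for itself later, since the proof of the $\#\lirhoms(\mathcal{T})$ hardness result for $r\geq 2$ explicitly reuses the contraction scheme of this corollary. Your second route (building a $K_{m+1}$ quotient inside $W_{\binom{m}{2}}$ with independent blocks $B_1,\dots,B_m$ and the singleton $\bs{a}$) is correct as well and is closest in spirit to the paper's proof, just with a direct clique-minor construction in place of the ``all $k$-edge graphs'' argument.
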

\begin{proof}
	It turns out that \emph{every} graph consisting of $k$ edges is a minor of some graph in $\taulimins(W_k)$. To see this let $F$ be a graph with $k$ edges. We enumerate the edges of $F$ as $e_1,\dots,e_k$ and identify each edge $e_i=\bs{x_i,y_i}$ with the edge $\bs{v_i,w_i}$ in $W_k$. Now, whenever $x_i = x_j$ (or $x_i = y_j$) we contract vertices $v_i$ and $v_j$ (or $v_i$ and $w_j$, respectively) in $W_k$. As each $v_i$ and $v_j$ (or $v_i$ and $w_j$, respectively) have the common neighbor $a$, and furthermore $v_i$ and $w_i$ are never contracted, the resulting graph $W'_k$ is a $\tauli$-minor of $W_k$. If we now remove $a$ from $W'_k$ along with every edge incident to $a$, the resulting graph is isomorphic to $F$. Consequently, the treewidth of $\taulimins(\mathcal{W})$ is not bounded and hence $\#\lihoms(\mathcal{W})$ is $\#\W$-hard by Theorem~\ref{thm:graphic_dichotomy}.
\end{proof}

In contrast to embeddings where every FPT case is also polynomial time solvable, there are ``real'' FPT cases when it comes to locally injective homomorphisms. Let $\mathcal{T} \subseteq \graphs$ be the class of all trees. Counting locally injective homomorphisms from those graphs is fixed-parameter tractable:
\begin{corollary}
	\label{col:lihoms_tree_fpt}
$\#\lihoms(\mathcal{T})$ is FPT. In particular, there is a deterministic algorithm that computes $\#\lihoms(T,G)$ for a tree $T$ in time
\[ g(|V(T)|) \cdot |V(G)|^2 \,, \] where $g$ is a computable function.
\end{corollary}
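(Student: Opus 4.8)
The plan is to read off the running time from the algorithmic half of Corollary~\ref{cor:dichotomy_li}, which computes $\#\lihoms(T,G)$ in time $g(|V(T)|)\cdot|V(G)|^{\mathsf{tw}(\taulimins(T))+1}$, and then to control the exponent by showing that \emph{every $\tauli$-minor of a tree is a forest}. Since a graph has treewidth at most $1$ exactly when it is a forest, this gives $\mathsf{tw}(\taulimins(T))\le 1$ for every tree $T$, so the class $\taulimins(\mathcal{T})$ has bounded treewidth; substituting $\mathsf{tw}(\taulimins(T))\le 1$ into the running time yields the claimed bound $g(|V(T)|)\cdot|V(G)|^2$ and, in particular, fixed-parameter tractability.

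The core of the proof is the claim that if $F\in\graphs$ is a forest then every $\tauli$-minor of $F$ is again a forest. I would establish this by induction on the number of contracted $\tauli$-edges. Such a minor $H'$ arises by fixing a subset $S\subseteq E(\tauli(F))$ and contracting the connected components of $(V(F),S)$; I would enumerate $S$ as $\bs{u_1,v_1},\dots,\bs{u_k,v_k}$ and let $G_i$ denote the graph after the first $i$ contractions, so $G_0=F$ and $G_k=H'$. The decisive observation is that the $\tauli$-edge $\bs{u_{i+1},v_{i+1}}$ is present precisely because $u_{i+1}$ and $v_{i+1}$ have a common neighbor $w$ in the \emph{original} forest $F$. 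Hence in $G_i$ the block of $w$ is adjacent to the blocks of both $u_{i+1}$ and $v_{i+1}$, and these two blocks are distinct and non-adjacent in $G_i$ --- unless $w$ shares a block with one of them, or the two blocks are already joined by an edge, either of which contracts an $F$-edge into a selfloop and forces $H'\notin\graphs$, contrary to $H'$ being a minor. (If $u_{i+1}$ and $v_{i+1}$ already lie in the same block the step is vacuous.) So at every step we merge two distinct, non-adjacent vertices of $G_i$ sharing a common neighbor.

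It remains to prove the one-step lemma: merging two non-adjacent vertices $x,y$ with a common neighbor $z$ in a forest produces a forest. Here I would use that in a forest the path from $x$ to $y$ is unique, and since $x-z-y$ is such a path it has length exactly $2$; therefore identifying $x$ and $y$ creates only a pair of parallel edges through $z$, which collapse to a single edge upon multiedge deletion, and no genuine cycle is formed. A one-line cycle-rank check confirms this: the simple quotient has one fewer vertex and one fewer edge than $F$ while preserving the number of connected components, so its first Betti number remains $0$. Applying this lemma to $G_i$ (a forest by induction) yields that $G_{i+1}$ is a forest, and the induction closes with $G_k=H'$.

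The step I expect to be the main obstacle is the inductive bookkeeping: the contractions are prescribed by distance-two relations in the fixed tree $T$, whereas the one-step lemma has to be applied inside the already contracted graph $G_i$. The bridge is the transfer of the common neighbor $w$ from $T$ to $G_i$, together with the fact that every way this transfer could fail --- $w$ sharing a block with an endpoint, or the two endpoints becoming adjacent --- produces a selfloop and hence lies outside $\taulimins(T)$. Once this transfer is justified, the invocation of Corollary~\ref{cor:dichotomy_li} is immediate.
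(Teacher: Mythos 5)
Your proposal is correct and takes essentially the same route as the paper: invoke the algorithmic half of Corollary~\ref{cor:dichotomy_li} and show that every $\tauli$-minor of a tree has treewidth at most $1$, using the fact that two vertices with a common neighbor $v$ in a tree are joined by the unique path through $v$, so contracting them (with multiedge deletion) creates no cycle. The only difference is that you spell out the induction over successive contractions and the selfloop cases, which the paper's single-contraction argument leaves implicit.
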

\begin{proof}
According to Corollary~\ref{cor:dichotomy_li} we only need to show that $\taumins(\mathcal{T})$ has treewidth $1$. Indeed, every $\tauli$-minor of a tree is again a tree, and has therefore treewidth $1$. To see this, consider a pair of vertices $u$ and $w$ that have a common neighbor $v$ in a tree $T \in \mathcal{T}$. Then $(u,v,w)$ is the only path between $u$ and $w$ and consequently contracting $u$ and $w$ to a single vertex will not create a cycle in the resulting graph (recall that we delete multiedges).
\end{proof}
On the other hand $\#\lihoms(\mathcal{T})$ is unlikely to have a polynomial time algorithm.
\begin{lemma}
\label{lem:li_trees_hard}
$\#\lihoms(\mathcal{T})$ is $\sharpP$-hard.
\end{lemma}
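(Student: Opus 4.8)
The plan is to prove $\#\P$-hardness of $\#\lihoms(\mathcal{T})$ by reducing from Lemma~\ref{lem:into_classical_trees}, the $\#\P$-hardness of counting subtree isomorphisms between trees. The crucial bridge is the well-known structural fact alluded to in the paper's techniques section: a locally injective homomorphism from a tree into a tree is necessarily an embedding. First I would establish this claim. Suppose $\varphi \in \lihoms(T_1, T_2)$ for trees $T_1, T_2$ but $\varphi$ is not injective; pick $u \neq u'$ with $\varphi(u) = \varphi(u')$ minimizing the distance $d_{T_1}(u,u')$. Local injectivity rules out $d=1$ (adjacent vertices share no… more carefully, adjacent vertices map to adjacent, hence distinct, vertices) and rules out $d = 2$ (a common neighbor would violate local injectivity). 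For $d \geq 3$, let $(u = x_0, x_1, \dots, x_d = u')$ be the unique path in $T_1$; then $\varphi$ maps this path to a closed walk of length $d$ from $\varphi(u)$ back to itself in the tree $T_2$, and since trees are acyclic this walk must backtrack, forcing $\varphi(x_{i-1}) = \varphi(x_{i+1})$ for some interior $i$, a pair at distance $2$ — contradicting minimality (or directly contradicting local injectivity at $x_i$). Hence $\varphi$ is injective and therefore an embedding.

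Having the identity $\lihoms(T_1, T_2) = \embs(T_1, T_2)$ for trees, the reduction is immediate. Given an instance $(T_1, T_2)$ of the subtree-counting problem from Lemma~\ref{lem:into_classical_trees}, I would compute $\#\lihoms(T_1, T_2) = \#\embs(T_1, T_2)$ via a single oracle call to $\#\lihoms(\mathcal{T})$, legitimate since $T_1 \in \mathcal{T}$ and $T_2 \in \graphs$. Using the standard relation $\#\mathsf{Aut}(T_1) \cdot \#\mathsf{Sub}(T_1, T_2) = \#\embs(T_1, T_2)$ recalled in the preliminaries, and the fact that $\#\mathsf{Aut}(T_1)$ is computable in time depending only on $|V(T_1)|$, I recover $\#\mathsf{Sub}(T_1, T_2)$ by a single division. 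This is a polynomial-time (indeed, parsimonious up to the automorphism factor) Turing reduction, so $\#\P$-hardness transfers from Lemma~\ref{lem:into_classical_trees} to $\#\lihoms(\mathcal{T})$.

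I expect the main obstacle to be the structural claim that local injectivity forces embedding on trees, specifically handling the long-range collisions ($d \geq 3$) cleanly. The backtracking argument is the heart of it: a non-backtracking walk in a tree never revisits a vertex, so any walk that returns to its start must backtrack somewhere, and a backtrack step $x_{i-1} \to x_i \to x_{i+1}$ with $\varphi(x_{i-1}) = \varphi(x_{i+1})$ exhibits two distinct neighbors of $x_i$ with the same image, which is exactly a failure of local injectivity at $x_i$. Care is needed to argue that the first repeated image along the image walk yields such a backtrack at a genuine interior vertex of the path in $T_1$; I would phrase this by considering the shortest collision. Everything after the claim is routine bookkeeping with the automorphism count, so the bulk of the write-up will reside in this lemma about trees.
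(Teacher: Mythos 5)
Your proposal is correct and follows essentially the same route as the paper: you first show that a locally injective homomorphism between trees is an embedding (the paper argues via the unique path between two colliding vertices inducing a cycle in $T_2$, you via a minimal-distance collision and a backtracking argument --- the same idea, and your handling of the $d\geq 3$ case is if anything more careful), and then reduce from Lemma~\ref{lem:into_classical_trees} by a single oracle call and division by $\#\mathsf{Aut}(T_1)$, exactly as the paper does. One small correction: for a polynomial-time Turing reduction you need $\#\mathsf{Aut}(T_1)$ computable in time polynomial in the input (which holds for trees, as the paper notes with its citations), not merely ``in time depending only on $|V(T_1)|$'', since $T_1$ is part of the input and such a bound could be exponential in the input size.
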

We prove this lemma in the following subsection.
\subsection{Counting subtrees of trees}
The aim of this section is to prove Lemma~\ref{lem:li_trees_hard}. We start by giving an introduction to classical counting complexity which was established by Valiant in his seminal work about the complexity of computing the permament \cite{Valiant1979a}. A (non-parameterized) \emph{counting problem} is a function $F: \bs{0,1}^\ast \rightarrow \N$. The class of all counting problems solvable in polynomial time is called $\FP$. On the other hand, the notion of intractability is $\sharpP$-hardness. $\sharpP$ is the class of all counting problems reducible\footnote{(Many-one) reductions in counting complexity differ slightly from many-one reductions in the decision world. However, for the purpose of this section we only need Turing reductions. We recommend Chap.~6.2 of \cite{goldreich} to the interested reader.} to $\#\mathsf{SAT}$, the problem of computing the number of satisfying assignments of a given CNF formula. A counting problem $F$ is $\sharpP$-hard if there exists a polynomial time Turing reduction from $\#\mathsf{SAT}$ to $F$, that is, an algorithm with oracle $F$ that solves $\#\mathsf{SAT}$ in polynomial time. Toda \cite{toda} proved that $\cc{PH}\subseteq \P^{\sharpP}$ which indicates that $\sharpP$-hard problems are much harder than $\NP$-complete problems.\\
To prove Theorem~\ref{lem:li_trees_hard}, we will first prove $\sharpP$-hardness of the following intermediate problem: Given two trees $T_1,T_2$, compute the number $\#\mathsf{Sub}(T_1,T_2)$ of subtrees of $T_2$ that are isomorphic to $T_1$. We call this problem $\subtt$. 
\begin{lemma}[Lemma~\ref{lem:into_classical_trees}, restated]
\label{lem:subtt_hard}
$\subtt$ is $\sharpP$-hard.
\end{lemma}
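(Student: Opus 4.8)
The plan is to reduce from computing the permanent of a $0/1$ matrix. Given $A \in \bs{0,1}^{n\times n}$, let $B$ be the bipartite graph with vertex classes $\bs{\rho_1,\dots,\rho_n}$ (rows) and $\bs{\gamma_1,\dots,\gamma_n}$ (columns) and an edge $\bs{\rho_k,\gamma_j}$ whenever $A_{kj}=1$; then $\op{perm}(A)=\sum_{\pi}\prod_k A_{k,\pi(k)}$ is exactly the number of perfect matchings of $B$, and computing it is $\sharpP$-hard by Valiant~\cite{Valiant1979a}. I would build, in polynomial time, two trees $T_1$ (the pattern) and $T_2$ (the host) with $\#\mathsf{Sub}(T_1,T_2)=\op{perm}(A)$; this is even a parsimonious many-one reduction and hence yields $\sharpP$-hardness of $\subtt$.

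Following the skeleton idea of Goldberg and Jerrum~\cite{goldberg_jerrum_trees}, I first fix $n$ pairwise non-isomorphic rooted trees $D_1,\dots,D_n$, all on the same number $s=\poly(n)$ of vertices and each rigid (trivial automorphism group with its root fixed); such a family is easy to construct, e.g.\ by encoding the index $j$ in binary along a caterpillar. The host $T_2$ has a root $c$ carrying (a) a large rigid \emph{anchor} gadget $A^\ast$ whose only occurrence as a subtree of $T_2$ is this copy — for instance a gadget whose size and maximum degree strictly dominate everything else — and (b) for each row $k$ a row-vertex $r_k$ adjacent to $c$; to $r_k$ I attach, for every column $j$ with $A_{kj}=1$, a cell-vertex $x_{kj}$ bearing a pendant copy of the decoration $D_j$. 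The pattern $T_1$ is the ``generic row'': a root $c'$ carrying the same anchor $A^\ast$ together with $n$ branches, the $i$-th being a path $c'\!-\!r_i'\!-\!x_i'$ with a pendant copy of $D_i$ at $x_i'$. Both trees have $\poly(n)$ size.

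It remains to show that the subtrees of $T_2$ isomorphic to $T_1$ are exactly the \emph{canonical} ones, one per perfect matching. Given a perfect matching $f$ of $B$, i.e.\ a bijection with $A_{f(i),i}=1$ for all $i$, the vertex set consisting of $c$, all of $A^\ast$, every $r_k$, the cells $x_{f(i),i}$, and the decorations $D_i$ induces a subtree isomorphic to $T_1$, and distinct matchings use distinct cells and thus give distinct subtrees. For the converse I would peel off rigidity step by step: since $A^\ast$ occurs uniquely in $T_2$, in any copy the pattern anchor maps onto it and therefore $c'\mapsto c$; the $n$ branches then map bijectively onto the neighbours $r_1,\dots,r_n$ of $c$; since $r_i'$ has the unique non-root neighbour $x_i'$, the $i$-th branch runs into a single cell $x_{kj}$ and then into the decoration $D_j$ hanging there. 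The crucial point is that $D_i$ and $D_j$ have the same number of vertices, so an injective map of $D_i$ into $D_j$ is a vertex bijection, hence an isomorphism, forcing $j=i$ and $A_{ki}=1$. Thus every copy assigns to each column $i$ a distinct row $k$ with $A_{ki}=1$, i.e.\ a perfect matching, and equals the canonical copy for it. As $T_1$ is rigid (its branches are pairwise non-isomorphic and the anchor is distinguishable from them), $\#\mathsf{Aut}(T_1)=1$, so each copy is counted once and $\#\mathsf{Sub}(T_1,T_2)=\op{perm}(A)$.

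The technical heart — and the step I expect to require the most care — is precisely this rigidity analysis, ruling out all ``unintended'' copies: that the anchor cannot be realized elsewhere, that a pattern branch cannot be folded partly into the anchor or into a neighbouring branch, and that decorations cannot be sub-embedded into one another. Keeping all decorations of equal size reduces the last point to a clean cardinality argument, and choosing the anchor to dominate all other sizes and degrees isolates the root. Making these global size-bookkeeping arguments watertight — so that any subtree of $T_2$ on $|V(T_1)|$ vertices is forced into the canonical shape — is the part that needs the detailed skeleton construction of~\cite{goldberg_jerrum_trees}.
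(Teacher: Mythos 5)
Your proposal is correct and follows the same overall route as the paper: a polynomial-time, parsimonious reduction from the $0/1$ permanent that builds a pattern tree and a host tree in the style of the Goldberg--Jerrum skeleton graph, so that subtrees of the host isomorphic to the pattern correspond exactly to permutations $\pi$ with $\prod_i a_{i,\pi(i)}=1$. The two constructions differ only in their forcing gadgets, and the difference is worth noting. The paper encodes the matrix as $n$ column-paths of length $n$ with a pendant marker $b_{i,j}$ at every $1$-entry, takes the pattern to be the same construction applied to $\mathsf{id}_n$, and pins copies down purely by degrees: the root is the unique vertex of degree $n$ (which is why it needs $n\geq 5$), the column-terminating vertices $w_j$ are the unique vertices of degree exactly $4$, and the markers then force $a_{i,\pi(i)}=1$. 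You instead encode column identity by pairwise non-isomorphic decorations $D_1,\dots,D_n$ of equal size $s$, so that any injective map $D_i\to D_j$ is surjective by cardinality, hence an isomorphism, hence $i=j$; this is a clean alternative to the paper's positional encoding and avoids its small-$n$ case distinction. Two imprecisions in your write-up, neither fatal: (i) the anchor property you postulate, that $A^\ast$ occurs in $T_2$ \emph{only once} as a subtree, is stronger than needed and is not literally satisfied by the gadget you suggest (copies of a high-degree star can slide among its leaves); what the argument actually requires --- and what degree domination does give, exactly as the paper's degree-$n$ root argument --- is that every copy of $T_1$ sends $c'$ to $c$, after which the branch vertices, having degree $2$, cannot be absorbed into anchor leaves and must map bijectively onto the row vertices. (ii) Your closing appeal to $\#\mathsf{Aut}(T_1)=1$ is both unnecessary and, for a star-like anchor, false: $\#\mathsf{Sub}$ counts vertex/edge subsets rather than embeddings, so automorphisms of the pattern never multiply the count; rigidity would only matter if you were counting $\#\embs(T_1,T_2)$.
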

Related results are $\sharpP$-hardness for counting \emph{all} subtrees of a given graph \cite{jerrum_trees} or even counting \emph{all} subtrees of a given tree \cite{goldberg_jerrum_trees}. As the number of non-isomorphic trees with $n$ vertices is not bounded by a polynomial in $n$, we do not know how to reduce directly from these problems. Instead we use a construction quite similar to the ''skeleton'' graph in \cite{goldberg_jerrum_trees} to reduce from the problem of computing the permanent. \\
Given a quadratic matrix $A$ with elements $(a_{i,j})_{i,j \in [n]}$ the \emph{permanent} of $A$ is defined as follows:
\begin{equation*}
\mathsf{perm}(A)=\sum_{\pi \in S_n}\prod_{i=1}^n a_{i,\pi(i)} \,,
\end{equation*}  
where $S_n$ is the symmetric group with $n$ elements. 
\begin{theorem}[\cite{Valiant1979a}]
Computing the permanent is $\sharpP$-hard even when restricted to matrices with entries from $\bs{0,1}$.
\end{theorem}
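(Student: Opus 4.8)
The plan is to give a polynomial-time Turing reduction from $\#\mathsf{3SAT}$ --- which is $\sharpP$-hard, being itself reducible from $\#\mathsf{SAT}$ --- to computing the permanent of a $\bs{0,1}$-matrix. I would proceed in two stages: first reduce $\#\mathsf{3SAT}$ to computing the permanent of an \emph{integer} matrix (allowing large and negative entries), and then reduce the integer permanent to the $\bs{0,1}$ permanent. The combinatorial tool underlying both stages is the cycle-cover reading of the permanent: regarding an $n \times n$ matrix $A$ as the weighted adjacency matrix of a directed graph $D$ on $n$ vertices (self-loops allowed), one has $\mathsf{perm}(A) = \sum_{C} \prod_{e \in C} w(e)$, where the sum ranges over all cycle covers $C$ of $D$, i.e.\ spanning collections of vertex-disjoint directed cycles, and $w(e)$ is the weight of edge $e$.

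For the first stage, given a $3$-CNF formula $\varphi$ with variables $x_1,\dots,x_n$ and clauses $c_1,\dots,c_m$, I would assemble a weighted directed graph $D_\varphi$ from three kinds of gadgets. A \emph{variable} gadget for each $x_i$ consists of two internally disjoint tracks --- a ``true'' line and a ``false'' line --- so that any cycle cover is forced to select exactly one of them, encoding an assignment. A \emph{clause} gadget for each $c_j$ is designed to admit a local cycle cover precisely when at least one of its three literals is set to true. An \emph{interchange} (XOR) gadget glues each literal occurrence in a clause to the corresponding variable line; it is a small fixed weighted graph whose total contribution to any cycle cover is $0$ unless exactly one of its two tracks is used, in which case it contributes a fixed nonzero constant. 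These zero contributions annihilate all cycle covers that are inconsistent (where a variable line and a clause occurrence disagree), so that summing over cycle covers yields $\mathsf{perm}(A_\varphi) = \kappa \cdot s(\varphi)$, where $s(\varphi)$ is the number of satisfying assignments and $\kappa$ is a fixed positive constant determined by the gadget weights and computable in polynomial time. Dividing recovers $s(\varphi)$.

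The second stage removes the two features that keep the entries from being in $\bs{0,1}$, namely large weights and negative weights. A positive edge weight $w>1$ can be simulated by a fixed $\bs{0,1}$ gadget whose number of internally disjoint traversals equals $w$, and arbitrary weights are obtained by chaining such gadgets according to the binary expansion of $w$, incurring only a polynomial increase in the number of vertices. To eliminate negative weights I would compute the permanent modulo a suitable modulus: since $|\mathsf{perm}(A_\varphi)|$ is bounded by a value $N$ computable in polynomial time from the construction, it suffices to evaluate the permanent modulo a prime $p > 2N$ and to represent a weight of $-1$ by the residue $p-1$. Modulo $p$ every weight is a nonnegative residue below $p$, which is then realized by $\bs{0,1}$ gadgets as above; the true signed integer is recovered from its residue using the bound $N$. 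Composing the two stages gives the claimed reduction.

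The hard part will be the first stage: one must choose the weights in the clause and interchange gadgets so that \emph{every} inconsistent cycle cover contributes weight exactly $0$ while every consistent one contributes the same nonzero constant, and then carry out the bookkeeping that shows the permanent factors as $\kappa \cdot s(\varphi)$. Once these gadget identities are verified, the second stage is a routine, if technical, matter of weight simulation and modular arithmetic.
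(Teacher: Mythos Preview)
The paper does not supply its own proof of this statement: the theorem is stated with the citation \cite{Valiant1979a} and used as a black box in the subsequent reduction for Lemma~\ref{lem:subtt_hard}. There is therefore nothing in the paper to compare your argument against.

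That said, your outline is the standard route to Valiant's theorem: a gadget reduction from $\#\mathsf{3SAT}$ to the integer permanent via the cycle-cover interpretation (variable tracks, clause gadgets, and an XOR/interchange gadget whose contribution vanishes on inconsistent covers), followed by weight simulation and a modular step to bring the entries down to $\bs{0,1}$. As a high-level plan this is correct and faithful to Valiant's original approach. The genuinely delicate part, as you already flag, is designing the interchange gadget and verifying the exact weight identities; your sketch does not carry this out, so as written it is a plausible proof \emph{plan} rather than a proof. For the purposes of this paper, simply citing Valiant suffices.
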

\begin{proof}[Proof of Lemma~\ref{lem:subtt_hard}]
We reduce from computing the permanent of matrices with entries from $\bs{0,1}$. Given a quadratic matrix $A$ of size $n$, we construct a tree $T_A$ as follows:
\begin{figure}
	\begin{center}
		\begin{tikzpicture}[-]
  \node[circle,inner sep=2pt,fill](0) at (2,2) {};
  \node[circle,inner sep=2pt,fill](11) at (0,0) {};
  \node[circle,inner sep=2pt,fill](110) at (-0.4,-0.4) {};
  \node[circle,inner sep=2pt,fill](12) at (1,0) {};
  \node[circle,inner sep=2pt,fill](13) at (2,0) {};
  \node[circle,inner sep=2pt,fill](14) at (3,0) {};
  \node[circle,inner sep=2pt,fill](15) at (4,0) {};
  \node[circle,inner sep=2pt,fill](111) at (-0.3,-5.5) {};
  \node[circle,inner sep=2pt,fill](112) at (0,-5.5) {};
  \node[circle,inner sep=2pt,fill](113) at (0.3,-5.5) {};
  \node[circle,inner sep=2pt,fill](21) at (0,-1) {};
  \node[circle,inner sep=2pt,fill](22) at (1,-1) {};
  \node[circle,inner sep=2pt,fill](220) at (0.6,-1.4) {};
  \node[circle,inner sep=2pt,fill](23) at (2,-1) {};
  \node[circle,inner sep=2pt,fill](24) at (3,-1) {};
  \node[circle,inner sep=2pt,fill](25) at (4,-1) {};
  \node[circle,inner sep=2pt,fill](211) at (0.7,-5.5) {};
  \node[circle,inner sep=2pt,fill](212) at (1,-5.5) {};
  \node[circle,inner sep=2pt,fill](213) at (1.3,-5.5) {};
  \node[circle,inner sep=2pt,fill](31) at (0,-2) {};
  \node[circle,inner sep=2pt,fill](32) at (1,-2) {};
  \node[circle,inner sep=2pt,fill](33) at (2,-2) {};
  \node[circle,inner sep=2pt,fill](330) at (1.6,-2.4) {};
  \node[circle,inner sep=2pt,fill](34) at (3,-2) {};
  \node[circle,inner sep=2pt,fill](35) at (4,-2) {};
  \node[circle,inner sep=2pt,fill](311) at (1.7,-5.5) {};
  \node[circle,inner sep=2pt,fill](312) at (2,-5.5) {};
  \node[circle,inner sep=2pt,fill](313) at (2.3,-5.5) {};
  \node[circle,inner sep=2pt,fill](41) at (0,-3) {};
  \node[circle,inner sep=2pt,fill](42) at (1,-3) {};
  \node[circle,inner sep=2pt,fill](43) at (2,-3) {};
  \node[circle,inner sep=2pt,fill](44) at (3,-3) {};
  \node[circle,inner sep=2pt,fill](440) at (2.6,-3.4) {};
  \node[circle,inner sep=2pt,fill](45) at (4,-3) {};
  \node[circle,inner sep=2pt,fill](411) at (2.7,-5.5) {};
  \node[circle,inner sep=2pt,fill](412) at (3,-5.5) {};
  \node[circle,inner sep=2pt,fill](413) at (3.3,-5.5) {};
  \node[circle,inner sep=2pt,fill](51) at (0,-4) {};
  \node[circle,inner sep=2pt,fill](52) at (1,-4) {};
  \node[circle,inner sep=2pt,fill](53) at (2,-4) {};
  \node[circle,inner sep=2pt,fill](54) at (3,-4) {};
  \node[circle,inner sep=2pt,fill](55) at (4,-4) {};
  \node[circle,inner sep=2pt,fill](550) at (3.6,-4.4) {};
  \node[circle,inner sep=2pt,fill](511) at (3.7,-5.5) {};
  \node[circle,inner sep=2pt,fill](512) at (4,-5.5) {};
  \node[circle,inner sep=2pt,fill](513) at (4.3,-5.5) {};
  \node[circle,inner sep=2pt,fill](61) at (0,-5) {};
  \node[circle,inner sep=2pt,fill](62) at (1,-5) {};
  \node[circle,inner sep=2pt,fill](63) at (2,-5) {};
  \node[circle,inner sep=2pt,fill](64) at (3,-5) {};
  \node[circle,inner sep=2pt,fill](65) at (4,-5) {};
  \node[circle,inner sep=2pt,fill](71) at (0,1) {};
  \node[circle,inner sep=2pt,fill](72) at (1,1) {};
  \node[circle,inner sep=2pt,fill](73) at (2,1) {};
  \node[circle,inner sep=2pt,fill](74) at (3,1) {};
  \node[circle,inner sep=2pt,fill](75) at (4,1) {};
  \draw (11) -- (21);
  \draw (21) -- (31);
  \draw (12) -- (22);
  \draw (22) -- (32);
  \draw (13) -- (23);
  \draw (23) -- (33);
  \draw (31) -- (41);
  \draw (41) -- (51);
  \draw (51) -- (61);
  \draw (32) -- (42);
  \draw (52) -- (42);
  \draw (52) -- (62);
  \draw (33) -- (43);
  \draw (53) -- (43);
  \draw (53) -- (63);
  \draw (11) -- (71);
  \draw (12) -- (72);
  \draw (13) -- (73); 
  \draw (14) -- (24);
  \draw (24) -- (34);
  \draw (34) -- (44);
  \draw (44) -- (54);
  \draw (54) -- (64);
  \draw (14) -- (74);
  \draw (15) -- (25);
  \draw (25) -- (35);
  \draw (35) -- (45);
  \draw (45) -- (55);
  \draw (55) -- (65);
  \draw (15) -- (75);
  
  \draw (11) -- (110);
  \draw (22) -- (220);
  \draw (33) -- (330);
  \draw (44) -- (440);
  \draw (55) -- (550);
  
  \draw (61) -- (111);
  \draw (61) -- (112);
  \draw (61) -- (113);
  \draw (62) -- (211);
  \draw (62) -- (212);
  \draw (62) -- (213);
  \draw (63) -- (311);
  \draw (63) -- (312);
  \draw (63) -- (313);
  \draw (64) -- (411);
  \draw (64) -- (412);
  \draw (64) -- (413);
  \draw (65) -- (511);
  \draw (65) -- (512);
  \draw (65) -- (513);
  
  \draw (0) -- (71);
  \draw (0) -- (72);
  \draw (0) -- (73);
  \draw (0) -- (74);
  \draw (0) -- (75);
\end{tikzpicture} \hspace*{20mm} \begin{tikzpicture}[-]
  \node[circle,inner sep=2pt,fill](0) at (2,2) {};
  \node[circle,inner sep=2pt,fill](11) at (0,0) {};
  \node[circle,inner sep=2pt,fill](110) at (-0.4,-0.4) {};
  \node[circle,inner sep=2pt,fill](12) at (1,0) {};
  \node[circle,inner sep=2pt,fill](13) at (2,0) {};
  \node[circle,inner sep=2pt,fill](130) at (1.6,-0.4) {};
  \node[circle,inner sep=2pt,fill](14) at (3,0) {};
  \node[circle,inner sep=2pt,fill](15) at (4,0) {};
  \node[circle,inner sep=2pt,fill](111) at (-0.3,-5.5) {};
  \node[circle,inner sep=2pt,fill](112) at (0,-5.5) {};
  \node[circle,inner sep=2pt,fill](113) at (0.3,-5.5) {};
  \node[circle,inner sep=2pt,fill](21) at (0,-1) {};
  \node[circle,inner sep=2pt,fill](22) at (1,-1) {};
  \node[circle,inner sep=2pt,fill](220) at (0.6,-1.4) {};
  \node[circle,inner sep=2pt,fill](23) at (2,-1) {};
  \node[circle,inner sep=2pt,fill](24) at (3,-1) {};
  \node[circle,inner sep=2pt,fill](240) at (2.6,-1.4) {};
  \node[circle,inner sep=2pt,fill](25) at (4,-1) {};
  \node[circle,inner sep=2pt,fill](250) at (3.6,-1.4) {};
  \node[circle,inner sep=2pt,fill](211) at (0.7,-5.5) {};
  \node[circle,inner sep=2pt,fill](212) at (1,-5.5) {};
  \node[circle,inner sep=2pt,fill](213) at (1.3,-5.5) {};
  \node[circle,inner sep=2pt,fill](31) at (0,-2) {};
  \node[circle,inner sep=2pt,fill](310) at (-0.4,-2.4) {};
  \node[circle,inner sep=2pt,fill](32) at (1,-2) {};
  \node[circle,inner sep=2pt,fill](33) at (2,-2) {};
  \node[circle,inner sep=2pt,fill](330) at (1.6,-2.4) {};
  \node[circle,inner sep=2pt,fill](34) at (3,-2) {};
  \node[circle,inner sep=2pt,fill](35) at (4,-2) {};
  \node[circle,inner sep=2pt,fill](311) at (1.7,-5.5) {};
  \node[circle,inner sep=2pt,fill](312) at (2,-5.5) {};
  \node[circle,inner sep=2pt,fill](313) at (2.3,-5.5) {};
  \node[circle,inner sep=2pt,fill](41) at (0,-3) {};
  \node[circle,inner sep=2pt,fill](42) at (1,-3) {};
  \node[circle,inner sep=2pt,fill](420) at (0.6,-3.4) {};
  \node[circle,inner sep=2pt,fill](43) at (2,-3) {};
  \node[circle,inner sep=2pt,fill](44) at (3,-3) {};
  \node[circle,inner sep=2pt,fill](440) at (2.6,-3.4) {};
  \node[circle,inner sep=2pt,fill](45) at (4,-3) {};
  \node[circle,inner sep=2pt,fill](411) at (2.7,-5.5) {};
  \node[circle,inner sep=2pt,fill](412) at (3,-5.5) {};
  \node[circle,inner sep=2pt,fill](413) at (3.3,-5.5) {};
  \node[circle,inner sep=2pt,fill](51) at (0,-4) {};
  \node[circle,inner sep=2pt,fill](52) at (1,-4) {};
  \node[circle,inner sep=2pt,fill](520) at (0.6,-4.4) {};
  \node[circle,inner sep=2pt,fill](53) at (2,-4) {};
  \node[circle,inner sep=2pt,fill](54) at (3,-4) {};
  \node[circle,inner sep=2pt,fill](55) at (4,-4) {};
  \node[circle,inner sep=2pt,fill](550) at (3.6,-4.4) {};
  \node[circle,inner sep=2pt,fill](511) at (3.7,-5.5) {};
  \node[circle,inner sep=2pt,fill](512) at (4,-5.5) {};
  \node[circle,inner sep=2pt,fill](513) at (4.3,-5.5) {};
  \node[circle,inner sep=2pt,fill](61) at (0,-5) {};
  \node[circle,inner sep=2pt,fill](62) at (1,-5) {};
  \node[circle,inner sep=2pt,fill](63) at (2,-5) {};
  \node[circle,inner sep=2pt,fill](64) at (3,-5) {};
  \node[circle,inner sep=2pt,fill](65) at (4,-5) {};
  \node[circle,inner sep=2pt,fill](71) at (0,1) {};
  \node[circle,inner sep=2pt,fill](72) at (1,1) {};
  \node[circle,inner sep=2pt,fill](73) at (2,1) {};
  \node[circle,inner sep=2pt,fill](74) at (3,1) {};
  \node[circle,inner sep=2pt,fill](75) at (4,1) {};
  \draw (11) -- (21);
  \draw (21) -- (31);
  \draw (12) -- (22);
  \draw (22) -- (32);
  \draw (13) -- (23);
  \draw (23) -- (33);
  \draw (31) -- (41);
  \draw (41) -- (51);
  \draw (51) -- (61);
  \draw (32) -- (42);
  \draw (52) -- (42);
  \draw (52) -- (62);
  \draw (33) -- (43);
  \draw (53) -- (43);
  \draw (53) -- (63);
  \draw (11) -- (71);
  \draw (12) -- (72);
  \draw (13) -- (73); 
  \draw (14) -- (24);
  \draw (24) -- (34);
  \draw (34) -- (44);
  \draw (44) -- (54);
  \draw (54) -- (64);
  \draw (14) -- (74);
  \draw (15) -- (25);
  \draw (25) -- (35);
  \draw (35) -- (45);
  \draw (45) -- (55);
  \draw (55) -- (65);
  \draw (15) -- (75);
  
  \draw (11) -- (110);
  \draw (31) -- (310);
  \draw (22) -- (220);
  \draw (13) -- (130);
  \draw (33) -- (330);
  \draw (24) -- (240);
  \draw (25) -- (250);
  \draw (42) -- (420);
  \draw (44) -- (440);
  \draw (52) -- (520);
  \draw (55) -- (550);
  
  \draw (61) -- (111);
  \draw (61) -- (112);
  \draw (61) -- (113);
  \draw (62) -- (211);
  \draw (62) -- (212);
  \draw (62) -- (213);
  \draw (63) -- (311);
  \draw (63) -- (312);
  \draw (63) -- (313);
  \draw (64) -- (411);
  \draw (64) -- (412);
  \draw (64) -- (413);
  \draw (65) -- (511);
  \draw (65) -- (512);
  \draw (65) -- (513);
  
  \draw (0) -- (71);
  \draw (0) -- (72);
  \draw (0) -- (73);
  \draw (0) -- (74);
  \draw (0) -- (75);
\end{tikzpicture}
	\end{center}
	\caption{Trees $T_{\mathsf{id}_5}$ (left) and $T_A$ (right).}
	\label{fig:matrix_tree}
\end{figure}
\begin{enumerate}
\item For every entry $a_{i,j}$ we create a vertex $v_{i,j}$ and add edges $\bs{v_{i,j},v_{i+1,j}}$ for every $i \in [n-1]$ and $j \in [n]$.
\item Whenever $a_{i,j}=1$ we create a vertex $b_{i,j}$ and add edges $\bs{b_{i,j},v_{i,j}}$.
\item For every column $c_j$ we create a vertices $u_j,w_j,x_j,y_j,z_j$ and add edges $\bs{u_j,v_{1,j}}$, $\bs{v_{n,j},w_j}$,$\bs{w_j,x_j}$,$\bs{w_j,y_j}$ and $\bs{w_j,z_j}$. 
\item Finally, we create a vertex $r$ and add edges $\bs{a,u_j}$ for all $j\in [n]$. In the following we call $r$ the root.
\end{enumerate}
We give an example in Figure~\ref{fig:matrix_tree} for a matrix \[A =
		\begin{pmatrix}
		1 & 0 & 1 & 0 & 0 \\
		0 & 1 & 0 & 1 & 1 \\
		1 & 0 & 1 & 0 & 0\\
		0 & 1 & 0 & 1 & 0\\
		0 & 1 & 0 & 0 & 1
		\end{pmatrix}
		\,.\]
We claim that for all quadratic matrices $A$ of size $n\geq 5$ with entries from $\bs{0,1}$ it holds that
\begin{equation*}
\mathsf{perm}(A) = \#\mathsf{Sub}(T_{\mathsf{id}_n},T_A) \,,
\end{equation*}
where $\mathsf{id}_n$ is the quadratic matrix of size $n$ with $1$s on the diagonal and $0$s everywhere else. In the following we write $v$ for a vertex in $T_A$ and $v'$ for a vertex in $T_{\mathsf{id}_n}$. To prove the claim we first observe that whenever a subtree of $T_A$ is isomorphic to $T_{\mathsf{id}_n}$, the root $r'$ of $T_{\mathsf{id}_n}$ has to be mapped to the root $r$ of $T_A$ by the isomorphism as the roots are the only vertices with degree $n$ (which is why we needed $n\geq 5$ as every other vertex has degree~$\leq 4$). It follows that the vertices $u'_1,\dots,u'_n$ of $T_{\mathsf{id}_n}$ are mapped to $u_1,\dots,u_n$ of $T_A$ which induces a permutation on $n$ elements, that is, an element $\pi \in S_n$. We will now partition the subtrees of $T_A$ isomorphic to $T_{\mathsf{id}_n}$ by those permutations and write $\#\mathsf{Sub}(T_{\mathsf{id}_n},T_A)[\pi]$ for the number of subtrees that induce $\pi$. Now fix $\pi$ and consider a subtree that induces $\pi$. It holds that for all $j\in [n]$ the vertex $w'_j$ has to be mapped to $w_{\pi(j)}$ as those are the only vertices with degree exactly $4$ and furthermore, the vertices $x'_j,y'_j,z'_j$ have to be mapped to $x_{\pi(j)},y_{\pi(j)},z_{\pi(j)}$ (possibly permuted but the subtree of $T_A$ is the same). Now $v'_{i,i}$ is adjacent to $b'_{i,i}$ for each $i\in [n]$ and therefore $v_{i,\pi(i)}$ has to be adjacent to $b_{i,\pi(i)}$, that is $a_{i,\pi(i)} = 1$. If this is not the case then there is no subtree that induces partition $\pi$. Furthermore there is at most one subtree isomorphic to $T_{\mathsf{id}_n}$ inducing $\pi$ because the image is enforced by $r'$, $w'_j$ and $v'_{i,i}$ for all $i,j \in [n]$. Consequently $\#\mathsf{Sub}(T_{\mathsf{id}_n},T_A)[\pi] = 1$ if for all $i \in [n]$ it holds that $a_{i,\pi(i)}=1$ and $\#\mathsf{Sub}(T_{\mathsf{id}_n},T_A)[\pi]=0$ otherwise. Hence $\#\mathsf{Sub}(T_{\mathsf{id}_n},T_A)[\pi] = \prod_{i=1}^n a_{i,\pi(i)}$ and therefore
\begin{equation*}
\mathsf{perm}(A) = \sum_{\pi \in S_n}\prod_{i=1}^n a_{i,\pi(i)} = \sum_{\pi \in S_n}\#\mathsf{Sub}(T_{\mathsf{id}_n},T_A)[\pi] = \#\mathsf{Sub}(T_{\mathsf{id}_n},T_A) \,.
\end{equation*}
Now the reduction works as follows: If the input matrix $A$ has size $\leq 4$ we brute-force the output and otherwise we compute $\#\mathsf{Sub}(T_{\mathsf{id}_n},T_A)$ with the oracle for $\subtt$.
\end{proof}

Now the proof of Lemma~\ref{lem:li_trees_hard} relies on the fact that locally injective homomorphisms from a tree to a tree are embeddings.
\begin{proof}[Proof of Lemma~\ref{lem:li_trees_hard}]
It is a well-known fact that a locally injective homomorphism $\varphi$ from a tree $T_1$ to a tree $T_2$ is injective. To see this assume that there are vertices $v$ and $u$ in $T_1$ that are mapped to the same vertex in $T_2$. As $T_1$ is a tree there exists exactly one path $v=w_0,w_1,\dots,w_\ell,w_{\ell+1}=u$ between $v$ and $u$ in $T_1$. It holds that $\ell \geq 1$ as otherwise $v$ and $u$ would be adjacent and hence $\varphi(u)=\varphi(v)$ would have a selfloop in $T_2$ which is impossible. As $\varphi$ is locally injective we have that $\varphi(v)\neq \varphi(w_2)$, hence $u \neq w_2$, and as $\varphi$ is edge preserving there are edges $\bs{\varphi(v),\varphi(w_1)}$ and $\bs{\varphi(w_1),\varphi(w_2)}$ and a path from $\varphi(w_2)$ to $\varphi(w_{\ell+1})=\varphi(u)=\varphi(v)$ in $T_2$. This induces a cycle and contradicts the fact that $T_2$ is a tree.\\
Therefore $\#\embs(T_1,T_2) = \#\lihoms(T_1,T_2)$. By Lov{\'a}sz \cite{lovasz} it holds for all $H$ and $G$ that \[\#\mathsf{Sub}(H,G) = \frac{\#\embs(H,G)}{\#\mathsf{Aut}(H)} \,,\]
where $\mathsf{Aut}(H)$ is the set of automorphisms of $H$. If $H$ is a tree then $\#\mathsf{Aut}(H) $ can be computed in polynomial time (even for planar graphs \cite{isoauto},\cite{isoplanar}). Therefore $\sharpP$-hardness of $\#\lihoms(\mathcal{T})$ follows by reducing from $\subtt$: Given trees $T_1,T_2$ we compute $\#\lihoms(T_1,T_2)$ by querying the oracle and $\#\mathsf{Aut}(T_1)$ in polynomial time. Then we output
\[\frac{\#\lihoms(T_1,T_2)}{\#\mathsf{Aut}(T_1)} = \frac{\#\embs(T_1,T_2)}{\#\mathsf{Aut}(T_1)} = \#\mathsf{Sub}(T_1,T_2) \,. \]
\end{proof}

\section{Injectivity in r-neighborhood}
The generalization from locally injective homomorphisms to homomorphisms that are injective in the $r$-neighborhood of every vertex is straightforward. Given a graph $H$ and $v \in V(H)$ we denote $N_r(v)$ as the $r$-neighborhood of $v$, that is, a vertex $u$ is contained in $N_r(v)$ if and only if $d_H(u,v) \leq r$, where $d_H(u,v)$ the distance between $u$ and $v$ in $H$. We then define \[\lirhoms(H,G) := \bs{\varphi \in \homs(H,G)~|~\forall v \in V(H) ~:~\varphi|_{N_r(v)} \text{ is injective}} \,. \]
Furthermore we define the counting problem $\#\lirhoms(C)$ for a class of graphs $C$ accordingly. Defining $\taulir$ such that \[E(\taulir(H)) = \bs{ \bs{u,w} ~|~ u \neq w \wedge \exists v: 1 \leq d_H(u,v) \leq r \wedge  1 \leq d_H(w,v) \leq r} \] for every graph $H \in \graphs$ immediately yields the dichotomy:

\begin{corollary}
Let $C \subseteq \graphs$ be a recursively enumerable class of graphs. Then $\#\lirhoms(C)$ is FPT if $\taulirmins(C)$ has bounded treewidth and $\#\W$-hard otherwise. Furthermore, there exists a deterministic algorithm that computes $\#\lirhoms(H,G)$ in time  
\[ g(|V(H)|) \cdot |V(G)|^{\mathsf{tw}(\taulirmins(H)) + 1} \,, \] where $g$ is a computable function.
\end{corollary}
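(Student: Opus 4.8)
The plan is to mirror exactly the treatment of locally injective homomorphisms: the only genuine work is to establish the analogue of Lemma~\ref{lem:li_graphic}, namely the identity $\taulirhoms(H,G) = \lirhoms(H,G)$ for all $H,G \in \graphs$. Once this is in place, the corollary becomes a direct instantiation of Theorem~\ref{thm:graphic_dichotomy} at $\tau = \taulir$---under which the generic $\taumins$ specializes to $\taulirmins$---and this transfers both the dichotomy and the claimed running time $g(|V(H)|)\cdot|V(G)|^{\mathsf{tw}(\taulirmins(H))+1}$. Note first that $\taulir$ is indeed a graphical restriction: it only modifies edges, keeps the vertex set fixed, and the defining distance condition is computable.

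For the inclusion $\lirhoms(H,G) \subseteq \taulirhoms(H,G)$ I would take any $\varphi \in \lirhoms(H,G)$ and any edge $\bs{u,w} \in E(\taulir(H))$. By definition of $\taulir$ there is a vertex $v$ with $1 \leq d_H(u,v) \leq r$ and $1 \leq d_H(w,v) \leq r$, so $u$ and $w$ are two distinct vertices of $N_r(v)$; local injectivity of $\varphi$ on $N_r(v)$ forces $\varphi(u) \neq \varphi(w)$, as required.

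The reverse inclusion is where the subtlety lies, and I expect it to be the main obstacle. Given $\varphi \in \taulirhoms(H,G)$, a vertex $v$, and two distinct vertices $u,w \in N_r(v)$, I must show $\varphi(u) \neq \varphi(w)$. If both $u \neq v$ and $w \neq v$, then $\bs{u,w} \in E(\taulir(H))$ with witness $v$ and we are immediately done. The awkward case is when one of the two vertices is the \emph{center} itself, say $u = v$, since then $\bs{v,w}$ need not be forced to be an edge of $\taulir(H)$ by $v$ alone. I would split this case according to the distance $d := d_H(v,w) \in \{1,\dots,r\}$. If $d = 1$, then $v$ and $w$ are adjacent in $H$, and since $G$ is loopless a homomorphism can never map adjacent vertices to a common image, so $\varphi(v) \neq \varphi(w)$ for free. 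If $d \geq 2$, I would choose a neighbour $v'$ of $v$ on a shortest $v$--$w$ path: then $d_H(v,v') = 1$ and $d_H(w,v') = d-1$, so both distances lie in $\{1,\dots,r\}$ and $v'$ witnesses $\bs{v,w} \in E(\taulir(H))$, whence $\varphi(v) \neq \varphi(w)$. This shows $\varphi|_{N_r(v)}$ is injective for every $v$, i.e. $\varphi \in \lirhoms(H,G)$, completing the identity and hence the proof via Theorem~\ref{thm:graphic_dichotomy}. The looplessness of $G$ together with this shortest-path witness is exactly what dissolves the center-vertex difficulty, so no idea beyond those of Lemma~\ref{lem:li_graphic} is needed.
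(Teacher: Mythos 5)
Your proposal is correct and takes essentially the same route the paper intends: the paper merely asserts that defining $\taulir$ ``immediately yields the dichotomy'' via Theorem~\ref{thm:graphic_dichotomy}, leaving the identity $\taulirhoms(H,G)=\lirhoms(H,G)$ (the analogue of Lemma~\ref{lem:li_graphic}) implicit. Your explicit verification of that identity --- in particular resolving the center-vertex case via looplessness of $G$ when $d_H(v,w)=1$ and via a shortest-path witness when $d_H(v,w)\geq 2$ --- is exactly the argument the paper omits.
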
 

We continue using trees as an example by observing that there is a phase transition in the complexity of $\#\lirhoms(\mathcal{T})$ when we change from $r=1$, in which case $\lirhoms(H,G)=\lihoms(H,G)$, to $r=2$:

\begin{corollary}
$\#\lirhoms(\mathcal{T})$ is $\#\W$-hard for $r \geq 2$. In particular, assuming ETH \footnote{ETH is the ``exponential time hypothesis'', stating that $k$-SAT cannot be solved in subexponential time (see \cite{ksat}).}, there is no algorithm that computes $\#\lirhoms(T,G)$ for a tree $T$ in time
\[ g(|V(T)|) \cdot |V(G)|^{O(1)} \,, \]
for any computable function $g$.
\end{corollary}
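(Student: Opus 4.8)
The plan is to invoke the dichotomy of the preceding corollary: since $\mathcal{T}$ is recursively enumerable, it suffices to prove that $\taulirmins(\mathcal{T})$ has unbounded treewidth, after which $\#\W$-hardness for every $r \ge 2$ is immediate, and the ETH-based non-FPT statement follows from the fact (recorded in the preliminaries) that $\#\W$-hard problems are not fixed-parameter tractable under ETH. Thus the whole task reduces to exhibiting, for every target $F \in \graphs$, a tree $T_F$ together with a $\taulir$-minor of $T_F$ whose treewidth is at least $\mathsf{tw}(F)$; taking $F$ from a family of bounded edge number but unbounded treewidth, e.g. the $m \times m$ grids (which have $O(m^2)$ edges and treewidth $m$), then gives the claim. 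This mirrors the windmill argument of Corollary~\ref{cor:lihoms_windmills_hard}, but now the host must be acyclic.

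First I would fix $F$ with edges $e_1,\dots,e_k$, orient each edge $e_i=\{x_i,y_i\}$ arbitrarily, and let $T_F$ be the subdivided star (``spider'') with a central vertex $a$ and one leg $a - v_i - w_i$ of length two for each $i \in [k]$, where $v_i$ represents the endpoint $x_i$ and $w_i$ represents $y_i$. This is a tree on $2k+1$ vertices. The only distance facts I need in $T_F$ are that every $v_i$ is adjacent to $a$, every $w_i$ is at distance $2$ from $a$, and any two of the $2k$ non-central vertices share the vertex $a$, which lies at distance $\le 2$ from each of them. Hence, for every $r \ge 2$, all pairs among $\{v_1,w_1,\dots,v_k,w_k\}$ are edges of $\taulir(T_F)$, witnessed by the common vertex $a$, uniformly in $r$ (note these are $\taulir(T_F)$-edges even though they are not edges of $T_F$, exactly as in the windmill proof).

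Next I would carry out the contraction. For each $\ell \in V(F)$ let $B_\ell$ collect all $v_i$ with $x_i=\ell$ together with all $w_i$ with $y_i=\ell$; since every pair inside a block is a $\taulir(T_F)$-edge, I contract a spanning tree of each $B_\ell$ along edges of $\taulir(T_F)$, leaving $a$ as its own block. I would then check that the resulting graph $F'$ is a genuine $\taulir$-minor, i.e.\ selfloop-free: the only surviving (preserved) edges are the leg edges $\{v_i,w_i\}$ and the root edges $\{a,v_i\}$, and since $F$ is simple we have $x_i \ne y_i$, so $v_i \in B_{x_i}$ and $w_i \in B_{y_i}$ lie in distinct blocks while $a$ stays separate. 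Deleting the vertex $a$ from $F'$ turns the surviving leg edges into precisely the edges $\{B_{x_i},B_{y_i}\}$, so $F'-a \cong F$. As treewidth does not increase under vertex deletion, $\mathsf{tw}(F') \ge \mathsf{tw}(F'-a) = \mathsf{tw}(F)$, and therefore $F' \in \taulirmins(\mathcal{T})$ has treewidth at least $\mathsf{tw}(F)$.

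I expect the main obstacle to be the bookkeeping needed to keep the construction uniform in $r$: one has to notice that legs of length exactly two already suffice for all $r \ge 2$ (the far endpoints $w_i$ remain pairwise contractible precisely because $a$ sits at distance $2 \le r$ from each), and that increasing $r$ only adds further $\taulir$-edges, which is harmless because a $\taulir$-minor contracts merely a chosen subset of the allowed pairs. Once unbounded treewidth of $\taulirmins(\mathcal{T})$ is established, the preceding corollary yields $\#\W$-hardness for every $r \ge 2$, and the ``in particular'' statement is then immediate, since an algorithm running in time $g(|V(T)|) \cdot |V(G)|^{O(1)}$ would render the problem fixed-parameter tractable, contradicting its $\#\W$-hardness under ETH.
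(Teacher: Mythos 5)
Your proposal is correct and takes essentially the same approach as the paper: both arguments build a spider (a star whose legs are paths of length two), observe that for $r \ge 2$ the center vertex makes all leg vertices pairwise contractible under $\taulir$, contract so as to realize an arbitrary graph with $k$ edges plus an apex, delete the apex and use minor-monotonicity of treewidth, and then conclude via the preceding dichotomy together with $\cc{FPT} \neq \#\W$ under ETH. The only cosmetic difference is that the paper first contracts its tree $T_{k,k}$ to the windmill $W_k$ and then invokes Corollary~\ref{cor:lihoms_windmills_hard}, whereas you inline that contraction argument directly in the tree.
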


\begin{proof}
We only need to show that $\taulirmins(\mathcal{T})$ has unbounded treewidth, as the ETH lower bound simply follows from the fact that FPT $\neq \#\W$ under ETH (see e.g. Chapt. 16 in \cite{flumgrohe_counting}). Therefore we construct the graph $T_{k,k}$ as follows: 
\begin{itemize}
\item We add vertices $a$, $u_1,\dots,u_{k}$, $v_1,\dots,v_{k}$, $w_1,\dots,w_{k}$.
\item We add edges $\bs{a,u_i}$, $\bs{a,v_i}$ and $\bs{v_i,w_i}$ for $i \in [k]$.
\end{itemize}
Clearly, $T_{k,k}$ is a tree. Now we contract vertices $w_i$ and $u_i$ for all $i \in [k]$ and end up in $W_k$. As $d_{T_{k,k}}(a,w_i) = 2$ and $d_{T_{k,k}}(a,u_i)=1$, those contractions are according to $\taulir(T_{k,k})$ and hence the resulting graph is a $\taulir$-minor of $T_{k,k}$. From $W_k$ we can further contract vertices along the lines of the proof of Corollary~\ref{cor:lihoms_windmills_hard} to obtain arbitrary graphs with $k$ edges as minors of elements of $\taulirmins(T_{k,k})$. Consequently the treewidth of $\taulir(\mathcal{T})$ is not bounded.
\end{proof}

\section{Extension to linear combinations}

\label{sec:lincombs}

The introduction of linear combinations of graphically restricted homomorphisms is motivated by the following example: Consider the problem $\#\mathsf{E}$ of, given a parameter $k$ and a graph $G \in \graphs$, computing $\#\homs(P_k,G) + \#\lihoms(K_k,G) + \#\embs(C_k,G)$, where $P_k,K_k,C_k$ are paths, cliques and cycles consisting of $k$ vertices. As $\homs$\footnote{Here $\tau$ maps every graph to the independent set of the same size implying that $\taumins(C)=C$.}, $\lihoms$ and $\embs$ are graphically restricted homomorphisms we know the complexity of computing each summand, but we cannot immediately infer the complexity of $\#\mathsf{E}$. As $P_k$ has treewidth $1$ it follows by Theorem~\ref{thm:hom-dichotomy} or Theorem~\ref{thm:graphic_dichotomy} that $\#\homs(P_k,G)$ can be computed in FPT time. Consequently, $\#\mathsf{E}$ is equivalent (w.r.t. FPT Turing reductions) to computing $\#\lihoms(K_k,G) + \#\embs(C_k,G)$. As cliques have $\tauli$-minors of unbounded treewidth and cycles have unbounded matching number, these problems are both $\#\W$-hard (see Theorem~\ref{cor:dichotomy_li} and Theorem~\ref{thm:emb-dichotomy}). Even if hardness of $\#\mathsf{E}$ is intuitive, it is not obvious how to prove it, at least if one tries to reduce the computation of one summand to $\#\mathsf{E}$. Instead we will show that our framework allows less cumbersome reductions, at least for what we will call the \emph{congruent} cases. We start by formally defining a linear combination of graphically restricted homomorphisms.
\begin{definition}
\label{def:lincombs_graphic}
Let $\mathcal{A}$ be a set of computable functions $a:~\mathcal{G} \times \Tau~\rightarrow~\Q_{\geq 0}$ with finite support. 
We define the parameterized counting problem $\#\lc(\mathcal{A})$ as follows: Given $a \in \mathcal{A}$ and $G \in \mathcal{G}$, compute the \emph{linear combination}:
\[ \sum_{(H,\tau) \in \mathsf{supp}(a)} a(H,\tau) \cdot \#\tauhoms(H, G) \,, ~\text{parameterized by } \left(\#\mathsf{supp}(a)+\max_{(H,\tau) \in \mathrm{supp}(a)} \#V(H)\right) \,.\] 
Given a function $a \in \mathcal{A}$ we denote \[\Taumins(a):= \bigcup_{(H,\tau)\in \mathsf{supp}(a)}\taumins(H)~~~ \text{ and }~~~ \Taumins(\mathcal{A}) := \bigcup_{a\in \mathcal{A}} \Taumins(a)\] as the set of all $\tau$-minors of $a$ and $\mathcal{A}$, respectively. Furthermore, we say that $a$ is \emph{congruent} if for every $(H_1,\ast)$ and $(H_2,\ast) \in \mathsf{supp}(a)$ it holds that $\mathsf{Parity}(\#V(H_1)) = \mathsf{Parity}(\#V(H_2))$. We say that $\mathcal{A}$ is congruent if all its elements are congruent. 
\end{definition}

If we let $\tau_\mathsf{is}$ be the graphical restriction that maps a graph $H$ to the independent set with vertices $V(H)$ and set $\mathcal{A}=\bs{a_k ~|~k\in \N}$ such that $a_k(P_k,\tau_\mathsf{is})=1$, $a_k(K_k,\tauli)=1$, $a_k(C_k,\tau_\mathsf{clique}) = 1$ and $0$ otherwise then $\#\lc(\mathcal{A})$ is equivalent to $\#\mathsf{E}$.\\

For congruent $\mathcal{A}$ we can derive a complete complexity classification. 
\begin{theorem}
\label{thm:graphic_lincombs_dichotomy}
The problem $\#\lc(\mathcal{A})$ is fixed-parameter tractable if $\Taumins(\mathcal{A})$ has bounded treewidth. Otherwise, if $\mathcal{A}$ is additionally congruent, it is $\#\W$-hard.
\end{theorem}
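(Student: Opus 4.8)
The plan is to mimic the proof of Theorem~\ref{thm:graphic_dichotomy}: I would rewrite the entire linear combination in the homomorphism basis and then appeal to Theorem~\ref{thm:lincombs_homs}. First, for each $(H,\tau)\in\mathsf{supp}(a)$, Lemma~\ref{lem:main_graphic} gives $\#\tauhoms(H,G)=\sum_{\rho\geq\emptyset}\mu(\emptyset,\rho)\cdot\#\homs(H/\rho,G)$ over $L(M(\tau(H)))$. Since $G$ has no selfloops, every non-vanishing term satisfies $H/\rho\in\taumins(H)$, so condensing isomorphic minors yields coefficients $c_H^\tau[F]:=\sum_{\rho\geq\emptyset,\,H/\rho\cong F}\mu(\emptyset,\rho)$ and $\#\tauhoms(H,G)=\sum_{F\in\taumins(H)}c_H^\tau[F]\cdot\#\homs(F,G)$. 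Plugging this into the target sum and collecting terms, I would define for each $a\in\mathcal{A}$ the function $a'\colon\graphs\to\Q$ by
\[ a'(F):=\sum_{(H,\tau)\in\mathsf{supp}(a)} a(H,\tau)\cdot c_H^\tau[F] \,, \]
so that $\sum_{(H,\tau)}a(H,\tau)\cdot\#\tauhoms(H,G)=\sum_F a'(F)\cdot\#\homs(F,G)$. Setting $\mathcal{A}':=\bs{a'\mid a\in\mathcal{A}}$, the problems $\#\lc(\mathcal{A})$ and $\#\homs(\mathcal{A}')$ are equivalent under parameterized Turing reductions: the map $a\mapsto a'$ is computable (one enumerates the flats of each matroid lattice), and since $H\in\taumins(H)$ the maximum vertex count is preserved, while after merging duplicate effective pairs $\#\mathsf{supp}(a')$ is bounded by a function of $\max_{(H,\tau)}\#V(H)$.

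For the FPT direction the support of every $a'$ is contained in $\Taumins(a)$, hence $\mathsf{supp}(\mathcal{A}')\subseteq\Taumins(\mathcal{A})$. If the latter has bounded treewidth then so does $\mathcal{A}'$, and Theorem~\ref{thm:lincombs_homs} yields an FPT algorithm for $\#\homs(\mathcal{A}')$ and therefore for $\#\lc(\mathcal{A})$. Note that this direction needs no information about whether the coefficients $a'(F)$ vanish.

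The hardness direction is where congruence enters, and controlling cancellation is the main obstacle. By Theorem~\ref{thm:lincombs_homs} it suffices to show that $\mathsf{supp}(\mathcal{A}')$ has unbounded treewidth, for which I would prove that $\mathsf{supp}(a')=\Taumins(a)$ whenever $a$ is congruent, i.e.\ that no coefficient $a'(F)$ cancels to zero. Reusing the Rota argument from the proof of Theorem~\ref{thm:graphic_dichotomy}, each inner coefficient $c_H^\tau[F]$ is nonzero with $\mathsf{sgn}(c_H^\tau[F])=(-1)^{\mathsf{rk}(\rho)}$ for any $\rho$ with $H/\rho\cong F$, and Equation~(\ref{eqn:graphic_rank}) gives $\mathsf{rk}(\rho)=|V(H)|-c(F)$. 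The danger is that for a fixed $F$ the contributions $a(H,\tau)\cdot c_H^\tau[F]$ arising from different pattern pairs could have opposite signs and annihilate each other. However, since $a(H,\tau)>0$ on the support, each such contribution has sign exactly $(-1)^{|V(H)|-c(F)}$; as $c(F)$ is fixed and congruence forces all $|V(H)|$ to share one parity, every nonzero contribution to $a'(F)$ carries the same sign and the sum cannot vanish. Hence $a'(F)\neq 0$ for every $F\in\Taumins(a)$, giving $\mathsf{supp}(\mathcal{A}')=\Taumins(\mathcal{A})$ and thus unbounded treewidth, whence $\#\W$-hardness of $\#\homs(\mathcal{A}')$ and of $\#\lc(\mathcal{A})$ follows.
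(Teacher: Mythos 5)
Your proposal is correct and takes essentially the same approach as the paper: the hardness direction is identical (expand via Lemma~\ref{lem:main_graphic}, collect coefficients of isomorphic $\tau$-minors, and use Rota's theorem together with congruence to show every contribution to a coefficient has sign $(-1)^{|V(H)|-c(F)}$, so no cancellation occurs, and conclude via the hardness half of Theorem~\ref{thm:lincombs_homs}). The only cosmetic difference is the FPT direction, where the paper invokes the algorithm of Theorem~\ref{thm:graphic_dichotomy} on each summand separately instead of converting the whole sum into the homomorphism basis first; both rest on the same machinery and your support-size bound makes your variant equally valid.
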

\begin{proof}
The FPT algorithm for the positive result is straight-forward: As the treewidth of $\Taumins(\mathcal{A})$ is bounded, we can on input $a \in \mathcal{A}$ and $G\in \graphs$ compute $\#\tauhoms(H,G)$ for every $(H,\tau)\in\mathsf{supp}(a)$ in time $g(\#V(H)) \cdot n^{O(1)}$ for a computable function $g$ by Theorem~\ref{thm:graphic_dichotomy}. Consequently, computing the sum takes time less than \[\#\mathsf{supp}(a) \cdot g\left(\max_{(H,\tau) \in \mathrm{supp}(a)} \#V(H)\right) \cdot n^{O(1)} \]
yielding fixed-parameter tractability.\\
Now assume that $\Taumins(\mathcal{A})$ has unbounded treewidth and that $\mathcal{A}$ is congruent and let $a\in \mathcal{A}$ and $G\in \graphs$. Lemma~\ref{lem:main_graphic} yields that
\begin{align*}
\label{eqn:lincombs_graphic}
&\sum_{(H,\tau) \in \mathsf{supp}(a)} a(H,\tau) \cdot \#\tauhoms(H, G)\\ 
&= \sum_{(H,\tau) \in \mathsf{supp}(a)} a(H,\tau) \cdot \sum_{\rho \geq \emptyset} \mu(\emptyset,\rho)\cdot \#\homs(H/\rho,G)\\
&= \sum_{(H,\tau) \in \mathsf{supp}(a)} ~\sum_{\rho \geq \emptyset} a(H,\tau) \cdot \mu(\emptyset,\rho)\cdot \#\homs(H/\rho,G)
\end{align*}
Now let $\mathcal{H} \in \Taumins(\mathcal{A})$. It holds that the coefficient $d(\mathcal{H})$ of $\#\homs(\mathcal{H},G)$ in the above equation satisfies:
\begin{equation*}
d(\mathcal{H}) = \sum_{(H,\tau) \in \mathsf{supp}(a)} ~\sum_{\substack{\rho \geq \emptyset \\  \mathcal{H} \cong H/\rho}} a(H,\tau) \cdot \mu(\emptyset,\rho)
\end{equation*}
If we fix some $(H,\tau) \in \mathsf{supp}(a)$ and $\rho \in L(M(\tau(H)))$ such that $\mathcal{H} \cong H/\rho$ we have that 
\begin{align*}
&\mathsf{sgn}\left(a(H,\tau) \cdot \mu(\emptyset,\rho)  \right) = \mathsf{sgn}\left( \mu(\emptyset,\rho) \right)
= (-1)^{\mathsf{rk}(\rho)} = (-1)^{\#V(H)-c(H/\rho)}= (-1)^{\#V(H)-c(\mathcal{H})} \,,
\end{align*}
where the first equality follows from the fact that $a(H,\tau)>0$ and the second from the corollary of Rota's Theorem (Theorem~\ref{thm:rota}). As $a$ is congruent, the parities of all $H$ such that $(H,\ast)\in \mathsf{supp}(a)$ are equal and consequently we have that $\mathsf{sgn}(d(\mathcal{H})) = (-1)^{\#V(H)-c(\mathcal{H})}$, hence $d(\mathcal{H}) \neq 0$. Therefore, if we consider $\#\lc(\mathcal{A})$ as the problem of computing linear combinations of homomorphisms (as we also did in the proof of Theorem~\ref{thm:graphic_dichotomy}), we infer that every $\tau$-minor will be inluded in the combination. As the treewidth of those is not bounded we conclude by Theorem~\ref{thm:hom-dichotomy} that $\#\lc(\mathcal{A})$ is $\#\W$-hard.
\end{proof}
On the other hand, Theorem~\ref{thm:graphic_lincombs_dichotomy} is \emph{not} true if we omit the constraint that $\mathcal{A}$ is congruent: Consider the problem $\#\homs(\mathcal{P})$ where $\mathcal{P}$ is the class of all paths. It is fixed-parameter tractable as $\mathcal{P}$ has bounded treewidth (see Theorem~\ref{thm:hom-dichotomy}). Using Lov{\'a}sz identity \cite{lovasz} we have that for any $P_k \in \mathcal{P}$ and $G \in \graphs$ it holds that
\[ \#\homs(P_k,G) = \sum_{\rho \geq \emptyset}\#\embs(P_k/\rho,G) \,.\]
This is a linear combination of graphical homomorphisms (embeddings) including e.g. the term $\#\embs(P_k/\emptyset,G)=\#\embs(P_k,G)$ with coefficient $1$. But $\tau_\mathsf{clique}\text{-}\mathcal{M}(\mathcal{P})$ has unbounded treewidth\footnote{$\tau_\mathsf{clique}\text{-}\mathcal{M}(P_k)$ is precisely the set of ``spasms'' of $P_k$ (see \cite{hombasis2017}). The claim follows by Fact~3.4 in \cite{hombasis2017}} and consequently the treewidth of all $\tau$-minors of this linear combination is unbounded, too. This shows that there exist non-congruent $\mathcal{A}$ such that the treewidth of $\Taumins(\mathcal{A})$ is not bounded but $\#\lc(\mathcal{A})$ is fixed-parameter tractable.

Now it is easy to see that $\#\mathsf{E}$ is $\#\W$-hard. Further problems whose hardness follows from Theorem~\ref{thm:graphic_lincombs_dichotomy} are for example:

\begin{corollary}
\label{cor:hard_lincombs}
The following problems are $\#\W$-hard: Given a graph $G \in \graphs$ and a parameter $k$,
\begin{enumerate}
\item[(1)] count all odd (or even) subgraphs of size bounded by $k$ of $G$. 
\item[(2)] count all subgraphs of size $k$ of $G$ (follows also from \cite{hombasis2017}).
\item[(3)] compute $\sum_{i =1}^k \#\lihoms(W_i,G)$, i.e., the sum of all locally injective homomorphisms from windmills of size bounded by $k$ to $G$.
\item[(4)] compute $\sum_{i=1}^k \#\lihoms(K_{i,i},G) + \#\embs(K_{i,i},G)$, where $K_{i,i}$ is the biclique of size $i$, that is, the complete bipartite graph with $i$ vertices on each side.
\end{enumerate}
\end{corollary}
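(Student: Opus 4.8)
The plan is to realize each of the four problems as an instance of $\#\lc(\mathcal{A})$ for a suitably chosen \emph{congruent} family $\mathcal{A}$ and then to invoke Theorem~\ref{thm:graphic_lincombs_dichotomy}. Since only the hardness side of that theorem is relevant here, for each problem I need to exhibit three things: (a)~a presentation as a linear combination $\sum_{(H,\tau)} a(H,\tau)\cdot\#\tauhoms(H,G)$ with positive rational coefficients; (b)~that the resulting $\mathcal{A}$ is congruent, i.e.\ that all patterns in the support of each $a$ share the same vertex-parity; and (c)~that $\Taumins(\mathcal{A})$ has unbounded treewidth. Given these, Theorem~\ref{thm:graphic_lincombs_dichotomy} yields $\#\W$-hardness immediately.

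For the two subgraph-counting problems (1) and (2) I would use the graphical restriction $\tau_{\mathsf{clique}}$, for which $\homs_{\tau_{\mathsf{clique}}}(H,G)=\embs(H,G)$, together with the identity $\#\mathsf{Aut}(H)\cdot\#\mathsf{Sub}(H,G)=\#\embs(H,G)$. Assigning the coefficient $a(H,\tau_{\mathsf{clique}})=1/\#\mathsf{Aut}(H)>0$ to a pattern $H$ therefore makes $\#\lc(\mathcal{A})$ count exactly the subgraphs of $G$ isomorphic to the patterns in the support. For (2) the support of $a_k$ is the set of all graphs on exactly $k$ vertices, which is trivially congruent; for (1) the support is the set of all graphs with an odd (resp.\ even) number of vertices and at most $k$ vertices, which is congruent by the choice of parity. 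In both cases the support contains graphs of unbounded matching number (large matchings, adjusted by a fixed small gadget to meet the prescribed parity and size), and by the correspondence underlying Theorem~\ref{thm:emb-dichotomy} (cf.\ Fact~3.4 in \cite{hombasis2017}) unbounded matching number of a class forces its $\tau_{\mathsf{clique}}$-minors, i.e.\ its spasms, to have unbounded treewidth. Hence $\Taumins(\mathcal{A})$ has unbounded treewidth and Theorem~\ref{thm:graphic_lincombs_dichotomy} applies.

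For (3) and (4) the congruence is the pleasant structural observation that the given families already have constant vertex-parity: the windmill $W_i$ has $2i+1$ vertices (odd), and the biclique $K_{i,i}$ has $2i$ vertices (even). For (3) I set the support of $a_k$ to $\{(W_i,\tauli):i\in[k]\}$ with unit coefficients; congruence holds, and by Corollary~\ref{cor:lihoms_windmills_hard} the class $\taulimins(\mathcal{W})\subseteq\Taumins(\mathcal{A})$ already has unbounded treewidth, since every $k$-edge graph arises as a minor of some $\tauli$-minor of $W_k$. For (4) I set the support of $a_k$ to $\{(K_{i,i},\tauli),(K_{i,i},\tau_{\mathsf{clique}}):i\in[k]\}$ with unit coefficients; all these patterns have $2i$ vertices, so the family is again congruent. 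Here the treewidth witness is even more direct: since $H/\emptyset\cong H$ and $H\in\graphs$ has no selfloops, every pattern is itself one of its own $\tau$-minors, and $K_{i,i}$ has treewidth $i$, so $\Taumins(\mathcal{A})$ has unbounded treewidth. In each case Theorem~\ref{thm:graphic_lincombs_dichotomy} then closes the argument.

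The coefficient bookkeeping and the parity checks are routine; the only genuine content is verifying unbounded treewidth of $\Taumins(\mathcal{A})$, and this is where the main obstacle lies. For (4) it is immediate because the bicliques themselves have unbounded treewidth, but for (3) the windmills have treewidth~$2$, so one cannot use the patterns directly and must instead invoke the contraction construction of Corollary~\ref{cor:lihoms_windmills_hard}; this is the step to be careful about. For (1) and (2) the subtlety is entirely contained in the matching-number-to-treewidth correspondence for spasms, which I would cite rather than reprove. Finally, the congruence hypothesis is essential throughout: without fixing the vertex-parity the argument of Theorem~\ref{thm:graphic_lincombs_dichotomy} could allow the coefficients $d(\mathcal{H})$ to cancel, exactly as the path example following that theorem illustrates.
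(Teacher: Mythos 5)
Your proposal is correct and takes essentially the same approach as the paper: each of the four problems is cast as a congruent instance of $\#\lc(\mathcal{A})$ with positive coefficients (using $\#\mathsf{Aut}(H)^{-1}$ for the subgraph problems), and Theorem~\ref{thm:graphic_lincombs_dichotomy} is invoked. The only immaterial difference is in parts (1) and (2), where the paper witnesses unbounded treewidth directly via the cliques contained in the support, whereas you detour through large matchings and the spasm/matching-number correspondence (Fact~3.4 in \cite{hombasis2017}); both arguments are valid.
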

\begin{proof}[Proof of Corollary~\ref{cor:hard_lincombs}]
Each statement follows by Theorem~\ref{thm:graphic_lincombs_dichotomy}:
\begin{enumerate}
\item[(1)] Let $\mathsf{Odd}_k \subseteq \graphs$ be the set of all odd graphs of size bounded by $k$. Then it holds that
\begin{equation*}
\sum_{H \in \mathsf{Odd}_k} \#\mathsf{Sub}(H,G) = \sum_{H \in \mathsf{Odd}_k} \#\mathsf{Aut}(H)^{-1} \cdot \#\embs(H,G)\,.
\end{equation*}
As $\mathsf{Aut}(H)^{-1} > 0$, the above equation clearly is a congruent instance of the linear combination problem. Furthermore $\mathsf{Odd}_k$ contains cliques of size $\geq k-1$ implying that the treewidth of the instance is not bounded. The same argument holds for the case of counting all even subgraphs.
\item[(2)] Follows also along the same lines as (1) with the additional argument that we only count graphs of size exactly $k$, implying that the parity is the same for all terms.
\item[(3)] Congruence follows by the observation that $W_i$ has odd size for all $i \geq 1$. Unbounded treewidth follows with the same argument as in Corollary~\ref{cor:lihoms_windmills_hard}.
\item[(4)] Congruence follows by the fact that $K_{i,i}$ has even size for all $i \geq 1$. Unbounded treewidth follows by observing that the class of all bicliques itself already has unbounded treewidth.
\end{enumerate}
\end{proof}

\section{Conclusion and further work}
We have shown that various parameterized counting problems can be expressed as a linear combination of homomorphisms over the lattice of graphic matroids, implying immediate complexity classifications along with fixed-parameter tractable algorithms for the positive cases. This results can be obtained without using often cumbersome tools like ``gadgeting'' or interpolation and relies only on the knowledge of the problem of counting homomorphisms and the comprehension of the cancellation behaviour when transforming a problem into this ``homomorphism basis''. The latter, in turn, was nothing more than a question about the sign of the Möbius function, which was answered by Rota's Theorem.\\
This framework, however, still has limits: It seems that, e.g., neither induced subgraphs nor edge-injective homomorphisms \cite{edge-injective17} are graphically restricted. Indeed, both can be expressed as a sum of homomorphisms over (non-geometric) lattices but the problem is that there are isomorphic terms with different signs in both cases. This suggests that a better understanding of the Möbius function over those lattices could yield even more general complexity classifications of parameterized counting problems.

\section*{Acknowledgements}
The author is very grateful to Holger Dell and Radu Curticapean for fruitful discussions. Furthermore the author thanks Cornelius Brand for saying ``Tutte Polynomial'' every once in a while.

\bibliography{geo-homs-bib}{}

\begin{thebibliography}{10}

\bibitem{blass1997mobius}
Andreas Blass and Bruce~E Sagan.
\newblock M{\"o}bius functions of lattices.
\newblock {\em Advances in mathematics}, 127(1):94--123, 1997.

\bibitem{best}
Graham~R Brightwell and Peter Winkler.
\newblock Note on counting eulerian circuits.

\bibitem{bulatov}
Andrei~A Bulatov.
\newblock A dichotomy theorem for nonuniform {C}{S}{P}s.
\newblock {\em arXiv preprint arXiv:1703.03021}, 2017.

\bibitem{fibonaccigates}
Jin-Yi Cai, Pinyan Lu, and Mingji Xia.
\newblock Holographic algorithms by fibonacci gates and holographic reductions
  for hardness.
\newblock In {\em Proceedings of the 49th Annual Symposium on Foundations of
  Computer Science, FOCS}, pages 644--653. IEEE, 2008.

\bibitem{holantdicho}
Jin-Yi Cai, Pinyan Lu, and Mingji Xia.
\newblock Computational complexity of holant problems.
\newblock {\em SIAM Journal on Computing}, 40(4):1101--1132, 2011.

\bibitem{kmatchings}
Radu Curticapean.
\newblock Counting matchings of size k is {\#}{W}[1]-hard.
\newblock In {\em Proceedings of the 40th International Colloquium on Automata,
  Languages, and Programming, ICALP, Part I}, pages 352--363, Berlin,
  Heidelberg, 2013. Springer Berlin Heidelberg.

\bibitem{hombasis2017}
Radu Curticapean, Holger Dell, and D\'aniel Marx.
\newblock Homomorphisms are a good basis for counting small subgraphs.
\newblock In {\em Proceedings of the 49th ACM Symposium on Theory of Computing,
  STOC}, pages 210--223, 2017.

\bibitem{edge-injective17}
Radu Curticapean, Holger Dell, and Marc Roth.
\newblock Counting edge-injective homomorphisms and matchings on restricted
  graph classes.
\newblock In {\em Proceedings of the 34th Symposium on Theoretical Aspects of
  Computer Science, STACS}, pages 25:1--25:15, 2017.

\bibitem{embsdicho}
Radu Curticapean and D{\'{a}}niel Marx.
\newblock Complexity of counting subgraphs: Only the boundedness of the
  vertex-cover number counts.
\newblock In {\em Proceedings of the 55th Annual Symposium on Foundations of
  Computer Science, FOCS}, pages 130--139, 2014.

\bibitem{cygan2015parameterized}
Marek Cygan, Fedor~V Fomin, {\L}ukasz Kowalik, Daniel Lokshtanov, D{\'a}niel
  Marx, Marcin Pilipczuk, Micha{\l} Pilipczuk, and Saket Saurabh.
\newblock {\em Parameterized algorithms}, volume~3.
\newblock Springer.

\bibitem{homsdicho1}
V{\'\i}ctor Dalmau and Peter Jonsson.
\newblock The complexity of counting homomorphisms seen from the other side.
\newblock {\em Theoretical Computer Science}, 329(1):315--323, 2004.

\bibitem{appr2}
Martin Dyer, Leslie~Ann Goldberg, and Mark Jerrum.
\newblock An approximation trichotomy for {B}oolean {\#}{CSP}.
\newblock {\em Journal of Computer and System Sciences}, 76(3):267--277, 2010.

\bibitem{federvardi}
Tom{\'a}s Feder and Moshe~Y Vardi.
\newblock The computational structure of monotone monadic snp and constraint
  satisfaction: A study through datalog and group theory.
\newblock {\em SIAM Journal on Computing}, 28(1):57--104, 1998.

\bibitem{fiala}
Ji{\v{r}}{\'\i} Fiala and Jan Kratochv{\'\i}l.
\newblock Locally constrained graph homomorphisms—structure, complexity, and
  applications.
\newblock {\em Computer Science Review}, 2(2):97--111, 2008.

\bibitem{flumgrohe_counting}
J{\"{o}}rg Flum and Martin Grohe.
\newblock The parameterized complexity of counting problems.
\newblock {\em {SIAM} J. Comput.}, 33(4):892--922, 2004.

\bibitem{goldberg_jerrum_trees}
Leslie~Ann Goldberg and Mark Jerrum.
\newblock Counting unlabelled subtrees of a tree is\#{P}-complete.
\newblock 1999.

\bibitem{goldreich}
Oded Goldreich.
\newblock {\em Computational Complexity: A Conceptual Perspective}.
\newblock Cambridge University Press, 2008.

\bibitem{isoplanar}
John~E Hopcroft and Robert~Endre Tarjan.
\newblock Isomorphism of planar graphs.
\newblock In {\em Complexity of computer computations}, pages 131--152.
  Springer, 1972.

\bibitem{ksat}
Russell Impagliazzo and Ramamohan Paturi.
\newblock Complexity of k-{S}{A}{T}.
\newblock In {\em Proceedings of the 14th Annual IEEE Conference on
  Computational Complexity, 1999}, pages 237--240. IEEE, 1999.

\bibitem{jerrum_trees}
Mark Jerrum.
\newblock Counting trees in a graph is {\#}{P}-complete.
\newblock {\em Inf. Process. Lett.}, 51(3):111--116, 1994.

\bibitem{appr1}
Mark Jerrum and Alistair Sinclair.
\newblock Approximating the permanent.
\newblock {\em SIAM Journal on computing}, 18(6):1149--1178, 1989.

\bibitem{fkt2}
Pieter~W. Kasteleyn.
\newblock Graph theory and crystal physics.
\newblock In {\em Graph Theory and Theoretical Physics}, pages 43--110.
  Academic Press, 1967.

\bibitem{ladner}
Richard~E Ladner.
\newblock On the structure of polynomial time reducibility.
\newblock {\em Journal of the ACM (JACM)}, 22(1):155--171, 1975.

\bibitem{biclique}
Bingkai Lin.
\newblock The parameterized complexity of k-biclique.
\newblock In {\em Proceedings of the 26th Annual ACM-SIAM Symposium on Discrete
  Algorithms, SODA}, pages 605--615. Society for Industrial and Applied
  Mathematics, 2015.

\bibitem{lovasz}
L{\'a}szl{\'o} Lov{\'a}sz.
\newblock Operations with structures.
\newblock {\em Acta Mathematica Hungarica}, 18(3-4):321--328, 1967.

\bibitem{isoauto}
Rudolf Mathon.
\newblock A note on the graph isomorphism counting problem.
\newblock {\em Information Processing Letters}, 8(3):131--136, 1979.

\bibitem{nesetril}
Jaroslav Ne{\v{s}}et{\v{r}}il.
\newblock Homomorphisms of derivative graphs.
\newblock {\em Discrete Mathematics}, 1(3):257--268, 1971.

\bibitem{oxley}
James~G. Oxley.
\newblock {\em Matroid Theory}.
\newblock Oxford University Press, 1992.

\bibitem{feder}
Arash Rafiey, Jeff Kinne, and Feder Tom{\'{a}}s.
\newblock Dichotomy for digraph homomorphism problems.
\newblock {\em arXiv preprint arXiv:1701.02409}, 2017.

\bibitem{rota1964foundations}
Gian-Carlo Rota.
\newblock On the foundations of combinatorial theory {I}. {T}heory of
  m{\"o}bius functions.
\newblock {\em Probability theory and related fields}, 2(4):340--368, 1964.

\bibitem{schaefer}
Thomas~J Schaefer.
\newblock The complexity of satisfiability problems.
\newblock In {\em Proceedings of the tenth annual ACM Symposium on Theory of
  Computing, STOC}, pages 216--226. ACM, 1978.

\bibitem{stanley2011enumerative}
Richard~P Stanley.
\newblock Enumerative combinatorics: Volume 1.
\newblock 2011.

\bibitem{fkt1}
Harold N.~V. Temperley and Michael~E. Fisher.
\newblock Dimer problem in statistical mechanics - an exact result.
\newblock {\em Philosophical Magazine}, 6(68):1478--6435, 1961.

\bibitem{toda}
Seinosuke Toda.
\newblock {PP} is as hard as the polynomial-time hierarchy.
\newblock {\em SIAM Journal of Computing}, 20(5):865--877, 1991.

\bibitem{interpolation}
Salil~P Vadhan.
\newblock The complexity of counting in sparse, regular, and planar graphs.
\newblock {\em SIAM Journal on Computing}, 31(2):398--427, 2001.

\bibitem{Valiant1979a}
Leslie~G. Valiant.
\newblock The complexity of computing the permanent.
\newblock {\em Theoretical Computer Science}, 8(2):189--201, 1979.

\bibitem{accidentalalgos}
Leslie~G Valiant.
\newblock Accidental algorthims.
\newblock In {\em Proceedings of the 47th Annual Symposium on Foundations of
  Computer Science, FOCS}, pages 509--517. IEEE, 2006.

\bibitem{holographicalgos}
Leslie~G Valiant.
\newblock Holographic algorithms.
\newblock {\em SIAM Journal on Computing}, 37(5):1565--1594, 2008.

\bibitem{welsh}
Dominic~JA Welsh.
\newblock {\em Matroid theory}.
\newblock Courier Corporation, 2010.

\bibitem{rest1}
Mingji Xia, Peng Zhang, and Wenbo Zhao.
\newblock Computational complexity of counting problems on 3-regular planar
  graphs.
\newblock {\em Theoretical Computer Science}, 384(1):111--125, 2007.

\bibitem{zhuk}
Dmitriy Zhuk.
\newblock The proof of {CSP} dichotomy conjecture.
\newblock {\em arXiv preprint arXiv:1704.01914}, 2017.

\end{thebibliography}
\bibliographystyle{plain}

\end{document}